\documentclass[a4paper,11pt,dvipsnames]{elsarticle}
\usepackage[T1]{fontenc}
\usepackage[utf8]{inputenc}
\usepackage[english]{babel}
\usepackage{amsmath,amsthm,enumerate}
\usepackage{amssymb}
\usepackage{amscd}
\usepackage{vmargin}
\usepackage{url}
\usepackage{xcolor}
\usepackage{stmaryrd}
\usepackage{tikz,pgf}
\usepackage{xspace}
\usepackage{subfig}
\usepackage{cancel}
\usepackage{graphicx}
\usepackage[textsize=footnotesize]{todonotes}
\usetikzlibrary{positioning}
\usepackage{subfig}
\usepackage{adjustbox}
\usetikzlibrary{shapes.geometric}
\usepackage{soul}
\usepackage{esvect}
\usetikzlibrary{arrows}
\usepackage{MnSymbol}
\usetikzlibrary{decorations.pathreplacing,calligraphy}
\usepgflibrary{patterns.meta} 
\usepgflibrary[patterns.meta] 
\usetikzlibrary{patterns.meta} 
\usetikzlibrary[patterns.meta] 
\definecolor{myBlue}{HTML}{2D2F92}
\usetikzlibrary{decorations.markings}
\usepackage{array}
\usepackage{makecell}
\usepackage{tabularx}
\usepackage{colortbl}

\usetikzlibrary{decorations.markings}
\usetikzlibrary{decorations.pathmorphing}
\tikzset{snake it/.style={decorate, decoration={snake,amplitude=.6mm,segment length=2mm,post length=0mm}}}

\newcolumntype{Y}{>{\centering\arraybackslash}X}

\pgfdeclarepattern{name=hatch,
  parameters={\hatchsize,\hatchangle,\hatchlinewidth},
  bottom left={\pgfpoint{-.1pt}{-.1pt}},
  top right={\pgfpoint{\hatchsize+.1pt}{\hatchsize+.1pt}},
  tile size={\pgfpoint{\hatchsize}{\hatchsize}},
  tile transformation={\pgftransformrotate{\hatchangle}},
  code={
    \pgfsetlinewidth{\hatchlinewidth}
    \pgfpathmoveto{\pgfpoint{-.1pt}{-.1pt}}
    \pgfpathlineto{\pgfpoint{\hatchsize+.1pt}{\hatchsize+.1pt}}
    \pgfpathmoveto{\pgfpoint{-.1pt}{\hatchsize+.1pt}}
    \pgfpathlineto{\pgfpoint{\hatchsize+.1pt}{-.1pt}}
    \pgfusepath{stroke}
  }
}
\tikzset{
  hatch size/.store in=\hatchsize,
  hatch angle/.store in=\hatchangle,
  hatch line width/.store in=\hatchlinewidth,
  hatch size=5pt,
  hatch angle=0pt,
  hatch line width=.5pt,
}

\makeatletter
\def\ps@pprintTitle{%
 \let\@oddhead\@empty
 \let\@evenhead\@empty
 \def\@oddfoot{}%
 \let\@evenfoot\@oddfoot}
\makeatother

\usetikzlibrary{decorations.markings}
\tikzset{->-/.style={decoration={
  markings,
  mark=at position .75 with {\arrow{>}}},postaction={decorate}}}

\newif\ifcomments\commentstrue
\ifcomments
  \setlength{\marginparwidth}{2cm}
  \usepackage{todonotes}
  \newcommand{\commenttext}[1]{ \begin{center} {\fbox{\begin{minipage}[h]{0.9 \linewidth}   {\textsf{ #1}} \end{minipage} }} \end{center}}
  \newcommand{\Sam}[1]{{\color{red}\commenttext{Sam : #1}}}
  \newcommand{\Rom}[1]{{\color{blue}\commenttext{Rom : #1}}}
  
  \newcommand{\sam}[1]{\todo[size=\tiny,color=red!20!white]{\textbf{Sam:} #1}}
  \newcommand{\rom}[1]{\todo[size=\tiny,color=blue!20!white]{\textbf{Rom:} #1}}
  
\else
  \newcommand{\Sam}[1]{}
  \newcommand{\sam}[1]{}
  \newcommand{\Rom}[1]{}
  \newcommand{\rom}[1]{}
\fi
  
\begin{document}
\begin{frontmatter}

\title{Making Graphs Irregular through Irregularising Walks}

\author[nice]{Julien~Bensmail}
\author[bordeaux]{Romain~Bourneuf}
\author[grenoble]{Paul~Colinot}
\author[lyon]{Samuel~Humeau}
\author[orleans]{Timothée~Martinod}

\address[nice]{Universit\'e C\^ote d'Azur, CNRS, Inria, I3S, France}
\address[bordeaux]{Univ. Bordeaux, CNRS,  Bordeaux INP, LaBRI, UMR 5800, F-33400, Talence, France}
\address[grenoble]{Univ. Grenoble-Aples, CNRS, Grenoble INP, G-SCOP, UMR 5272, France}
\address[lyon]{ENS de Lyon, CNRS, Université Claude Bernard Lyon 1, LIP, UMR 5668, 69342, Lyon cedex 07, France}
\address[orleans]{Universit\'e d’Orl\'eans, INSA CVL, LIFO, UR 4022, Orl\'eans, France}

\begin{abstract}
The 1-2-3 Conjecture, introduced by Karo\'nski, {\L}uczak, and Thomason in 2004,
was recently solved by Keusch. This implies that, for any connected graph $G$ different from $K_2$,
we can turn $G$ into a locally irregular multigraph $M(G)$, \textit{i.e.}, in which no two adjacent vertices have the same degree,
by replacing some of its edges with at most three parallel edges.
In this work, we introduce and study a restriction of this problem under the additional constraint that edges added to $G$ to reach $M(G)$ must form a walk (\textit{i.e.}, a path with possibly repeated edges and vertices) of $G$.
We investigate the general consequences of having this additional constraint,
and provide several results of different natures (structural, combinatorial, algorithmic) on the length of the shortest irregularising walks,
for general graphs and more restricted classes.
\end{abstract}

\begin{keyword} 
graph irregularity; 1-2-3 Conjecture; walk; path.
\end{keyword}
 
\end{frontmatter}

\newtheorem{theorem}{Theorem}[section]
\newtheorem{lemma}[theorem]{Lemma}
\newtheorem{conjecture}[theorem]{Conjecture}
\newtheorem{observation}[theorem]{Observation}
\newtheorem{claim}[theorem]{Claim}
\newtheorem*{remark}{Remark}
\newtheorem{corollary}[theorem]{Corollary}
\newtheorem{proposition}[theorem]{Proposition}
\newtheorem{question}[theorem]{Question}
\newtheorem{definition}[theorem]{Definition}

\newcommand{\qedclaim}{\hfill $\diamond$ \medskip}
\newenvironment{proofclaim}{\noindent{\em Proof of the claim.}}{\qedclaim}

\newcommand{\p}{\textsf{P}\xspace}
\newcommand{\np}{\textsf{NP}\xspace}
\newcommand{\chis}{\chi'_{\rm s}}
\newcommand{\mlw}{{\rm ML}^{\rm W}}
\newcommand{\mlp}{{\rm ML}^{\rm P}}
\newcommand{\mvw}{{\rm MV}^{\rm W}}
\newcommand{\mvp}{{\rm MV}^{\rm P}}
\newcommand{\mew}{{\rm ME}^{\rm W}}
\newcommand{\mep}{{\rm ME}^{\rm P}}


\section{Introduction}

This work relates to investigations on the so-called 1-2-3 Conjecture,
which we first recall.
Let $G$ be a graph\footnote{Throughout, the term \textit{graph} always refers to a simple graph.}.
For any $k \geq 1$, a \textit{$k$-labelling} $\ell: E(G) \rightarrow \{1,\dots,k\}$ of $G$ is an assignment of \textit{labels}
(from $\{1,\dots,k\}$) to the edges of $G$.
From $\ell$, one can, for every vertex $u$ of $G$, compute the \textit{sum} $\sigma_\ell(u)$ of labels assigned to the edges incident to $u$,
that is, $\sigma_\ell(u)=\sum_{v \in N(u)} \ell(uv)$.
In case $\sigma_\ell$ is a proper vertex-colouring of $G$, that is,
if we have $\sigma_\ell(u) \neq \sigma_\ell(v)$ for every two adjacent vertices $u$ and $v$ of $G$,
we say that $\ell$ is \textit{proper}.
Now, it might be that $G$ does not admit any proper labellings at all.
However, it is known (see \textit{e.g.}~\cite{KLT04}) that the only connected graph with no proper labellings is $K_2$ (the complete graph of order~$2$).
We thus say $G$ is \textit{nice} whenever it has no connected components isomorphic to $K_2$,
and we define $\chis(G)$ as the smallest $k \geq 1$ such that $G$ admits proper $k$-labellings.

These notions were first introduced by Karo\'nski, {\L}uczak, and Thomason in~\cite{KLT04}.
The main focus at the time was to establish constant bounds on $\chis(G)$ that would hold whatever $G$ is.
The 1-2-3 Conjecture they raised in 2004 postulated that we should have $\chis(G) \leq 3$ for all nice graphs $G$,
which would be tight (see \textit{e.g.}~$K_3$ for an easy example, and odd-length cycles or complete graphs for less obvious ones). 
This conjecture generated a lot of appealing works on the topic,
eventually resulting in Keusch proving the 1-2-3 Conjecture in 2024~\cite{Keu24}.
For the sake of keeping this introduction legible, we will voluntarily not elaborate too much on the many other investigations that led to this milestone result,
and instead refer the interested reader to recent papers and surveys~\cite{Gal98,Sea12} on the topic for more information 
on the quest towards proving the 1-2-3 Conjecture.

\medskip

When it comes to discussing proper labellings and the 1-2-3 Conjecture,
there is a troublesome, recurring topic, 
which deals with the ``why'' of investigating these and focusing on the parameter $\chis$.
The paper that introduced the 1-2-3 Conjecture~\cite{KLT04} does not provide any particular motivations whatsoever.
Some can be retrieved from~\cite{CJLORS88}, in which the authors introduced and studied the notion of irregularity strength of graphs,
of which the parameter $\chis$ stands as a local variant.
Indeed, in the notions considered in~\cite{CJLORS88}, the goal is still to design labellings minimising some parameter,
but the main difference is that all vertices, not just adjacent ones, are required to have distinct sums.
There, however, the authors discuss motivations, which, for simplicity,
we translate to proper labellings and the parameter $\chis$.
Given a graph $G$ and a proper labelling $\ell$,
note that if, in $G$, we replace every edge $e \in E(G)$ with $\ell(e)$ parallel edges,
then we obtain a multigraph $M(G)$ with the same adjacencies as in $G$ (in the sense that any two vertices of $M(G)$ are adjacent if and only if they are in $G$)
and that is \textit{locally irregular}, \textit{i.e.}, with no two adjacent vertices having the same degree.
So, the fact that the 1-2-3 Conjecture is true implies that, whenever possible,
any graph can be transformed into a locally irregular multigraph by replacing edges with at most three parallel edges,
which, according to~\cite{CJLORS88}, might be considered preferable, 
as replacing edges with multiple edges can be regarded as more and more expensive as the number of replacing parallel edges increases.
This apart, we are not aware of any ``real'' applications supporting proper labellings and the 1-2-3 Conjecture. 

\begin{figure}[t]
\centering

\subfloat[A graph $G$]{
    \scalebox{0.65}{
    \begin{tikzpicture}[inner sep=0.7mm,thick]	

    \node[draw,circle,black,fill=white,line width=1pt](u1) at (1,0){\tiny $1$};
    \node[draw,circle,black,fill=white,line width=1pt](u2) at (2,0){\tiny $2$};
    \node[draw,circle,black,fill=white,line width=1pt](u3) at (4,0){\tiny $2$};
    \node[draw,circle,black,fill=white,line width=1pt](u4) at (6,0){\tiny $2$};
    \node[draw,circle,black,fill=white,line width=1pt](u5) at (8,0){\tiny $3$};
    \node[draw,circle,black,fill=white,line width=1pt](u6) at (8,1){\tiny $1$};
    \node[draw,circle,black,fill=white,line width=1pt](u7) at (10,0){\tiny $4$};
    \node[draw,circle,black,fill=white,line width=1pt](u8) at (10,1){\tiny $1$};
    \node[draw,circle,black,fill=white,line width=1pt](u9) at (11,0){\tiny $1$};
    \node[draw,circle,black,fill=white,line width=1pt](u10) at (10,-1){\tiny $4$};
    \node[draw,circle,black,fill=white,line width=1pt](u11) at (11,-1){\tiny $1$};
    \node[draw,circle,black,fill=white,line width=1pt](u12) at (10,-2){\tiny $1$};
    \node[draw,circle,black,fill=white,line width=1pt](u13) at (9,-1){\tiny $1$};
    
    \draw [-, line width=2pt,black] (u1) -- (u2);
    \draw [-, line width=2pt,black] (u2) -- (u3);
    \draw [-, line width=2pt,black] (u3) -- (u4);
    \draw [-, line width=2pt,black] (u4) -- (u5);
    \draw [-, line width=2pt,black] (u5) -- (u6);
    \draw [-, line width=2pt,black] (u5) -- (u7);
    \draw [-, line width=2pt,black] (u7) -- (u8);
    \draw [-, line width=2pt,black] (u7) -- (u9);
    \draw [-, line width=2pt,black] (u7) -- (u10);
    \draw [-, line width=2pt,black] (u10) -- (u11);
    \draw [-, line width=2pt,black] (u10) -- (u12);
    \draw [-, line width=2pt,black] (u10) -- (u13);
    
    \end{tikzpicture}
    }
}
\hspace{0pt}
\subfloat[A walk $W$ of $G$]{
    \scalebox{0.65}{
    \begin{tikzpicture}[inner sep=0.7mm,thick]	

    \node[draw,circle,black,fill=white,line width=1pt](u1) at (1,0){\tiny $1$};
    \node[draw,circle,black,fill=white,line width=1pt](u2) at (2,0){\tiny $2$};
    \node[draw,circle,black,fill=white,line width=1pt](u3) at (4,0){\tiny $2$};
    \node[draw,circle,black,fill=white,line width=1pt](u4) at (6,0){\tiny $2$};
    \node[draw,circle,black,fill=white,line width=1pt](u5) at (8,0){\tiny $3$};
    \node[draw,circle,black,fill=white,line width=1pt](u6) at (8,1){\tiny $1$};
    \node[draw,circle,black,fill=white,line width=1pt](u7) at (10,0){\tiny $4$};
    \node[draw,circle,black,fill=white,line width=1pt](u8) at (10,1){\tiny $1$};
    \node[draw,circle,black,fill=white,line width=1pt](u9) at (11,0){\tiny $1$};
    \node[draw,circle,black,fill=white,line width=1pt](u10) at (10,-1){\tiny $4$};
    \node[draw,circle,black,fill=white,line width=1pt](u11) at (11,-1){\tiny $1$};
    \node[draw,circle,black,fill=white,line width=1pt](u12) at (10,-2){\tiny $1$};
    \node[draw,circle,black,fill=white,line width=1pt](u13) at (9,-1){\tiny $1$};
    
    \draw [-, line width=2pt,black] (u1) -- (u2);
    \draw [-, line width=2pt,black] (u2) -- (u3);
    \draw [-, line width=2pt,black] (u3) -- (u4);
    \draw [-, line width=2pt,black] (u4) -- (u5);
    \draw [-, line width=2pt,black] (u5) -- (u6);
    \draw [-, line width=2pt,black] (u5) -- (u7);
    \draw [-, line width=2pt,black] (u7) -- (u8);
    \draw [-, line width=2pt,black] (u7) -- (u9);
    \draw [-, line width=2pt,black] (u7) -- (u10);
    \draw [-, line width=2pt,black] (u10) -- (u11);
    \draw [-, line width=2pt,black] (u10) -- (u12);
    \draw [-, line width=2pt,black] (u10) -- (u13);
    
    \draw [->-, line width=1pt,Red] (u2) to[bend left=20] node[midway,fill=white]{\fontsize{2.5}{4}\selectfont $1$} (u3);
    \draw [->-, line width=1pt,Red] (u3) to[bend left=20] node[midway,fill=white]{\fontsize{2.5}{4}\selectfont $2$} (u4);
    \draw [->-, line width=1pt,Red] (u4) to[bend right=40] node[midway,fill=white]{\fontsize{2.5}{4}\selectfont $3$} (u3);
    \draw [->-, line width=1pt,Red] (u3) to[bend left=60] node[midway,fill=white]{\fontsize{2.5}{4}\selectfont $4$} (u4);
    \draw [->-, line width=1pt,Red] (u4) to[bend left=20] node[midway,fill=white]{\fontsize{2.5}{4}\selectfont $5$} (u5);
    \draw [->-, line width=1pt,Red] (u5) to[bend right=40] node[midway,fill=white]{\fontsize{2.5}{4}\selectfont $6$} (u4);
    \draw [->-, line width=1pt,Red] (u4) to[bend left=60] node[midway,fill=white]{\fontsize{2.5}{4}\selectfont $7$} (u5);
    \draw [->-, line width=1pt,Red] (u5) to[bend left=20] node[midway,fill=white]{\fontsize{2.5}{4}\selectfont $8$} (u7);
    
    \end{tikzpicture}
    }
}

\subfloat[$G+W$ is locally irregular]{
    \scalebox{0.9}{
    \begin{tikzpicture}[inner sep=0.7mm,thick]	

    \node[draw,circle,black,fill=white,line width=1pt](u1) at (1,0){\tiny $1$};
    \node[draw,circle,black,fill=white,line width=1pt](u2) at (2,0){\tiny $3$};
    \node[draw,circle,black,fill=white,line width=1pt](u3) at (4,0){\tiny $6$};
    \node[draw,circle,black,fill=white,line width=1pt](u4) at (6,0){\tiny $8$};
    \node[draw,circle,black,fill=white,line width=1pt](u5) at (8,0){\tiny $7$};
    \node[draw,circle,black,fill=white,line width=1pt](u6) at (8,1){\tiny $1$};
    \node[draw,circle,black,fill=white,line width=1pt](u7) at (10,0){\tiny $5$};
    \node[draw,circle,black,fill=white,line width=1pt](u8) at (10,1){\tiny $1$};
    \node[draw,circle,black,fill=white,line width=1pt](u9) at (11,0){\tiny $1$};
    \node[draw,circle,black,fill=white,line width=1pt](u10) at (10,-1){\tiny $4$};
    \node[draw,circle,black,fill=white,line width=1pt](u11) at (11,-1){\tiny $1$};
    \node[draw,circle,black,fill=white,line width=1pt](u12) at (10,-2){\tiny $1$};
    \node[draw,circle,black,fill=white,line width=1pt](u13) at (9,-1){\tiny $1$};
    
    \draw [-, line width=2pt,black] (u1) -- (u2);
    \draw [-, line width=2pt,black] (u2) -- (u3);
    \draw [-, line width=2pt,black] (u3) -- (u4);
    \draw [-, line width=2pt,black] (u4) -- (u5);
    \draw [-, line width=2pt,black] (u5) -- (u6);
    \draw [-, line width=2pt,black] (u5) -- (u7);
    \draw [-, line width=2pt,black] (u7) -- (u8);
    \draw [-, line width=2pt,black] (u7) -- (u9);
    \draw [-, line width=2pt,black] (u7) -- (u10);
    \draw [-, line width=2pt,black] (u10) -- (u11);
    \draw [-, line width=2pt,black] (u10) -- (u12);
    \draw [-, line width=2pt,black] (u10) -- (u13);
    
    \draw [-, line width=2pt,black] (u2) to[bend left=20] (u3);
    \draw [-, line width=2pt,black] (u3) to[bend left=20] (u4);
    \draw [-, line width=2pt,black] (u4) to[bend right=40] (u3);
    \draw [-, line width=2pt,black] (u3) to[bend left=60] (u4);
    \draw [-, line width=2pt,black] (u4) to[bend left=20] (u5);
    \draw [-, line width=2pt,black] (u5) to[bend right=40] (u4);
    \draw [-, line width=2pt,black] (u4) to[bend left=60] (u5);
    \draw [-, line width=2pt,black] (u5) to[bend left=20] (u7);
    
    \end{tikzpicture}
    }
}
	
\caption{An example of an irregularising walk.
A graph $G$ is given (a), together with a walk $W$ (b).
Adding the edges of $W$ to $G$ results in $G+W$ (c), which is locally irregular.
Thus, $W$ is an irregularising walk of $G$.
Numbers in vertices indicate their degree.
In (b), the edges of $W$ are the integer-labelled arcs in red.
Numbers on arcs of $W$ allow to retrieve the order in which they are traversed.
}
\label{figure:example}
\end{figure}
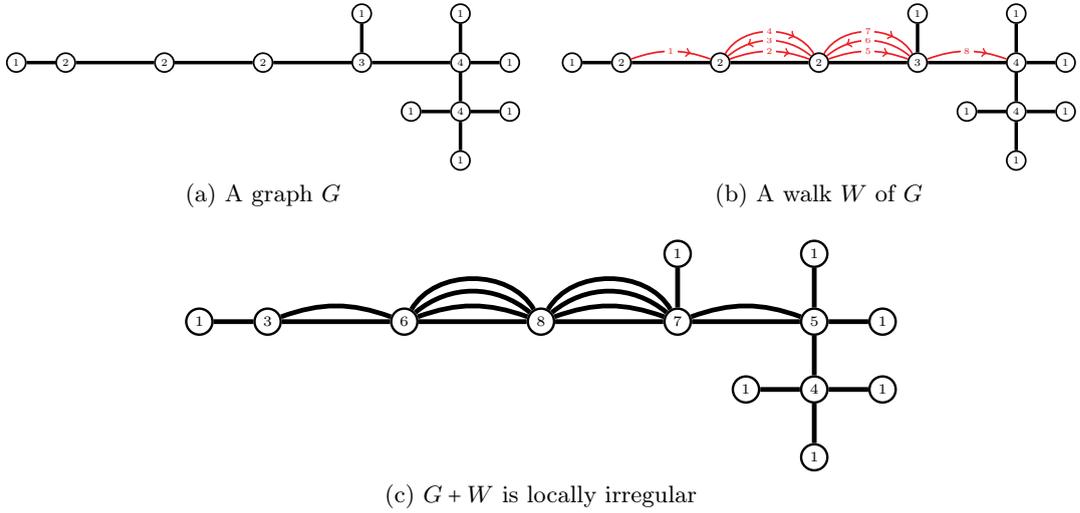

Thus, in a nutshell, through proper labellings and the 1-2-3 Conjecture, 
we are investigating ways to turn graphs into locally irregular multigraphs through increasing edge multiplicities,
and, this, in a somewhat optimal way.
And, in this convention, edges can be chosen and multiplied at will. 
From the point of view of hypothetical applications to these notions,
we believe this point is not quite realistic and too permissive.
We overcome this point in the new problem and notions we introduce by requiring multiplied edges to be obtained 
through navigating from edges to edges along a walk of the graph (see Figure~\ref{figure:example}).
To avoid ambiguities, recall that, in a graph $G$, a \textit{walk} is a sequence of vertices, every two subsequent of which are adjacent. For convenience, most of the time we will instead describe walks through their consecutive edges.
A walk of $G$ is said \textit{closed} if it starts and ends in the same vertex.
Note that the definition of walks allows edges and vertices to be repeated.
Recall further that a \textit{path} of $G$ is a walk with no repeated vertices, and that a \textit{cycle} of $G$ is a closed walk where all vertices are distinct except for the first and the last (which are the same). 
Now, regarding the earlier concerns, we say that a walk $W$ of $G$ is \textit{irregularising} if $G+W$ is locally irregular, where $G+W$ is the multigraph with vertex set $V(G)$ and edge multiset $E(G) \cup E(W)$ obtained when adding, preserving their multiplicities, the edges of $W$ to $G$. Let us insist on the fact that $G+W$ is indeed a multigraph, and that $E(G+W)$ is a multiset.

So, if $G$ is a graph and $W$ is an irregularising walk of $G$,
then $W$ corresponds to a proper labelling of $G$ with specific properties.
Obviously, graphs having two connected components being not locally irregular do not admit any irregularising walk; for simplicity, throughout, \textit{nice graphs} are thus all assumed connected.
As will be seen, all nice
graphs admit irregularising walks.
But, again, it seems natural to include some optimisation concerns to the problem,
and one can come up with many possible parameters of interest to consider.
For the sake of keeping this introduction short,
we elaborate on this point in the next section.
\medskip

This work is organised as follows.
We begin in Section~\ref{section:def} by pointing out several flexible parameters and aspects of interest behind irregularising walks,
and, from this, settle the exact parameters to be investigated in the next sections. In Section~\ref{section:early-properties}, we then raise first easy properties of these parameters of interest. In particular, from there on, we mostly focus on determining the length of shortest irregularising walks. In Section~\ref{section:bounds}, we provide general upper bounds on this parameter, expressed mainly in terms of graph size (number of edges),
chromatic number, and maximum degree. Then, in Section~\ref{section:particular-classes}, we focus on more particular classes of graphs, namely complete graphs, complete bipartite graphs, paths, and cycles. For each of these classes of graphs, we provide tight results on the length of shortest irregularising walks. Last, in Section~\ref{section:algo}, we focus on algorithmic aspects, proving mostly, on the negative side, that determining the length of shortest irregularising walks is \np-complete in general; and, on the positive side, that this can be achieved in polynomial time for trees. We finish off in Section~\ref{section:ccl} with perspectives and directions for further work on the topic.

\section{Formal definitions and parameters}\label{section:def}

As mentioned in the previous section, the notion of irregularising walks is rather flexible,
and, w.r.t.~previous investigations on proper labellings and the 1-2-3 Conjecture,
there are several interesting parameters that one could investigate.
In the current section, we discuss these open points, establish connections with previous works of the literature,
and explain why, along the next sections, we stick to the conventions and parameters we chose.

There are, essentially, two main points we need to discuss.
On the one hand, we discuss the types of walks we allow when designing irregularising walks:
closed walks, paths, restricted walks, etc.~On the other hand, we come up with several objectives of interest related to irregularising walks that one could consider:
minimising the total walk length, the maximum number of times vertices or edges are traversed, etc.

\subsection{Types of walks}\label{subsection:walk-types}

As suggested in the introductory section, we will mostly focus, in our investigations on irregularising walks, 
on walks in the most general setting where repetitions are allowed and the end-vertices can coincide.
But one could also focus on more restricted contexts.
In particular, paths (where the only possible repetition is on the end-vertices) are a fundamental notion of graph theory,
and it thus makes sense to also wonder about irregularising paths.
As will be mentioned later on, this is also an aspect we will keep in mind throughout.

There are also other restrictions on walks we could impose.
For instance, we could specifically focus on irregularising closed walks and irregularising cycles.
W.r.t.~aspects to be mentioned in the next subsection, we could also impose, in the case of walks,
that edges and/or vertices are not traversed too many times.
Another interesting aspect that one could consider, is limiting the number of \textit{half-turns}, 
\textit{i.e.}, pairs of consecutive edges of the form $(uv,vu)$, by irregularising walks, 
which is justified by the fact that several of our later proofs rely heavily on being able to perform these.

\subsection{Parameters of interest}

\subsubsection{Connections with previous works}

Recall that all our investigations on irregularising walks are initially inspired by the problem of turning, through edge multiplications, 
a simple graph into a locally irregular multigraph. And this latter matter is equivalent to designing proper labellings of graphs.
As mentioned implicitly earlier, all nice graphs admit proper labellings;
due to this fact, it seems legitimate to wonder about somewhat ``optimised'' proper labellings,
translating into nice properties of the associated locally irregular multigraphs.
Among others, to our knowledge the most studied aspects include:

\begin{itemize}
	\item Minimising the maximum assigned label; 
	this is equivalent to minimising the maximum number of parallel edges joining any two vertices in the associated multigraph --
	and this is the main concern behind the parameter $\chis$ and the 1-2-3 Conjecture~\cite{KLT04}.
	
	\item Minimising the sum of assigned labels;
	this is equivalent to minimising the size (number of edges) of the associated multigraph --
	this concern was considered in~\cite{BFN22}.
	
	\item Minimising the maximum sum of a vertex;
	this is equivalent to minimising the maximum degree of the associated multigraph --
	this question was investigated in~\cite{BLLN21}.
\end{itemize}

\noindent It is important to mention, see the references above,
that these three main matters are different. Of course, getting results on one of them might have consequences for the others.
For instance, since the 1-2-3 Conjecture holds~\cite{Keu24}, every nice graph $G$ can, through edge multiplications,
be turned into a locally irregular multigraph with size at most $3|E(G)|$ and maximum degree at most $3\Delta(G)$.
However, the authors of the references above also came up with constructions of nice graphs showing that minimising any of these aspects above
is arbitrarily far from minimising the others.

It would be a bit too demanding to the readers to survey now all results known on the three problems above.
For a better flow of what is to be exposed,
we will deliver important points throughout this work, as they are needed to get a good grasp on our results.

\medskip

In the very same vein, from the associated multigraph point of view, 
we believe that several parameters of irregularising walks could be worth optimising.
Namely:

\begin{itemize}
	\item To us, the most natural parameter to optimise is the length (\textit{i.e.}, number of edges)
	of the irregularising walk. From the multigraph point of view,
	this is equivalent to minimising the size of the associated multigraph.
	
	\item Still w.r.t.~walks, another interesting parameter to optimise could be the maximum number of times some edge is traversed.
	From the multigraph point of view, this is equivalent to minimising the maximum number of parallel edges joining two vertices.
	
	\item W.r.t.~the previous concerns, one could also wonder about optimising the maximum number of times a vertex is traversed.
	From the multigraph point of view, this sort of relates to minimising the maximum degree of a vertex.
\end{itemize} 

\noindent Of course, one can certainly come up with other parameters of interest.
For now, however, we will focus on these three, throughout this work.

\subsubsection{Formal definitions}

As suggested in the previous two subsections, the problem we are introducing has many flexible points,
and, for the sake of making our investigations clear, it is important that we clarify the exact conventions and properties
we will focus on from now on.

This work being the very first one on the topic,
it seems more appropriate to focus on irregularising walks and paths with no further path restrictions.
In particular, we will allow ourselves to design irregularising closed walks and cycles.
More precisely, we will mostly focus on irregularising walks, for the simple reason that all nice graphs admit irregularising walks (Corollary~\ref{corollary:bound-general})
while many nice graphs do not admit irregularising paths at all. Even worse, determining whether a nice graph admits irregularising paths is \np-complete (Theorem~\ref{theorem:npc-paths}),
which implies that, unless $\p = \np$, there is no ``simple'' characterisation of graphs admitting irregularising paths.

In terms of parameters, for the reasons mentioned earlier, we will mostly focus on the parameter $\mlw(G)$ (``\underline{M}inimum \underline{L}ength; \underline{W}alks''),
which, for a graph $G$, is the smallest $k \geq 1$ such that $G$ admits irregularising walks of length~$k$.
At times, we will also discuss the two parameters $\mew(G)$ (``\underline{M}inimum \underline{E}dge; \underline{W}alks'') and
$\mvw(G)$ (``\underline{M}inimum \underline{V}ertex; \underline{W}alks'') which refer to the smallest $k \geq 1$ such that $G$ admits irregularising walks
where edges and vertices, respectively, are traversed at most $k$ times each (for $\mvw(G)$, it should be understood that we aim at minimising the maximum number of edges of $W$ incident to some vertex, 
by some irregularising walk $W$ of $G$).
We extend these three parameters to irregularising paths as $\mlp(G)$, $\mep(G)$, and $\mvp(G)$, respectively,
in the obvious way.
Naturally, any of these six parameters is set to $+\infty$ in case $G$ does not admit any corresponding irregularising walk or path (which is the case if $G$ is not nice).

\section{Early properties, and relationships between parameters}\label{section:early-properties}

Now that our new notions of interest have been properly introduced, we begin in this section by investigating some first aspects. We start by stating formal relationships with related distinguishing labelling problems,
mentioned in the introductory section. We then establish general connections between all of the three parameters $\mlw$, $\mew$, and $\mvw$.

\subsection{Relationships with labelling parameters}\label{subsection:relation-labellings}

We mentioned in the previous section that designing irregularising walks minimising the walk length, the maximum number of times an edge is traversed,
and the maximum number of times a vertex is traversed, is related to the problem of designing proper labellings with minimum label sum~\cite{BFN22},
minimum largest assigned label~\cite{KLT04}, and minimum vertex sum~\cite{BLLN21}.
From this, we deduce the next result. Note, in particular, that the additive terms appear because, in labelled graphs, all edges are assigned labels with value at least~$1$,
while, in an irregularising walk, we count how many times edges are traversed (which can be $0$).

\begin{observation}\label{observation:connections-with-labellings}
Let $G$ be a nice graph of size $m$ and maximum degree $\Delta$.
\begin{itemize}
	\item If $x$ is the minimum label sum of a proper labelling of $G$,
	then $x \leq \mlw(G)+m$.
	\item We have $\chis(G) \leq \mew(G)+1$.
	\item If $x$ is the minimum vertex sum of a proper labelling of $G$,
	then $x \leq \mvw(G)+\Delta$.
\end{itemize}
\end{observation}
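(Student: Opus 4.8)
The plan is to show, for each of the three bullets, how an irregularising walk $W$ of $G$ can be converted into a proper labelling of $G$ whose relevant parameter is bounded by the corresponding parameter of $W$ plus the stated additive term. The key observation is that an irregularising walk $W$ of $G$ is, by definition, such that $G+W$ is locally irregular, where the multiplicity of an edge $e$ in $G+W$ is $1+(\text{number of times }W\text{ traverses }e)$. So define a labelling $\ell$ of $G$ by setting $\ell(e) = 1 + w(e)$, where $w(e)$ is the number of traversals of $e$ by $W$. Then for every vertex $u$, $\sigma_\ell(u) = \deg_G(u) + \sum_{e \ni u} w(e) = \deg_{G+W}(u)$, which is exactly the degree of $u$ in the locally irregular multigraph $G+W$. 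Since $G+W$ is locally irregular, adjacent vertices get distinct degrees, hence distinct $\sigma_\ell$-values, so $\ell$ is a proper labelling of $G$. This is the single construction underlying all three items; what remains is to read off the three parameters.

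First I would handle the walk-length bullet. If $W$ is an irregularising walk of length $\mlw(G)$, then $\sum_{e \in E(G)} w(e) = \mlw(G)$ (each edge-traversal contributes $1$ to the length), so $\sum_{e} \ell(e) = \sum_e (1 + w(e)) = m + \mlw(G)$. Hence the minimum label sum $x$ over all proper labellings is at most $m + \mlw(G)$, giving $x \le \mlw(G) + m$. Second, for the $\chis$ bullet, take an irregularising walk $W$ achieving $\mew(G)$, i.e.\ with $w(e) \le \mew(G)$ for every edge $e$; then $\ell(e) = 1 + w(e) \le 1 + \mew(G)$, so $\ell$ is a proper $(\mew(G)+1)$-labelling and $\chis(G) \le \mew(G) + 1$. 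Third, for the vertex-sum bullet, take $W$ achieving $\mvw(G)$, so for every vertex $u$ the number of edges of $W$ incident to $u$ is at most $\mvw(G)$; since $\sum_{e \ni u} w(e)$ counts exactly these incidences, $\sigma_\ell(u) = \deg_G(u) + \sum_{e\ni u} w(e) \le \Delta + \mvw(G)$. Thus the minimum vertex sum $x$ of a proper labelling is at most $\mvw(G) + \Delta$.

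There is essentially no main obstacle here: the statement is a bookkeeping corollary of the definitions, and the only point requiring a moment of care is the precise meaning of the quantities involved — in particular that the length of $W$ equals $\sum_e w(e)$ (not the number of distinct edges used), that $\mvw$ is defined as the maximum over vertices of the number of incident walk-edges (counted with multiplicity), and that the "$+1$", "$+m$", "$+\Delta$" terms arise solely from the shift $\ell = 1 + w$ needed because labels must be at least $1$ whereas traversal counts may be $0$. I would phrase the write-up as a single paragraph introducing $\ell = 1 + w$ and verifying properness via $\sigma_\ell(u) = \deg_{G+W}(u)$, followed by three one-line deductions, one per bullet.
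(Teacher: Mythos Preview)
Your proposal is correct and follows exactly the same approach as the paper, which simply states that from an irregularising walk one ``directly obtain[s] proper labellings of $G$ with the claimed properties.'' You have spelled out the construction $\ell(e)=1+w(e)$ and the three one-line deductions that the paper leaves implicit, but the underlying idea is identical.
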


\begin{proof}
This follows from the fact that if claimed irregularising walks of $G$ exist,
then we directly obtain proper labellings of $G$ with the claimed properties.
\end{proof}

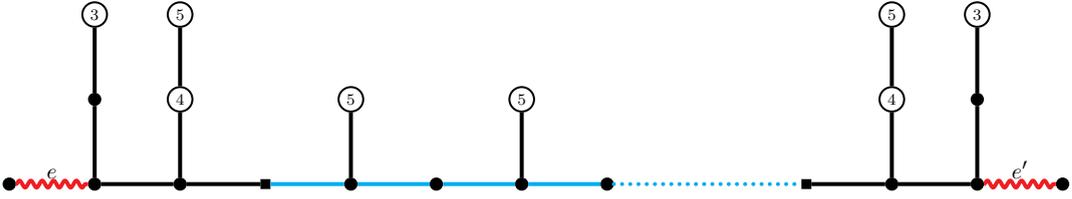
\begin{figure}[t]
\centering

\scalebox{0.75}{
    \begin{tikzpicture}[inner sep=0.7mm,thick]	
    \node[draw,circle,black,fill=black,line width=1pt](u1) at (0,0){};
    \node[draw,circle,black,fill=black,line width=1pt](u2) at (1.5,0){};
    \node[draw,circle,black,fill=black,line width=1pt](u3) at (3,0){};
    \node[draw,black,fill=black,line width=1pt](u4) at (4.5,0){};
    \node[draw,circle,black,fill=black,line width=1pt](u5) at (6,0){};
    \node[draw,circle,black,fill=black,line width=1pt](u6) at (7.5,0){};
    \node[draw,circle,black,fill=black,line width=1pt](u7) at (9,0){};
    \node[draw,circle,black,fill=black,line width=1pt](u8) at (10.5,0){};
    \node[draw,black,fill=black,line width=1pt](u9) at (14,0){};
    \node[draw,circle,black,fill=black,line width=1pt](u10) at (15.5,0){};
    \node[draw,circle,black,fill=black,line width=1pt](u11) at (17,0){};
    \node[draw,circle,black,fill=black,line width=1pt](u12) at (18.5,0){};
    
    \node[draw,circle,black,fill=black,line width=1pt](v2) at (1.5,1.5){};
    \node[draw,circle,black,fill=white,line width=1pt](v3) at (3,1.5){\scriptsize $4$};
    \node[draw,circle,black,fill=white,line width=1pt](v5) at (6,1.5){\scriptsize $5$};
    \node[draw,circle,black,fill=white,line width=1pt](v7) at (9,1.5){\scriptsize $5$};
    \node[draw,circle,black,fill=white,line width=1pt](v10) at (15.5,1.5){\scriptsize $4$};
    \node[draw,circle,black,fill=black,line width=1pt](v11) at (17,1.5){};
    
    \node[draw,circle,black,fill=white,line width=1pt](w2) at (1.5,3){\scriptsize $3$};
    \node[draw,circle,black,fill=white,line width=1pt](w3) at (3,3){\scriptsize $5$};
    \node[draw,circle,black,fill=white,line width=1pt](w10) at (15.5,3){\scriptsize $5$};
    \node[draw,circle,black,fill=white,line width=1pt](w11) at (17,3){\scriptsize $3$};
    
    \draw [-, line width=2pt,Red,snake it] (u1) -- (u2) node [midway, above, black]{$e$};
    \draw [-, line width=2pt,black] (u2) -- (u3);
    \draw [-, line width=2pt,black] (u3) -- (u4);
    \draw [-, line width=2pt,Cyan] (u4) -- (u5);
    \draw [-, line width=2pt,Cyan] (u5) -- (u6);
    \draw [-, line width=2pt,Cyan] (u6) -- (u7);
    \draw [-, line width=2pt,Cyan] (u7) -- (u8);
    \draw[-,line width=2pt,loosely dotted,line cap=round,dash pattern=on 0pt off 2\pgflinewidth,Cyan] (u8) -- (u9);
    \draw [-, line width=2pt,black] (u9) -- (u10);
    \draw [-, line width=2pt,Red,black] (u10) -- (u11);
    \draw [-, line width=2pt,Red,snake it] (u11) -- (u12) node [midway, above, black]{$e'$};;
    
    \draw [-, line width=2pt,black] (u2) -- (v2);
    \draw [-, line width=2pt,black] (u3) -- (v3);
    \draw [-, line width=2pt,black] (u5) -- (v5);
    \draw [-, line width=2pt,black] (u7) -- (v7);
    \draw [-, line width=2pt,black] (u10) -- (v10);
    \draw [-, line width=2pt,black] (u11) -- (v11);
    
    \draw [-, line width=2pt,black] (v2) -- (w2);
    \draw [-, line width=2pt,black] (v3) -- (w3);
    \draw [-, line width=2pt,black] (v10) -- (w10);
    \draw [-, line width=2pt,black] (v11) -- (w11);
    \end{tikzpicture}
}
	
\caption{In this graph $G$, vertices in black have their full neighbourhood displayed, while, for vertices in white, we only provide their degrees (through a number inside -- missing neighbours can be \textit{e.g.}~leaves). A proper labelling minimising the label sum of this graph is obtained by assigning label~$2$ to the two red (wiggly) edges $e$ and $e'$, and label~$1$ to all other edges. On the other hand, the unique path $P$ containing $e$ and $e'$ contains all blue edges (between the two square vertices), which would generate new conflicts in $G+P$ between original vertices of degree~$3$ and their neighbours of degree~$5$. Thus, any irregularising walk of $G$ traversing $e$ and $e'$ would have to contain extra edges.}
\label{figure:hard-to-connect}
\end{figure}
Establishing bounds that involve inequalities reversed from those in Observation~\ref{observation:connections-with-labellings} is not as easy.
Indeed, there are examples of graphs that are not locally irregular, in which degree conflicts between neighbours are ``far apart'',
and in which, through a labelling, getting rid of these conflicts can be done very easily (by \textit{e.g.}~assigning label~$2$ locally, to some edge).
And, in such a graph, given an optimal labelling $\ell$ (in terms of label sum, maximum assigned label, or maximum sum),
``converting'' $\ell$ to a ``good'' irregularising walk is not trivial.
To convince the reader of this fact, we provide an example in Figure~\ref{figure:hard-to-connect}. In this graph $G$, a proper labelling $\ell$ can be obtained by assigning label~$2$ to two edges $e$ and $e'$, and label~$1$ to the rest of the edges. In particular, the fact that label~$2$ is assigned only to $e$ and $e'$ guarantees that $\ell$ has convenient properties (very small label sum, maximum assigned label, and maximum sum). On the other hand, any irregularising walk $W$ of $G$ containing $e$ and $e'$ would have to traverse other edges, thereby increasing the degrees of some vertices and generating new conflicts in $G+W$. So $W$ has to go through additional edges to get rid of these, and the situation could get even worse if $G$ had other structures.

Still, combining an approach and arguments to be introduced in later Theorem~\ref{theorem:bound-general}, we will be able to establish inequalities reverse to those in Observation~\ref{observation:connections-with-labellings}; we postpone these to later Theorem~\ref{theorem:bounds-parameters-greedy-proper}.

To get a better grasp of what Observation~\ref{observation:connections-with-labellings} actually means, 
let us recall that, because the 1-2-3 Conjecture holds~\cite{Keu24},
for every nice graph $G$, over all proper labellings, the minimum label sum is at most $3|E(G)|$,
the minimum largest assigned label is at most~$3$,
and the minimum vertex sum is at most $3\Delta(G)$.
While only the bound on the second parameter is known to be best possible,
the first one is conjectured to be at most about $2|E(G)|$ (see~\cite{BFN22}),
while things are less clear regarding the third one (see~\cite{BLLN21}).

\subsection{Relationships between the irregularising walk parameters}

Paths being restricted types of walks, we immediately get:

\begin{observation}\label{observation:paths-bounds-walks}
For every graph $G$,
we have $\mlw(G) \leq \mlp(G)$, $\mew(G) \leq \mep(G)$, and $\mvw(G) \leq \mvp(G)$.
\end{observation}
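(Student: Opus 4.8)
The plan is to prove the three inequalities of Observation~\ref{observation:paths-bounds-walks} simultaneously by observing that every irregularising path is, by definition, a special case of an irregularising walk, and that each of the three parameters on paths is obtained by minimising over a subclass of the objects over which the corresponding walk parameter is minimised. Concretely, for a fixed nice graph $G$, I would first recall from Section~\ref{section:def} that a path of $G$ is a walk of $G$ with no repeated vertices (except possibly the two end-vertices), so the set $\mathcal{P}(G)$ of irregularising paths of $G$ satisfies $\mathcal{P}(G) \subseteq \mathcal{W}(G)$, where $\mathcal{W}(G)$ denotes the set of irregularising walks of $G$. This containment is the only structural input needed.

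Next I would argue the inequalities one by one. For $\mlw(G) \leq \mlp(G)$: if $G$ admits no irregularising path, then $\mlp(G) = +\infty$ and the inequality is trivial; otherwise, let $P$ be an irregularising path of $G$ of length $\mlp(G)$. Since $P \in \mathcal{W}(G)$, the graph $G$ admits an irregularising walk of length $\mlp(G)$, whence $\mlw(G) \leq \mlp(G)$ by definition of $\mlw$ as the smallest such length. For $\mew(G) \leq \mep(G)$: again assuming $\mep(G) < +\infty$, take an irregularising path $P$ attaining $\mep(G)$, i.e.\ in which every edge is traversed at most $\mep(G)$ times (in fact at most once, but we do not even need this); since $P$ is also an irregularising walk with the same edge-traversal bound, $\mew(G) \leq \mep(G)$. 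The argument for $\mvw(G) \leq \mvp(G)$ is identical, replacing ``edge traversed'' by ``edges of the walk incident to a vertex''. One should also note the degenerate case: if $G$ is not nice, all six parameters equal $+\infty$, and the inequalities hold vacuously; this is already covered by the $+\infty$ convention in the definitions.

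There is essentially no obstacle here: the statement is an immediate consequence of ``paths are walks'' together with the fact that taking a minimum over a smaller set yields a value at least as large as over the larger set. The only thing requiring any care is bookkeeping around the value $+\infty$ — making sure the inequalities are interpreted correctly when the path-version parameter is infinite (trivially true) and, conversely, observing that $\mlw(G)$ can be finite while $\mlp(G) = +\infty$, so the inequalities are genuinely one-directional and can be strict. I would phrase the proof in one or two sentences, as the authors likely do, since a longer exposition would be disproportionate to the content.
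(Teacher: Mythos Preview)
Your proposal is correct and matches the paper's approach exactly: the paper states the observation without a formal proof, preceding it only with the single sentence ``Paths being restricted types of walks, we immediately get:''. Your more detailed unpacking of the $+\infty$ cases and the minimum-over-a-subset argument is sound but, as you yourself note, more than the paper deems necessary.
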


\noindent This might indicate that, generally speaking, one could investigate irregularising paths in order to derive results on irregularising walks.
As mentioned earlier, unfortunately not all nice graphs admit irregularising paths (while they all admit irregularising walks; see later Corollary~\ref{corollary:bound-general}),
and actually describing graphs with no irregularising paths is not easy (see later Theorem~\ref{theorem:npc-paths}).
So, there is not much we can get from Observation~\ref{observation:paths-bounds-walks}.

Since the three parameters $\mlw$, $\mew$, and $\mvw$ all deal with irregularising walks, we can derive some straightforward relationships involving them. We focus on irregularising walks in what follows, but we could easily adapt this result for paths.

\begin{observation}\label{observation:relationships}
If $G$ is a nice graph of order $n$, size $m$, and maximum degree $\Delta$, then:
\begin{itemize}
    \item $\mew(G) \leq \mlw(G)$ and $\mvw(G) \leq \mlw(G)$;
    \item $\mlw(G) \leq m \cdot \mew(G)$ and $\mvw(G) \leq \Delta \cdot \mew(G)$;
    \item $\mlw(G) \leq \frac{n}{2} \cdot \mvw(G)$ and $\mew(G) \leq \mvw(G)$.
\end{itemize}
In particular: 
\begin{itemize}
    \item $\mew(G) \leq \mlw(G) \leq m \cdot \mew(G)$ and $\mvw(G) \leq \mlw(G) \leq \frac{n}{2} \cdot \mvw(G)$;
    
    \item $\frac{1}{m} \cdot \mlw(G) \leq \mew(G) \leq \mlw(G)$ and $\frac{1}{\Delta} \cdot \mvw(G) \leq \mew(G) \leq \mvw(G)$;
    
    \item $\frac{2}{n} \cdot \mlw(G) \leq \mvw(G) \leq \mlw(G)$ and $\mew(G) \leq \mvw(G) \leq \Delta \cdot \mew(G)$.
\end{itemize}
\end{observation}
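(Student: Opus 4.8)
The plan is to obtain each of the six displayed inequalities by a short double-counting argument, in each case starting from an irregularising walk optimal for one parameter and bounding the other two; the ``in particular'' inequalities will then be pure reformulations. Throughout, for an irregularising walk $W$ of $G$ write $\ell(W)$ for its length (the cardinality of the multiset $E(W)$), and for $v \in V(G)$ let $d_W(v)$ be the number of edges of $E(W)$, counted with multiplicity, incident to $v$; thus $\sum_{v \in V(G)} d_W(v) = 2\ell(W)$, the number of times any fixed edge is traversed by $W$ is at most $d_W(u)$ for either endpoint $u$, and $\mvw(G)$ is the minimum of $\max_{v} d_W(v)$ over all irregularising walks $W$. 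Since $G$ is assumed nice, an irregularising walk exists (Corollary~\ref{corollary:bound-general}) and all parameters are finite, so these minima are attained.

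For the first bullet, take an irregularising walk $W$ with $\ell(W) = \mlw(G)$: as $E(W)$ is a multiset of size $\mlw(G)$, no edge occurs in it more than $\mlw(G)$ times, giving $\mew(G) \le \mlw(G)$, and $d_W(v) \le |E(W)| = \mlw(G)$ for every $v$, giving $\mvw(G) \le \mlw(G)$. For the second bullet, take an irregularising walk $W$ in which every edge is traversed at most $\mew(G)$ times; since the edges of $W$ lie in $E(G)$, the multiset $E(W)$ involves at most $m$ distinct edges, so $\ell(W) \le m\cdot\mew(G)$ and hence $\mlw(G) \le m\cdot\mew(G)$, while for each vertex $v$ the edges of $W$ incident to $v$ lie among the at most $\Delta$ edges of $G$ at $v$, each used at most $\mew(G)$ times, so $d_W(v) \le \Delta\cdot\mew(G)$ and hence $\mvw(G) \le \Delta\cdot\mew(G)$. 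For the third bullet, take an irregularising walk $W$ with $\max_v d_W(v) \le \mvw(G)$: summing over vertices gives $2\ell(W) = \sum_v d_W(v) \le n\cdot\mvw(G)$, so $\mlw(G) \le \ell(W) \le \tfrac{n}{2}\mvw(G)$, and for any edge $e = uv$ every occurrence of $e$ in $E(W)$ is an edge of $W$ at $u$, so $e$ is traversed at most $d_W(u) \le \mvw(G)$ times, giving $\mew(G) \le \mvw(G)$.

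Finally, the ``in particular'' statements follow immediately by chaining the three bullets and, where needed, dividing through by $m$, $\Delta$, or $n/2$. I do not expect any real obstacle here: the only point needing a little care is the bookkeeping around the definition of $\mvw$ (that $d_W(v)$ counts traversals \emph{with multiplicity} and that $\sum_v d_W(v) = 2\ell(W)$), after which every step is elementary counting.
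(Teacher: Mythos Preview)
Your proposal is correct and follows essentially the same approach as the paper: in each bullet you pick an irregularising walk optimal for one of the three parameters and bound the other two by elementary counting, exactly as the paper does. Your write-up is slightly more explicit about the handshake identity $\sum_v d_W(v)=2\ell(W)$, but the arguments are otherwise identical.
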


\begin{proof}
Let $W$ be an irregularising walk of $G$.

\begin{itemize}
    \item Assume that $W$ has length $x$. Then each edge of $G$ is traversed at most $x$ times by $W$. Likewise, for every vertex of $G$, its incident edges are traversed, in total, at most $x$ times by $W$. This proves the first item.
    
    \item Assume that every edge of $G$ is traversed at most $x$ times by $W$. Then the length of $W$ is at most $mx$. Likewise, for every vertex $v$ of $G$, each of its $d(v)$ incident edges is traversed at most $x$ times by $W$, and thus $v$ is traversed at most $x d(v) \leq x \Delta$ times (in the sense that there are at most $xd(v)$ edges of $W$ incident to $v$). This proves the second item.
    
    \item Assume that every vertex of $G$ is traversed at most $x$ times by $W$, \textit{i.e.}, there are at most $x$ edges of $W$ incident to $v$. The length of $W$ is thus at most $\frac{n}{2}x$ since edges are counted twice. Likewise, if an edge $uv$ of $G$ was traversed at least $x+1$ times, then $u$ (and $v$) would be incident to $x+1$ edges of $W$, a contradiction. Thus, every edge is traversed at most $x$ times. This proves the third item.
\end{itemize}

The last three items follow from the first three ones.
\end{proof}


\section{General bounds}\label{section:bounds}

In this section, we provide general bounds on the parameter $\mlw$ expressed in terms of other graph parameters. The first one we establish, through Corollary~\ref{corollary:bound-general}, is a function of the graph's order and size, while in the second one we provide, in Corollary~\ref{corollary:bound-chromatic}, one of the main parameters is the graph's chromatic number.

\subsection{General graphs}

We start off with the main arguments behind both bounds. For a walk $W$, we denote by $|W|$ its number of vertices and by $\|W\|$ its number of edges.

\begin{theorem}\label{theorem:bound-general}
Let $G$ be a nice connected graph of size $m$,
and $W$ be a closed walk of length~$p$ of $G$ going through all vertices. 
Then,$$\mlw(G) \leq p+2m.$$
\end{theorem}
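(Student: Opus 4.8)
The plan is to obtain an irregularising walk by traversing $W$ once and then stitching on a short corrective sub-walk, using the 1-2-3 Theorem (Keusch) to decide which extra parallel edges to create. Since $G$ is nice, it admits a proper labelling $\ell\colon E(G)\to\{1,2,3\}$. Reading $\ell$ as edge multiplicities, the multiset $M_0$ in which each edge $e$ is taken $\ell(e)-1\in\{0,1,2\}$ times has $\sum_{e}(\ell(e)-1)\le 2m$ elements, and $G+M_0$ is locally irregular because $\deg_{G+M_0}(v)=\sigma_\ell(v)$ for every $v$ and $\sigma_\ell$ is a proper colouring. So the first attempt is to build a walk whose edge multiset is $E(W)\uplus M_0$: it has length $p+\sum_e(\ell(e)-1)\le p+2m$. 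The obstacle — exactly the phenomenon of Figure~\ref{figure:hard-to-connect} — is that $M_0$ need not be a walk: its underlying subgraph may be disconnected and have many odd-degree vertices, so $E(W)\uplus M_0$ need not admit an Eulerian-type single traversal; and, worse, adding all of $W$ shifts every degree by $\deg_W(v)$, the (even, since $W$ is closed) number of edges of $W$ incident with $v$, so $G+W+M_0$ has degrees $\sigma_\ell(v)+\deg_W(v)$, which need not be proper.

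Both issues are attacked using the two features of $W$: it is closed (so it preserves degree parities) and it visits every vertex (so it can serve as a skeleton linking everything). Rather than insisting on exactly $M_0$, I would choose, for each edge $e$, a number $c_e\in\{0,1,2\}$ of extra traversals so that two things hold simultaneously: (i) the degree function $v\mapsto d_G(v)+\deg_W(v)+\sum_{e\ni v}c_e$ is a proper colouring of $G$ — which, since $\deg_W$ is fixed, is a ``1-2-3-type'' requirement with prescribed vertex offsets, to be guaranteed by applying Keusch's theorem to a suitably adjusted labelling, i.e.\ by arranging that $\sigma_\ell+\deg_W$ (not merely $\sigma_\ell$) is proper; and (ii) the multiset of the $c_e$ extra copies can be realised along a single pass of $W$.

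For (ii), I would arrange that the set $\{e: c_e=1\}$ forms an even subgraph of $G$, hence a disjoint union of closed sub-walks; then, traversing $W$ once, whenever we first reach a vertex of such a closed sub-walk we detour around it, and the even contributions ($c_e=2$) are inserted by half-turns along the relevant edges the first time $W$ passes a suitable endpoint. The result is a single walk $W'$ of length $p+\sum_e c_e\le p+2m$, and by construction $G+W'$ realises the proper degree colouring from (i), so it is locally irregular; this gives $\mlw(G)\le p+2m$.

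The step I expect to be the main obstacle is reconciling (i) and (ii): the parity constraint ``$\{e:c_e=1\}$ is an even subgraph'' (needed so the extra edges attach to $W$) restricts the admissible multiplicity functions $c$, while local irregularity of $G+W'$ forces $\sum_{e\ni v}c_e$ to realise a prescribed proper colouring up to the fixed $\deg_W$ offsets. Making these compatible is essentially a constrained version of the 1-2-3 Theorem — one where each vertex carries a fixed offset and the parity of its incremental degree is restricted — and I would expect the proof to handle it either by a dedicated labelling lemma or by local surgery (half-turns and re-routings along $W$) that eliminates residual conflicts one at a time at $O(1)$ extra cost each, staying within the $2m$ budget.
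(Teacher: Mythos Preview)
Your proposal has a genuine gap at exactly the point you flag yourself: reconciling (i) and (ii). You need multiplicities $c_e\in\{0,1,2\}$ such that the offset degree function $d_G+\deg_W+\sum_{e\ni\cdot}c_e$ is proper \emph{and} $\{e:c_e=1\}$ is an even subgraph. Neither Keusch's theorem nor any standard strengthening gives this: the 1-2-3 Theorem offers no control over \emph{which} edges carry which label, so there is no reason the label-$2$ edges should form an even subgraph; and the offset version you invoke (properness of $\sigma_\ell+\deg_W$ rather than $\sigma_\ell$) is not a known consequence either, since $\deg_W$ is an arbitrary even function on $V(G)$. So the appeal to Keusch buys nothing here, and the real work is entirely in your fallback ``local surgery at $O(1)$ cost each, staying within the $2m$ budget'' --- which you have not carried out.

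In fact that fallback, done directly, \emph{is} the paper's proof, and it needs no labelling theorem at all. One simply follows $W=u_0\dots u_p$; the last time a vertex $u_i$ is encountered, one moves to $u_{i+1}$ and then performs just enough half-turns $u_{i+1}u_iu_{i+1}$ to make the (now final) degree of $u_i$ in $G+W'$ differ from that of every neighbour of $u_i$ already finalised. The key count is that each edge $uv\in E(G)$ causes at most one such conflict --- namely at the moment the later of $u,v$ is finalised --- and one half-turn (two extra edges) removes it. Hence at most $m$ half-turns on top of the $p$ edges of $W$, giving $\|W'\|\le p+2m$. A little extra care is needed for the last vertex $u_p=u_0$ (there is no ``next'' edge to half-turn on), handled by reserving a second neighbour $v_2$ of $u_0$ and finishing with half-turns along $u_pv_2$. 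The whole argument is greedy and self-contained; the 1-2-3 Theorem is a red herring for this bound.
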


\begin{proof}
Set $W=u_0 \dots u_{p-1}u_p$, where $d(u_0) > 1$ (since $G$ is nice and connected, we have $\Delta(G)>1$),
and $u_0=u_p$.
Let also $v_1$ and $v_2$ be two distinct neighbours of $u_0$, where, say, $v_1=u_{p-1}$.
We build an irregularising walk $W'$ of $G$ by essentially ``following'' $W$, and, along the way, performing, when needed, sufficiently many half-turns to get rid of conflicts in the eventual multigraph $G+W'$. Recall that we mentioned the notion of half-turns in Section~\ref{subsection:walk-types}; for reminder, a \textit{half-turn} in a walk consists, being at some vertex $u$ at some point, in going to a neighbour $v$, and coming back to $u$ (thus, the walk contains $uvu$ as a subwalk). Most of the time, these half-turns will be performed along  edges going to subsequent vertices of $W$; that is, when considering any $u_i$, we will go to $u_{i+1}$ following $W$, and then perform half-turns of the form $u_{i+1}u_iu_{i+1}$.
A downside is that, in doing so, we alter the degree of $u_{i+1}$ in $G+W'$; however, $u_{i+1}$ is to be considered next, so we will have the opportunity to further modify its degree, if needed.

A crucial point in the proof is that if some vertex $u$ of $G$ is traversed several times by $W$, \textit{i.e.}, for some $i<j$ we have $u=u_i=u_j$, then, when building $W'$, making sure that $u$ is distinguished from (some of) its neighbours by performing half-turns (around $u$) should only be performed the last time we encounter $u$ when following $W$. This is because, due to the half-turns we perform (along the edge to the next vertex), this is the last opportunity to do so, and also because later half-turns will not impact what the degree of $u$ will be in the eventual $G+W'$. 
Another technical point is that, when reaching $u_p$, there is no further edge (of the form $u_pu_{p+1}$) to work with to make sure that $u_p$ is not involved in conflicts. The solution we come up with consists, when performing half-turns along $u_{p-1}u_p$ (to treat $u_{p-1}$), in performing additional half-turns, if needed, to guarantee that $u_p$ and $v_2$ (recall that $u_{p-1}=v_1 \neq v_2$) can never be in conflict in $G+W'$, even if additional half-turns along $u_pv_2$ are later performed. Eventually, to be done, it will then suffice to perform, if needed, additional half-turns along $u_pv_2$ to get rid of any conflicts involving $u_p$ or $v_2$.

Let us now describe how to obtain $W'$.
Starting from $u_0$, we follow the vertices of $W$ one by one in order.
For every vertex $u_i$ considered this way, we proceed as follows.	

\begin{itemize}
	\item If $u_i$ is to be reconsidered later in the process
	(\textit{i.e.}, there is some $j>i$ such that $u_j=u_i$; or $u_i=v_2$ for reasons explained above),
	then we just continue to $u_{i+1}$ (that is, we just add $u_iu_{i+1}$ to $W'$).
	
	\item If $i < p-1$ and $u_i$ is not going to be reconsidered later in the process 
	(\textit{i.e.}, there is no $j>i$ such that $u_j=u_i$; and $u_i\neq v_2$ again),
	then we go to $u_{i+1}$ (we add $u_iu_{i+1}$ to $W'$) and, before continuing with $u_{i+1}$,
	we first perform sufficiently many half-turns (added to $W'$) from $u_{i+1}$ to $u_i$
	so that, in the resulting $W'$, the degree of $u_i$ in $G+W'$ is not the same
	as that of the neighbours of $u_i$ that are not to be reconsidered later in the process
	(that is, neighbours of $u_i$ that will not see their degrees evolve further in $G+W'$).
	
	\item If $i=p-1$, then we essentially perform the same, but we additionally make sure that the resulting degree of $u_{i+1}=u_{p}$ in $G+W'$ is not the same as that of $v_2$
	(recall $v_2$ is a neighbour of $u_0=u_p$ different from $u_{p-1}=v_1$).
	
	\item If $i=p$, then we perform, in $W'$, sufficiently many half-turns from $u_p$ to $v_2$ so that the resulting degree of $u_p$ in $G+W'$ is not the same as the degrees of its neighbours different from $v_2$,
	and similarly the resulting degree of $v_2$ in $G+W'$ is not the same as the degrees of its neighbours different from $u_p$.
	In particular, since, prior to performing these half-turns, the degree in $G+W'$ of $u_p$ is not the same as that of $v_2$, note that the degrees of $u_p$ and $v_2$ cannot get equal upon performing
	half-turns from $u_p$ to $v_2$ (as every performed half-turn increases both degrees by~$2$).
\end{itemize}


All in all, compared to the original walk $W$, the length of $W'$ is increased only through half-turns. For a final analysis, note that whenever we perform a half-turn, we do it because of some edge $uv$ whose incident vertices are in conflict, with two possible cases. Either $u=u_p$ and $v=v_2$, in which case the conflict is handled by adding one half-turn along $u_{p-1}u_p$, or, w.l.o.g, $v=u_i$ and $u$ has already been treated and is not to be reconsidered later in the process. In the latter case, the conflict is handled by adding one half-turn along $u_iu_{i+1}$. Edge $uv$ cannot raise several conflicts during the process, as either $u=u_p$ and $v=v_2$ and $W'$ will only traverse $uv$ after taking care of the conflict, or neither $u$ nor $v$ will be seen again on $W'$ once the conflict has been taken care of. Overall, every edge is thus responsible for at most one conflict, hence one half-turn. So, in total, we have $$\|W'\| \leq p+2m.$$ This concludes the proof.
\end{proof}

\begin{corollary}\label{corollary:bound-general}
If $G$ is a nice connected graph of order $n$ and size $m$,
then $$\mlw(G) \leq 2(m+n-1).$$
In particular, since $m \geq n-1$, $$\mlw(G) \leq 4m.$$
\end{corollary}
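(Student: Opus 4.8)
The plan is to obtain this as an immediate consequence of Theorem~\ref{theorem:bound-general}, so the only thing to do is exhibit a short closed walk of $G$ visiting every vertex. The natural candidate is a depth-first traversal of a spanning tree. Concretely, since $G$ is connected it has a spanning tree $T$, with $|E(T)| = n-1$; and since $G$ is nice and connected we have $\Delta(G) > 1$, so we may root $T$ at a vertex $r$ with $d_G(r) \geq 2$. Let $W$ be the closed walk that performs a DFS of $T$ starting and ending at $r$, traversing each edge of $T$ exactly once in each direction (equivalently, an Euler tour of the Eulerian multigraph obtained by doubling every edge of $T$). Then $W$ is a closed walk of $G$ that goes through all $n$ vertices and has length exactly $p = 2(n-1)$.

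Applying Theorem~\ref{theorem:bound-general} to this $W$ then yields
\[
\mlw(G) \leq p + 2m = 2(n-1) + 2m = 2(m+n-1).
\]
For the ``in particular'' statement, it suffices to recall that a connected graph on $n$ vertices satisfies $m \geq n-1$, hence $n-1 \leq m$ and therefore $2(m+n-1) \leq 2(m+m) = 4m$.

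I do not expect any real obstacle here: the whole content has been pushed into Theorem~\ref{theorem:bound-general}, and what remains uses only the textbook fact that a tree admits a closed traversal using each edge twice, together with the elementary inequality $m \geq n-1$. The only points worth a word of care are choosing the root (and hence the start of the closed walk) to have degree at least $2$, so that the hypotheses under which the proof of Theorem~\ref{theorem:bound-general} operates are met, and dismissing the degenerate case $n = 1$ (a nice connected graph with $n \geq 2$ in fact has $n \geq 3$, since $K_2$ is not nice, so the closed walk above is nonempty).
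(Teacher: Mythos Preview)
Your proof is correct and essentially identical to the paper's: both take a spanning tree, perform a DFS to obtain a closed spanning walk of length $2(n-1)$, and then apply Theorem~\ref{theorem:bound-general}. The extra care you take (rooting at a vertex of degree at least~$2$, and the remark on small $n$) is harmless but not needed to invoke the theorem as stated.
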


\begin{proof}
This follows from the fact that considering any spanning tree of $G$ and performing, in that tree, a DFS algorithm from any vertex yields a closed walk $W$ of $G$ of length $2(n-1)$, going through all vertices.
Theorem~\ref{theorem:bound-general} then yields the bound $\mlw(G)\leq 2(n-1)+2m$.
\end{proof}

The last part of Corollary~\ref{corollary:bound-general} makes particular sense to us, since the parameter $\mlw$ is a length parameter, thus dealing with a number of edges. In particular, connected graphs of size $m$ all admit closed walks of length at most about $2m$, going through all vertices,
and Corollary~\ref{corollary:bound-general} says that irregularising walks can be achieved through walks that are, roughly, only at most twice as long.

\medskip

It is worth noting that the proof of Theorem~\ref{theorem:bound-general} is definitely improvable in some contexts, so Corollary~\ref{corollary:bound-general} is not best possible in general. In particular, in Theorem~\ref{theorem:bound-general}, we require that $W$ is a closed walk to guarantee that we treat all vertices of $G$ and get the opportunity to alter their degrees (and get rid of conflicts) in $G+W'$. Clearly, this is too strong, as this condition on $W$ does not take into account that parts of $G$, originally, might already be locally irregular (and thus, following our reasoning, there is no need for $W$ to traverse vertices there). 
In particular, our proof is still valid if $W$ visits at least one of $u$ and $v$ for every two adjacent vertices $u$ and $v$ with $d_G(u)=d_G(v)$. This would also improve our bounds in the following way: if $\partial(W)$ denotes the set of edges joining vertices in $V(W)$ and $V(G)\backslash V(W)$, then our bound improves to $p+2(|E(G[V(W)])|+|\partial(W)|)$. 
An extreme case is obviously when $G$ is locally irregular, in which case these arguments apply for $W$ being the empty closed walk. Through later results, we will get to see other examples where such arguments apply.

Another possible room for improvement lies in the original walk $W$ which guides the construction of $W'$ in the proof of Theorem~\ref{theorem:bound-general}. First, as mentioned previously, it is not always mandatory that $W$ goes through all vertices. Also, to derive Corollary~\ref{corollary:bound-general}, we made use of the general fact that any connected graph of order $n$ admits a closed walk $W$ of length $2(n-1)$ going through all vertices. However, in some graphs, shorter closed walks with the same properties exist. For instance, in the case of a Hamiltonian graph, we could instead take such a $W$ of length~$n$ only (any Hamiltonian cycle).
Last, let us also remark that the last half-turns of the proof, performed along $u_pv_2$, could be simplified to a walk of any length along $u_pv_2$ (getting rid of conflicts involving $u_p$ and $v_2$). This would definitely lead to a negligible improvement over the resulting bound.

\medskip

In terms of tightness, the bound in Corollary~\ref{corollary:bound-general} is not too far from being tight, as, as will be pointed out later on (in Section~\ref{section:ccl}), there are graphs $G$ of size $m$ for which $\mlw(G)$ is about $3m$.

\subsection{Combining guiding walks and proper labellings}

We saw through Observation~\ref{observation:connections-with-labellings} that we can derive lower bounds on the parameters $\mlw$, $\mew$, and $\mvw$ from the existence of proper labellings with particular properties. It turns out that we can also derive upper bounds on these by combining this idea with the greedy procedure introduced in the proof of Theorem~\ref{theorem:bound-general}.

\begin{theorem}\label{theorem:bounds-parameters-greedy-proper}
Let $G$ be a nice graph of size $m$ and maximum degree $\Delta$.
\begin{itemize}
	\item If $x$ is the minimum label sum of a proper labelling of $G$,
	then $x \leq \mlw(G)+m \leq 3(m+\Delta x)$.
	\item We have $\chis(G) \leq \mew(G)+1 \leq 3(1+\Delta\chis(G))$.
	\item If $x$ is the minimum vertex sum of a proper labelling of $G$,
	then $x \leq \mvw(G)+\Delta \leq 3\Delta(1+x)$.
\end{itemize}
\end{theorem}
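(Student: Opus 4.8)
### Proof proposal

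The plan is to establish each of the three chained inequalities in turn, noting that the left-hand inequality of each item is exactly the content of Observation~\ref{observation:connections-with-labellings}, so only the right-hand inequality (a bound on the irregularising-walk parameter in terms of the optimal proper labelling) requires work. The unifying idea is to take an optimal proper labelling $\ell$ of $G$ and use it to construct a \emph{guiding walk} that the greedy procedure of Theorem~\ref{theorem:bound-general} can then follow: since $\ell$ is proper, the multigraph $M(G)$ obtained by replacing each edge $e$ with $\ell(e)$ parallel copies is locally irregular, and the ``extra'' edges (the $\ell(e)-1$ copies beyond the original) form a sub-multigraph $H$ of $G$ whose total size is $x-m$ (resp.\ has maximum multiplicity $\chis(G)-1$, resp.\ has maximum degree at most $x-\Delta$ or so, depending on the item). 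One would first route a closed walk $W$ traversing exactly these extra edges the appropriate number of times — in each connected component of $H$ this is a bounded-length closed walk by the Eulerian/DFS argument already used for Corollary~\ref{corollary:bound-general} — then connect the components together cheaply along a spanning structure of $G$, obtaining a guiding closed walk through all the ``problematic'' vertices.

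First I would handle the label-sum item. Here $H$ has size $x-m$, so its components admit closed walks of total length at most $2(x-m)$ covering all their edges; stitching components together and to an anchor vertex adds at most $2(n-1) \le 2m$ more, but a cleaner route is to observe that $G+W$ for this $W$ is already locally irregular up to the boundary issues, and then feed the resulting closed walk (which has length $O(m + x)$) into the machinery of Theorem~\ref{theorem:bound-general}, which costs an additional additive $2m$ and a constant blow-up. Chasing the constants, the greedy step multiplies the guiding-walk length by at most a small constant and adds $2m$; one then checks that $p + 2m \le 3(m + \Delta x)$ holds, using $x \ge m \ge n-1$ and $\Delta \ge 2$ (since $G$ is nice and connected) to absorb lower-order terms. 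The second item is the special case where all extra multiplicities are at most $\chis(G)-1$: every edge of the guiding walk is traversed $O(\chis(G))$ times, and each vertex of the original graph sits on at most $\Delta$ edges, giving $\mew(G) \le O(\Delta \chis(G))$; one then verifies $\mew(G)+1 \le 3(1+\Delta\chis(G))$ with the constant worked out so that it matches. The third item is analogous with the vertex-sum labelling: the extra-edge sub-multigraph has every vertex incident to at most $x - \Delta$ extra edge-slots, and following the guiding walk with the greedy half-turn fixes costs an extra factor $\Delta$ per vertex, landing at $\mvw(G)+\Delta \le 3\Delta(1+x)$.

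The main obstacle I anticipate is bookkeeping the constants so that they genuinely come out to exactly the factor $3$ (rather than some larger absolute constant), and in particular making the greedy half-turn argument of Theorem~\ref{theorem:bound-general} interact correctly with a guiding walk that is \emph{not} a DFS walk of a spanning tree but carries prescribed multiplicities. One has to be careful that the half-turns the greedy procedure adds, in order to resolve conflicts created by the extra edges, are themselves counted against the budget — so the argument should be phrased as: start with the multigraph $M(G)$ derived from $\ell$ (which is already locally irregular), realise its extra edges as a closed walk, and then the only conflicts the greedy pass must repair are those at vertices on the \emph{boundary} of the walk, each costing one half-turn along an original edge, exactly as in the refined boundary bound $p + 2(|E(G[V(W)])| + |\partial(W)|)$ noted after Corollary~\ref{corollary:bound-general}. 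Once that accounting is set up, each of the three inequalities reduces to a short arithmetic verification using $\Delta \ge 2$ and the trivial lower bounds on $x$.
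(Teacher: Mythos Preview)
Your lower bounds are correct (they are exactly Observation~\ref{observation:connections-with-labellings}), but the upper-bound argument has a genuine gap, most acutely for the second and third items. The core problem is that the greedy procedure of Theorem~\ref{theorem:bound-general} discards the labelling $\ell$ entirely: it resolves conflicts by brute-force half-turns, and the only feature of the guiding walk that survives into the output bound is its length~$p$. So even if your guiding walk faithfully encodes $\ell$, after the greedy pass you have no control on how many times any particular edge or vertex is traversed beyond the generic $p+2m$ bound, hence no labelling-dependent bound on $\mew$ or $\mvw$. (In fact the first item already follows from Corollary~\ref{corollary:bound-general} alone, since $\mlw(G)+m\le 5m\le 3(m+\Delta x)$ using $\Delta\ge 2$ and $x\ge m$; the real content lies in the other two.) Your claim that ``the only conflicts the greedy pass must repair are those at vertices on the boundary of the walk'' is also false: the paths you insert to connect components of the extra-edge multigraph~$H$, or to repair its odd-degree vertices, traverse arbitrary interior vertices of $G$ and alter their degrees, creating fresh conflicts all along those paths, not just at the boundary.

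The paper's construction is different and does \emph{not} invoke the greedy conflict-repair step at all. It fixes a closed walk $W$ traversing every edge of $G$ exactly twice (an Eulerian tour of the multigraph obtained by doubling each edge), sets $q=\lfloor 3\Delta/2\rfloor$, and builds $W'$ by following $W$ and, at the second traversal of each edge $uv$, inserting exactly $q(\ell(uv)-1)$ half-turns on $uv$. This yields $d_{G+W'}(u)=3d_G(u)+2qc_u$ where $c_u=\sigma_\ell(u)-d_G(u)$. The amplification factor $q$ is the idea you are missing: it is chosen so that for adjacent $u,v$ with $d_G(u)\ne d_G(v)$ one has $|2q(c_u-c_v)|\ge 2q>3\Delta-2>|3(d_G(u)-d_G(v))|$, so the base-walk and half-turn contributions can never cancel; and when $d_G(u)=d_G(v)$, properness of $\ell$ gives $c_u\ne c_v$ directly. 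The walk $W'$ is therefore irregularising with no further fixing, and all three upper bounds follow by straightforward counting of edges, edge-multiplicities, and vertex-incidences in $W'$ against the corresponding optimal~$\ell$.
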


\begin{proof}
Since the lower bounds follow from Observation~\ref{observation:connections-with-labellings}, we focus on proving the upper bounds. Our ideas here mainly consist in following a guiding walk of $G$ (as in the proof of Theorem~\ref{theorem:bound-general}) to perform half-turns, the main difference here being that these half-turns are performed w.r.t.~a proper labelling of $G$.

In what follows, we need a particular closed walk $W$ of $G$, which can e.g.~be obtained as follows. Recall first that, since $G$ is nice, then $G$ is connected. Let $G'$ be the multigraph obtained from $G$ by replacing every edge by two parallel edges (or, in other words, by a cycle of length $2$). Then $G'$ is connected and every vertex of $G'$ has even degree, so, by Euler's Theorem, $G'$ has an Eulerian tour $W'$. Let now $W$ be the walk of $G$ obtained by mapping each edge in $W'$ to its corresponding edge in $G$. Then, $W$ is a closed walk of $G$ which traverses every edge exactly twice.



Now let $\ell$ be any proper labelling of $G$, and set $q=\lfloor 3\Delta/2 \rfloor$. Consider $W'$, the walk of $G$ obtained by following $W$ and, whenever traversing an edge $uv$ for the last (second) time (say, going from $u$ to $v$), performing $q \cdot (\ell(uv)-1)$ half-turns of the form $vuv$. 

For every vertex $u$ of $G$, set $c_u=\sigma_\ell(u)-d_G(u)$ (where, recall, $\sigma_\ell(u)$ is the sum of labels assigned by $\ell$ to edges incident to $u$; and thus $c_u$ is the number of new edges incident to $u$ in the associated multigraph, that is, not taking into account the original edges incident to $u$ in $G$).
Since $\ell$ is proper, for every edge $uv$ of $G$ we have $d_G(u)+c_u \neq d_G(v)+c_v$. We claim that $u$ and $v$ also have distinct degrees in $G+W'$; that is $$d_{G+W'}(u)=3d_G(u)+2qc_u \neq 3d_G(v)+2qc_v=d_{G+W'}(v).$$ To see this is true, consider two cases:
\begin{itemize}
    \item If $d_G(u) = d_G(v)$, then the inequality holds since $2qc_u \neq 2qc_v$ (since $c_u \neq c_v$).
    
    \item If $d_G(u) \neq d_G(v)$, then, for the inequality to hold, we need $$3(d_G(u)-d_G(v))+2q(c_u-c_v) \neq 0.$$ Clearly, this holds when $c_u = c_v$ since $d_G(u) \neq d_G(v)$. 
    Now, when $c_u \neq c_v$, we have $|2q(c_u-c_v)|\geq 2q > 3\Delta-2$, and $|3(d_G(u)-d_G(v))|\leq 3(\Delta-1) < 3\Delta - 2$, so the inequality above holds.
\end{itemize}

\noindent Thus, $W'$ is irregularising. 
The claimed upper bounds now follow by applying the arguments above with $\ell$ being any proper labelling with the corresponding property.
\end{proof}

Note that we can actually deduce slightly better upper bounds from the proof of Theorem~\ref{theorem:bounds-parameters-greedy-proper}; however, the stated bounds are better looking, which is why we keep them as is.

\subsection{On the shape of general walks in graphs}

Before continuing with upcoming Theorem~\ref{theorem:bound-chromatic}, which we prove in a similar way as Theorem~\ref{theorem:bound-general}, we provide Proposition~\ref{proposition:walk_shapes} below which somewhat justifies why it might be difficult to establish a general result better than Corollary~\ref{corollary:bound-general}.
More precisely, we show that, in any graph, all walks, through adding half-turns, build upon some walk in which edges appear at most twice. In other words, any walk can be obtained by adding half-turns along some walk traversing each edge at most twice.
To make our point more formal, we introduce some definitions.

Given a graph $G$ and a walk $W=u_0\dots u_l$ of $G$, another
walk $W'$ is said to \emph{follow} $W$ if there is a sequence $(i_0,\dots,i_{l-1})$ of non-negative integers such that $$W'=u_0(u_1u_0)^{i_0}\dots u_{l-1}(u_lu_{l-1})^{i_{l-1}}u_l$$ (where, when writing $e^i$ for some edge $e$ and $i \geq 0$, we mean that $i$ half-turns are performed along $e$).
Put differently, $W'$ can be obtained from $W$ by traversing $W$ from start to finish and adding half-turns along the way. Given two walks $W_1$ and $W_2$,
we say that $W_1$ and $W_2$ are \emph{equivalent} if their underlying edge multisets are equal. For instance, if we consider as $G$ the path $P_3=u_0u_1u_2u_3$ of length
$3$ and the walks $W_1=u_1u_0u_1u_2$, $W_2=u_2u_1u_0u_1$, and $W_3=u_0u_1u_2$
of $G$, then $W_1$ and $W_2$ are equivalent
but $W_3$ is equivalent to none of them. Note that a walk is irregularising if and only if all its equivalent walks are irregularising.

\begin{proposition}\label{proposition:walk_shapes}
  Let $G$ be a graph, and $W$ be a walk of $G$. Denote by $E_o$ and $E_e$ the sets of edges traversed by $W$
  an odd number of times, and a non-zero even number of times, respectively. Then, $W$ is equivalent to a walk $W_S$ that follows a walk $S$ in which each edge of $E_o$ appears once, and each edge of $E_e$ appears at most twice.
\end{proposition}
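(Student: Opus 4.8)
The plan is to work on the level of edge multisets and connectivity, producing the walk $S$ by a local reduction argument, and then showing that $W$ can be recovered from a walk following $S$ via half-turns. First I would record the parity constraints: for the multiset of edges $M$ underlying $W$ (with multiplicities), let $E_o$ be the edges appearing an odd number of times and $E_e$ those appearing a non-zero even number of times. I would form the candidate ``core'' edge multiset $M_S$ defined by: each edge of $E_o$ appears once, each edge of $E_e$ appears exactly twice. The key structural claim is that $M_S$ supports a \emph{closed-or-open walk} $S$ that uses each edge exactly the stated number of times, with the same endpoints as $W$; equivalently, that the multigraph $H$ on $V(G)$ with edge multiset $M_S$ is connected on its non-isolated vertices (after adding, if $W$ is not closed, a virtual edge between the two endpoints of $W$) and has the right degree parities. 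Connectivity of $H$ follows because $W$ itself is connected and traverses exactly the edges of $E_o\cup E_e$; the set of non-isolated vertices of $H$ equals the set of vertices visited by $W$ (an edge in $M$ has nonzero multiplicity iff it is in $E_o\cup E_e$), so $H$ is connected. The degree-parity condition is exactly what is needed for an Eulerian-type walk: in $H$, the degree of a vertex $v$ has the same parity as its degree in the multigraph underlying $W$ (doubling $E_e$-edges changes nothing mod $2$, and $E_o$-edges keep multiplicity $1$ mod $2$); and that parity is odd precisely at the endpoints of $W$ when $W$ is open, and everywhere even when $W$ is closed — the standard handshake fact for walks. Hence $H$ admits an Eulerian trail $S$ from $u_0$ to $u_l$ (or an Eulerian circuit if $u_0=u_l$), which is the desired walk.

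The second half is the ``follows'' part: I must exhibit $W_S$ equivalent to $W$ that follows $S$. The idea is that $W$ and $S$ have edge multisets $M$ and $M_S$ with $M_S \subseteq M$ and $M \setminus M_S$ consisting, for each edge $e$, of an even number of extra copies (for $e\in E_o$, multiplicity $m_e$ is odd, $m_e-1$ is even; for $e\in E_e$, $m_e-2$ is even; for $e\notin E_o\cup E_e$, $m_e=0$). Since $S$ visits every vertex incident to an edge of $M\setminus M_S$ (that vertex is non-isolated in $H$), and the extra copies come in even batches, I can insert, at the first moment $S$ passes through an endpoint of such an edge $e=vw$, the block $(wv)^{(m_e - s_e)/2}$ of half-turns, where $s_e\in\{1,2\}$ is the multiplicity of $e$ in $M_S$; this yields a walk $W_S$ that follows $S$ and whose edge multiset is exactly $M$, i.e.\ $W_S$ is equivalent to $W$. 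One must check half-turns are legal (they are, since $vw\in E(G)$ because it is traversed by $W$) and that the insertions can all be carried out along a single traversal of $S$ (they can: process edges of $M\setminus M_S$ in the order their endpoints are first reached along $S$, each insertion being a local detour that does not disturb later positions).

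I expect the main obstacle to be the bookkeeping in the Eulerian-trail step when $W$ is \emph{not} closed and, more subtly, handling the degenerate/edge cases cleanly: a vertex of $H$ with all incident multiplicities even (so even degree) but which is an endpoint of $W$ only if $u_0=u_l$ — one must be careful that the parity statement ``odd-degree vertices of $H$ are exactly $\{u_0,u_l\}$ when $u_0\neq u_l$'' is argued from the walk structure of $W$ rather than assumed. Also worth a remark: if $E_o\cup E_e=\emptyset$ then $W$ is a (possibly empty) sequence of half-turns and the statement is vacuous or trivial with $S$ the trivial walk; it is cleanest to dispose of this up front. A minor care point is that Euler's theorem as usually stated needs the graph connected \emph{ignoring isolated vertices}, which is why I restrict attention to the non-isolated part of $H$ throughout. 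Everything else is routine: the equivalence ``$W_S$ follows $S$ and has edge multiset $M$'' is immediate once the insertion blocks are specified, and ``equivalent'' is by definition equality of edge multisets.
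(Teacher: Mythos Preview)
Your approach is correct and genuinely different from the paper's. The paper argues by direct manipulation of $W$: it picks an edge $e$, decomposes $W$ into maximal $e$-free subwalks separated by traversals of $e$, and then rearranges those subwalks (classified by whether they run $x\to x$, $y\to y$, or $x\to y$) so that all occurrences of $e$ are gathered into one or two consecutive blocks; iterating over all edges yields the result. Your route instead builds the reduced multigraph $H$ with multiset $M_S$, checks the Eulerian parity/connectivity conditions, extracts an Eulerian trail $S$ between the endpoints of $W$, and pads with half-turns. Your argument is cleaner and more conceptual (Euler's theorem does the heavy lifting), while the paper's hands-on rearrangement gives finer control over the shape of the resulting walk, which is what they exploit in the subsequent refinement for paths (their Lemma on the shape of walks in $P_n$).

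One small but genuine correction: your half-turn insertion is misplaced. The ``follows'' relation only permits half-turns on the edge $s_k s_{k+1}$ at position $k$ of $S$, i.e.\ blocks of the form $(s_{k+1}s_k)^{i_k}$ inserted right after the traversal of that edge. Inserting $(wv)^j$ ``at the first moment $S$ passes through an endpoint $v$ of $e$'' need not produce a walk that follows $S$, because the next edge of $S$ at that moment may not be $vw$. The fix is immediate: since every $e\in E_o\cup E_e$ is actually traversed by $S$ (it has positive multiplicity in $M_S$), insert the $(m_e-s_e)/2$ half-turns at any one position where $S$ traverses $e$; this both respects the ``follows'' definition and gives the correct total multiplicity $s_e+2\cdot(m_e-s_e)/2=m_e$. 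With that adjustment your proof goes through; the degenerate and endpoint cases you flag (empty $W$, closed vs.\ open, isolated vertices) are handled exactly as you describe.
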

\begin{proof}
  We replace $W$ by an equivalent walk to gather occurrences of each edge.
  
  First, we show how this is done for individual edges.
  Let $e=xy$ be an edge traversed by $W$. Assume that $e$ occurs $n\ge 1$
  times in $W$ and let $W_1,\dots,W_{n+1}$ be the largest subwalks of $W$ not containing $e$; that is,
  \[
    W=W_1e_1W_2e_2\dots W_ne_nW_{n+1},
  \]
  with $e_i$ being oriented edges in $\{xy,yx\}$. Up to reversing them, the subwalks $W_i$ for $i\notin\{1,n+1\}$ can be partitioned into three sets:
  the ones in $\mathcal{W}_x$ going from $x$ to $x$, the ones in $\mathcal{W}_y$ going from $y$ to $y$, and those in $\mathcal{W}_{xy}$
  going from $x$ to $y$. W.l.o.g., suppose that $W_1$ goes from the first vertex of $W$ to $x$. The walk $W_{n+1}$ can start either from $x$ or from $y$, and then goes on to the last vertex of $W$. These two cases depend on the parity of $n+|\mathcal W_{xy}|$: if this number is even then $W_{n+1}$ must start from $x$, and if it is odd it starts from $y$. This fact will justify that equivalent walks below are well-defined.
  
  \begin{itemize}
      \item Assume first that $\mathcal{W}_{xy}$ is not empty. In this case, choose $W'\in\mathcal{W}_{xy}$, and replace
  $W$ by
  the equivalent walk
  \[
      W_1\left( \prod_{W_x\in\mathcal{W}_x}W_x \right)W'\left( \prod_{W_y\in\mathcal{W}_y}W_y \right) e^n\left(\prod_{W_{xy}\in\mathcal{W}_{xy}\backslash\{W'\}}W_{xy}\right)W_{n+1},
  \]
  where products are concatenation of walks, and the $n$ occurences of $e$, but also the walks under the products, are oriented in the right ways for the final walk to be well defined (e.g.~$e^1=yx$, $e^2=yxy$ and not $e^2=yxyx$, etc).
  
  \item Assume now that $\mathcal{W}_{xy}$ is empty. If $n$ is odd, then replace $W$ by the equivalent walk
  \[
    W_1\left( \prod_{W_x\in\mathcal{W}_x}W_x \right)e^n\left( \prod_{W_y\in\mathcal{W}_y}W_y \right)W_{n+1}.
  \]
  If $n$ is even, then replace $W$ by the equivalent walk
  \[
    W_1\left( \prod_{W_x\in\mathcal{W}_x}W_x \right)e\left( \prod_{W_y\in\mathcal{W}_y}W_y \right)e^{n-1}W_{n+1}.
  \]
  \end{itemize}
  
  The fact that the parts $W_i$ of $W$ are not modified by this process allows us to apply it recursively to each edge of $W$.
  At the end, we are left with a walk in which each edge appears in at most two places.
\end{proof}

 As we follow an irregularising walk $W$, when seeing a vertex $u$ for the last time, $W$ must get rid of conflicts between $u$ and its neighbours that will not be seen again along $W$, and by Proposition~\ref{proposition:walk_shapes} this must be done through half-turns.
 From this, designing an optimal irregularising walk as in the proof of Theorem~\ref{theorem:bound-general} can be done following a greedy approach in which as little half-turns as necessary are added each step. This approach, provided it is performed along the right ``guiding walk'', is sometimes optimal; this will, notably, be illustrated later on by the cases of paths and cycles (Theorem~\ref{theorem:paths} and Corollary~\ref{corollary:cycles}). This approach is not optimal in general, however: it may happen that adding extra half-turns (\textit{i.e.}, more than the minimum sufficing in the greedy procedure) at some point prevents
 the addition of many of them later on. A first illustration of that is the refinement of Theorem~\ref{theorem:bound-general} and Corollary~\ref{corollary:bound-general} for graphs of bounded chromatic number, which we now provide.

\subsection{Graphs of bounded chromatic number}

In the following, we show that the proof of Theorem~\ref{theorem:bound-general} can be refined for graphs with given chromatic number. Recall that, for a graph $G$, a \textit{proper $k$-vertex-colouring} is a partition of $V(G)$ into $k$ stable sets, and that the \textit{chromatic number} of $G$, denoted $\chi(G)$, is the smallest $k \geq 1$ such that $G$ admits proper $k$-vertex-colourings.

\begin{theorem}\label{theorem:bound-chromatic}
Let $G$ be a nice connected graph of order $n$, maximum degree $\Delta$, and chromatic number $k$,
and $W$ be a closed walk of length~$p$ of $G$ going through all vertices. 
Then, $$\mlw(G) \leq p + (n-1)(2k-2)+2\Delta.$$
\end{theorem}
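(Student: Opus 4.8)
The plan is to mimic the greedy construction from the proof of Theorem~\ref{theorem:bound-general}, but to exploit a proper $k$-vertex-colouring of $G$ in order to control, more tightly, how many half-turns are needed when a vertex is treated for the last time. Fix a proper $k$-colouring $c:V(G)\to\{1,\dots,k\}$ and let $W=u_0\dots u_{p-1}u_p$ be the given closed walk with $u_0=u_p$, $d(u_0)>1$, and $v_1=u_{p-1}$, $v_2$ two distinct neighbours of $u_0$, exactly as before. We follow $W$, and whenever we reach a vertex $u_i$ for the last time along $W$ (and $u_i\neq v_2$, $i<p$), we go to $u_{i+1}$ and perform half-turns of the form $u_{i+1}u_iu_{i+1}$ to make sure $d_{G+W'}(u_i)$ differs from the (final) degrees of those neighbours of $u_i$ that have already been frozen. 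The key new idea is that we do not insist on merely \emph{avoiding} the degrees of the already-frozen neighbours: instead, we push $d_{G+W'}(u_i)$ into a residue class modulo some fixed modulus $M$ that depends only on $c(u_i)$ — concretely, aim for $d_{G+W'}(u_i)\equiv r_{c(u_i)}\pmod{M}$ for a fixed injection $j\mapsto r_j$ from colours to residues. Since adjacent vertices have different colours, they then land in different residue classes and hence have different degrees in $G+W'$ automatically, with no need to ever ``see'' or test against a neighbour's actual value.

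The arithmetic to make this work: a single half-turn along $u_iu_{i+1}$ raises $d(u_i)$ by $2$ (it also raises $d(u_{i+1})$ by $2$, but $u_{i+1}$ is treated later, so this is harmless, exactly as in Theorem~\ref{theorem:bound-general}). Thus, from any starting value, at most $M-1$ half-turns of step size $2$ suffice to reach the target residue class, provided $M$ is odd (so that $2$ generates $\mathbb Z/M\mathbb Z$); taking $M=2k-1$ and residues $r_1,\dots,r_k$ pairwise distinct mod $2k-1$, each last-visit treatment of an internal vertex costs at most $2(k-1)$ half-turns, i.e.\ at most $2(k-1)$ extra edges. There are at most $n-1$ internal vertices to treat this way (the start/end vertex $u_0=u_p$ is handled separately), giving the $(n-1)(2k-2)$ term. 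Here one should double-check the residues are reachable given that a half-turn changes the degree by $2$ and that $\gcd(2,2k-1)=1$, so every residue class mod $2k-1$ is hit within $k-1$ half-turns from any starting point — this is the only genuinely new computation, and it is short.

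The remaining $2\Delta$ term handles the two special vertices $u_p=u_0$ and $v_2$, which cannot be treated by the ``next-edge'' trick (there is no $u_{p+1}$, and $v_2$ is deliberately postponed). As in Theorem~\ref{theorem:bound-general}, when processing $i=p-1$ we perform, along $u_{p-1}u_p$, enough extra half-turns so that $d_{G+W'}(u_p)$ already differs from $d_{G+W'}(v_2)$ (this is cheap and absorbed into the $2(k-1)$ budget for $u_{p-1}$, or can be counted separately with a constant); then, at $i=p$, we perform half-turns along $u_pv_2$ to fix both $u_p$ and $v_2$ simultaneously against \emph{all} their remaining neighbours. Since $u_p$ has at most $\Delta$ neighbours and $v_2$ has at most $\Delta$ neighbours, and each half-turn along $u_pv_2$ bumps both degrees by $2$ in lockstep (so their difference is preserved and they never collide with each other), at most $\Delta-1$ half-turns suffice to dodge the at most $\Delta-1$ frozen degrees around $u_p$, and similarly for $v_2$ — one can be slightly wasteful and bound this final phase by $2\Delta$ half-turns, i.e.\ $2\Delta$ extra edges. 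Combining: $\|W'\|\le p + (n-1)(2k-2) + 2\Delta$, which is the claimed bound, and $W'$ is irregularising because adjacent vertices sit in distinct residue classes mod $2k-1$ (internal vertices) or were explicitly separated (the $u_p,v_2$ pair and their neighbourhoods).

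The main obstacle I anticipate is the bookkeeping of \emph{when} a vertex is ``frozen'': one must argue, just as in Theorem~\ref{theorem:bound-general}, that after the last visit to $u_i$ no later half-turn changes $d_{G+W'}(u_i)$, so that targeting the residue class $r_{c(u_i)}$ at that moment is final — and, crucially, that the half-turns $u_{i+1}u_iu_{i+1}$ performed while treating $u_i$, which do alter $d(u_{i+1})$, do not retroactively spoil any \emph{already-frozen} vertex. Since those half-turns only touch $u_i$ and $u_{i+1}$, and $u_i$ is being frozen now while $u_{i+1}$ is still ``open'' (it will be treated on its own last visit, at which point we re-aim its residue), no frozen vertex is disturbed; the one subtlety is the final pair $u_p,v_2$, which is why they are segregated and handled together at the very end. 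I would also make sure to state explicitly that if some already-frozen neighbour of $u_i$ shares $u_i$'s colour this cannot happen (proper colouring), so the residue-class targeting is never self-defeating.
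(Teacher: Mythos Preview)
Your overall strategy matches the paper's: follow $W$, and when a vertex $u_i$ is visited for the last time, use half-turns along $u_iu_{i+1}$ to push $d_{G+W'}(u_i)$ into a target class determined by the colour $c(u_i)$, postponing $v_2$ and handling the pair $u_p,v_2$ at the end. The bookkeeping you describe (frozen versus open vertices) is exactly right.

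The gap is in the arithmetic of your residue scheme. With $M=2k-1$ and a step of $+2$ per half-turn, hitting a prescribed single residue class modulo $M$ can take up to $M-1=2k-2$ half-turns, not $k-1$: since $2$ generates $\mathbb Z/(2k-1)\mathbb Z$, the values $x,x+2,\dots,x+2(2k-2)$ run through all $2k-1$ residues, and the worst case genuinely needs all $2k-2$ steps (e.g.\ $k=2$, $M=3$: from residue $0$ you reach residue $1$ only after two half-turns, via $0\to 2\to 1$). Hence each treated vertex costs up to $2(2k-2)=4(k-1)$ extra edges, and your main term becomes $(n-1)(4k-4)$, twice the claimed bound. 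The sentence ``every residue class mod $2k-1$ is hit within $k-1$ half-turns'' is simply false. (Relatedly, you repeatedly equate ``$t$ half-turns'' with ``$t$ extra edges''; each half-turn contributes two edges.)

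The paper repairs this by targeting, for each colour, not a single residue but a \emph{pair of consecutive integers}, periodically: put $\{1,2\}$ in class $0$, $\{3,4\}$ in class $1$, \dots, $\{2k-1,2k\}$ in class $k-1$, then repeat with period $2k$. Each class then meets both parities and recurs every $2k$, so from any starting value at most $k-1$ half-turns (that is, $2(k-1)$ edges) land you in the desired class. With this one change your argument goes through: $(n-3)(2k-2)$ extra edges for the generic vertices, at most $2(2k-1)$ extra edges at step $i=p-1$ (to place $u_{p-1}$ in its class \emph{and} separate $d(u_p)$ from $d(v_2)$), and a back-and-forth walk along $u_pv_2$ of length at most $d(u_p)+d(v_2)-2\le 2\Delta-2$ for the finale, summing to $p+(n-1)(2k-2)+2\Delta$.
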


\begin{proof}
The proof is essentially similar to that of Theorem~\ref{theorem:bound-general}, with a little tweak. Namely, we now consider a proper $\{0,\dots,k-1\}$-vertex-colouring
$\phi$ of $G$, and the half-turns we perform to build $W'$ following $W$ will be w.r.t.~$\phi$.
That is, we partition the set of positive integers into sets $\mathbb{N}_0,\dots,\mathbb{N}_{k-1}$, as follows: we add $1$ and $2$ to $\mathbb{N}_0$, then $3$ and $4$ to $\mathbb{N}_1$, then $5$ and $6$ to $\mathbb{N}_2$, and so on, until we add $2k-1$ and $2k$ to $\mathbb{N}_{k-1}$; we then repeat this process by adding $2k+1$ and $2k+2$ to $\mathbb{N}_0$, $2k+3$ and $2k+4$ to $\mathbb{N}_1$, and so on and so forth to get a partition of all positive integers into $k$ sets (with the property that if $2x \in \mathbb{N}_i$ then $2x-1 \in \mathbb{N}_i$). That is, $x\in\mathbb N_i$ if, and only if, $\lfloor (x+1)/2 \rfloor = (i+1)\bmod k$.
And now, for the walk $W'$ we desire to design from $W$, we require that, for most vertices $u$ of $G$, we have $d_{G+W'}(u) \in\mathbb N_{\phi(u)}$. Obviously, if this was achieved for all vertices, then $W'$ would be irregularising, since for all edges $uv$ of $G$ we would get $d_{G+W'}(u) \in \mathbb N_{\phi(u)}$ and $d_{G+W'}(v) \in \mathbb N_{\phi(v)}$, but $\phi(u) \neq \phi(v)$ since $\phi$ is a proper vertex-colouring of $G$, so $\mathbb N_{\phi(u)} \cap \mathbb N_{\phi(v)} = \emptyset$ and thus $d_{G+W'}(u) \neq d_{G+W'}(v)$. Again, this extra condition can be achieved easily (by adding half-turns to $W'$ while following $W$) for all vertices of $G$ but $u_0=u_p$ and $v_2$; we deal with this issue in the exact same way as in the proof of Theorem~\ref{theorem:bound-general} (through extra modifications we perform along $u_{p-1}u_p$ and $u_pv_2$). In particular, recall how we chose $u_0$ and $v_2$.

So, let us reconsider the same cases as in the proof of Theorem~\ref{theorem:bound-general}. 

\begin{itemize}
	\item If $i < p-1$ and $u_i$ is to be reconsidered later in the process, then, again, we just skip $u_i$ and proceed with $u_{i+1}$. This increases $\|W'\|$ by $1$.
	
	\item If $i < p-1$ and $u_i$ is not going to be reconsidered later in the process, then we first go to $u_{i+1}$. Then, 
	we perform sufficiently many half-turns from $u_{i+1}$ to $u_i$ so that, in $G+W'$, we have $d(u_i) \in \mathbb{N}_{\phi(u_i)}$. Given how $\mathbb{N}_0,\dots,\mathbb{N}_{k-1}$ were constructed, at most $k-1$ such half-turns suffice (indeed, note that the $\mathbb{N}_i$'s contain pairs of consecutive integers, while performing a half-turn increases two degrees by~$2$). So, $\|W'\|$, in total, is increased by at most $2(k-1)+1=2k-1$.
	
	\item For the remaining cases ($i=p-1$ and $i=p$), we can mostly proceed as in the proof of Theorem~\ref{theorem:bound-general}. To begin with, for $i=p-1$ we can first go to $u_p$ and perform half-turns from $u_p$ to $u_{p-1}$ so that $d(u_{p-1}) \in \mathbb{N}_{\phi(u_{p-1})}$ and the resulting degree of $u_p$ is not the same as that of $v_2$. In the worst-case scenario, this requires to perform up to $2k-1$ such half-turns. After that, for $i=p$, it suffices to prolong $W'$ by a walk going back and forth between $u_p$ and $v_2$ (note that any such walk alters their degrees the same way, which thus remain different) until $u_p$ and $v_2$ are not in conflict with their other neighbours in $G+W'$. Note that there is such a walk between $u_p$ and $v_2$ of length at most $d_G(u_p)-1+d_G(v_2)-1 \leq 2\Delta-2$.
\end{itemize}

All in all, we thus have 
\begin{align*}
\|W'\| &\leq \|W\| + (n-3)(2k-2)+2(2k-1)+d(u_p)-1+d(v_2)-1\\
&= p + (n-1)(2k-2)+2\Delta.
\end{align*}

This concludes the proof.
\end{proof}

Again, as for Corollary~\ref{corollary:bound-general}:

\begin{corollary}\label{corollary:bound-chromatic}
If $G$ is a nice connected graph of order $n$, maximum degree $\Delta$, and chromatic number $k$, 
then $$\mlw(G) \leq 2k(n-1)+2\Delta.$$
\end{corollary}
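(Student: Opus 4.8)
The plan is to obtain this as an immediate consequence of Theorem~\ref{theorem:bound-chromatic}, in exactly the same way Corollary~\ref{corollary:bound-general} follows from Theorem~\ref{theorem:bound-general}. All the substantive work already lives in Theorem~\ref{theorem:bound-chromatic}; what remains is only to feed it a sufficiently short closed walk of $G$ that visits every vertex, and then simplify the resulting expression.

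First I would fix a spanning tree $T$ of $G$, which exists because $G$ is nice and hence connected. Performing a depth-first search of $T$ from an arbitrary root yields a closed walk $W$ of $T$, and therefore of $G$, that traverses each of the $n-1$ edges of $T$ exactly twice; thus $\|W\|=2(n-1)$, and $W$ goes through all vertices of $G$ since $T$ is spanning. So $W$ is a closed walk of length $p=2(n-1)$ passing through every vertex, which is precisely the hypothesis required by Theorem~\ref{theorem:bound-chromatic}.

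Applying Theorem~\ref{theorem:bound-chromatic} with this $W$ then gives
$$\mlw(G) \leq p + (n-1)(2k-2) + 2\Delta = 2(n-1) + (n-1)(2k-2) + 2\Delta = 2k(n-1) + 2\Delta,$$
which is the claimed bound. I do not expect any real obstacle here: the only content is the choice of the DFS-based guiding walk, everything else being the already-established Theorem~\ref{theorem:bound-chromatic}. If a sharper constant were wanted, the natural improvement, as already noted after Corollary~\ref{corollary:bound-general}, would be to replace the DFS walk by a shorter closed spanning walk whenever $G$ admits one (for instance a Hamiltonian cycle, giving $p=n$), and similarly to exploit parts of $G$ that are already locally irregular so that $W$ need not visit them.
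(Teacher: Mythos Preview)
Your proof is correct and is exactly the approach the paper takes: it simply feeds the DFS closed walk of length $p=2(n-1)$ from a spanning tree into Theorem~\ref{theorem:bound-chromatic} and simplifies, just as Corollary~\ref{corollary:bound-general} was derived from Theorem~\ref{theorem:bound-general}.
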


As discussed after Corollary~\ref{corollary:bound-general}, we do not expect Corollary~\ref{corollary:bound-chromatic} to be best possible in general, as there are several ways to optimize our arguments. To see how Corollary~\ref{corollary:bound-chromatic} differs from previous Corollary~\ref{corollary:bound-general}, one should consider dense graphs
having bounded chromatic number. As an example, bipartite graphs ($k=2$) with the most edges are balanced bipartite graphs $K_{n,n}$, which have order $N=2n$ and size $\left(\frac{N}{2}\right)^2=\frac{N^2}{4}$; thus, Corollary~\ref{corollary:bound-general} gives $\mlw(K_{n,n}) \leq N^2$. Meanwhile, we have $\chi(K_{n,n})=2$ and $\Delta(K_{n,n})=n=\frac{N}{2}$; thus, Corollary~\ref{corollary:bound-chromatic} gives $\mlw(K_{n,n}) \leq 5N-4$. For sparser graphs, however, the bounds of Corollaries~\ref{corollary:bound-general} and~\ref{corollary:bound-chromatic} are likely more comparable. 




\section{Particular classes of graphs}\label{section:particular-classes}

We investigate here how far from optimal the bounds provided by Corollaries~\ref{corollary:bound-general} and~\ref{corollary:bound-chromatic} are. As mentioned right after the former corollary,
we are aware that, already, through easy modifications of our proofs, we could establish better bounds. To get a better grasp on how far these two bounds actually are from optimal, we focus in what follows on simple and common classes of graphs.

\subsection{Complete graphs}

For any $n \geq 1$, we denote by $K_n$ the complete graph of order $n$. Recall that, in our context, we restrict ourselves to $n \geq 3$ since we are interested solely in nice graphs. Observe also that the case of complete graphs is interesting in this field, since they are regular. Here, we are able to determine the exact value of $\mlw$.

\begin{theorem}\label{theorem:complete}
We have:
\begin{itemize}
    \item $\mlw(K_3)=3$;
    \item $\mlw(K_n)=\frac{n^2-5n+10}{2}$ for any $n \geq 4$.
\end{itemize}
\end{theorem}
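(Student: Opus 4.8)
The plan is to prove both parts by exhibiting, for each $K_n$, an explicit irregularising walk of the claimed length, and then matching this with a lower bound argument. For $K_3$: here $\Delta=2$, the graph is a triangle, and since it is regular with all degrees equal to $2$, we need to break all three conflicts; the closed walk traversing each edge once has length $3$, but it increases every degree by $2$, leaving the triangle regular. One checks by hand that no walk of length $1$ or $2$ can work (a walk of length $\le 2$ touches at most $2$ edges and can raise degrees of at most $3$ vertices by small amounts, and a short case analysis shows a conflict always persists), while a walk of length $3$ — say going around the triangle $u_0u_1u_2u_0$ and then... no, that still fails; instead take a walk like $u_0u_1u_0u_1u_2$ of length $4$? — here I would carefully lay out the length-$3$ walk that works, e.g.\ $u_1u_0u_1u_2$, which adds edges making degrees $(d(u_0),d(u_1),d(u_2))=(3,5,3)$ — wait, $u_2$ and $u_0$ conflict. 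The correct short walk is $u_0u_1u_2$ — length $2$? That gives degrees $(3,4,3)$, still a conflict on $u_0,u_2$ which are adjacent. So a careful enumeration is genuinely needed here; the honest statement is that length $3$ is achievable (e.g.\ $u_0u_1u_0u_2$ gives... $d(u_0)=4, d(u_1)=3, d(u_2)=3$, conflict) — I expect the actual witness to be $u_1u_0u_2u_0$ or similar and I would verify it in the final writeup.

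**The general case $n \ge 4$: upper bound.**

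For $n\ge 4$ I would use the approach of Theorem~\ref{theorem:bound-general}, but exploit heavily that $K_n$ has a short spanning closed structure and that most of the ``$2m$'' slack is wasteful here. More precisely, I would build an irregularising walk directly: pick a Hamiltonian path $v_1v_2\cdots v_n$ as the guiding spine, and along it perform half-turns so that the final degree of $v_i$ lands on a prescribed target value, choosing the targets to be pairwise distinct (which automatically kills \emph{all} conflicts, not just the adjacent ones — convenient since $K_n$ is complete). Since in $K_n$ every original degree is $n-1$, making the $n$ final degrees distinct requires the increments to be $n$ distinct even numbers (half-turns add $2$ each, and traversing the spine adds $1$ to internal degrees and to endpoints differently — this parity bookkeeping is the delicate part). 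The minimum total number of half-turn edges needed to realise $n$ distinct increments, summed with the length of the spine, should come out to exactly $\frac{n^2-5n+10}{2}$ after optimising which vertex gets which target; the $-5n+10$ correction term strongly suggests the optimum is achieved by choosing targets as tightly packed as parity allows, with two or three vertices needing no modification. I would present the explicit walk and then just compute its length.

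**Lower bound.**

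For the matching lower bound, the key observation is that $K_n$ is regular of degree $n-1$, so in any irregularising multigraph $G+W$ the $n$ vertex degrees must be pairwise distinct, hence the multiset of \emph{added} degrees (degree in $G+W$ minus $n-1$) consists of $n$ pairwise distinct nonnegative integers, and moreover an edge of $W$ incident to $v$ contributes to $v$'s added degree. A walk of length $\ell$ contributes a total added degree of exactly $2\ell$ (each edge of $W$ has two endpoints). So $2\|W\| = \sum_v (\text{added degree of } v) \ge 0 + 1 + 2 + \cdots$? — no, because consecutive added degrees cannot both be odd in a way incompatible with adjacency... actually since \emph{all} pairs are adjacent, we just need $n$ distinct nonneg.\ integers, so the minimum sum is $0+1+\cdots+(n-1) = \binom n2$, giving $\|W\|\ge \binom n 2/2$, which is too weak. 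The real constraint must come from parity: the added degree of $v$ has a fixed parity equal to (number of $W$-edges at $v$), and a walk's edge-endpoint incidences at a fixed vertex have parity constraints tied to whether the walk starts/ends there. I expect the sharp lower bound argument to combine (i) distinctness of the $n$ final degrees, (ii) the parity of each added degree being determined, in a walk, by whether that vertex is an endpoint of $W$ (at most two such vertices), forcing at least $n-2$ of the added degrees to be even, hence among $\{0,2,4,\ldots\}$ and pairwise distinct — the $n-2$ smallest available even distinct values sum to roughly $(n-2)(n-3)$, and adding the at-most-two odd ones, halving, and accounting for the spine's own contribution yields exactly $\frac{n^2-5n+10}{2}$. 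The main obstacle I anticipate is getting this parity-and-distinctness counting to match the constant $-5n+10$ precisely, and separately verifying the small case $n=4$ (and $n=3$) directly, since the asymptotic formula's low-order terms are exactly where such arguments are fragile.
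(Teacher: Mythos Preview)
Your lower-bound sketch is essentially correct and matches the paper: in $K_n$ all $n$ increments must be pairwise distinct, and since at most two vertices (the walk's endpoints) can have odd increment, the cheapest admissible multiset of increments is $\{0,1,2,3,4,6,8,\dots,2(n-3)\}$, whose sum is $n^2-5n+10$; as the total increment equals $2\|W\|$, this gives $\|W\|\ge\tfrac{n^2-5n+10}{2}$.

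The genuine gap is in your upper-bound construction. You propose a Hamiltonian path $v_1\cdots v_n$ as spine plus half-turns. But any walk built this way visits every vertex, so every increment is at least~$1$ (endpoints) or at least~$2$ (internal vertices). Hence the increment set cannot contain~$0$, and the cheapest realisable set becomes $\{1,3\}\cup\{2,4,\dots,2(n-2)\}$ with sum $n^2-3n+6$, yielding only $\|W\|\ge\tfrac{n^2-3n+6}{2}$, strictly larger than the target for all $n\ge3$. The paper's construction instead leaves one vertex completely untouched to realise the~$+0$ increment: for $K_4$ the walk $v_2v_3v_4v_3$ (length~$3$) gives increments $\{0,1,2,3\}$, and the inductive step appends $v_4v_2v_5v_6\cdots v_n$ to shift existing increments appropriately while introducing the new vertex. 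Your own parenthetical remark that ``two or three vertices [need] no modification'' is exactly the right instinct, but it is incompatible with a Hamiltonian spine.

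Your $K_3$ treatment also contains an arithmetic slip: the walk $u_0u_1u_0u_2$ you try and reject actually \emph{works}. Its three edges give $u_0$ increment~$+3$, $u_1$ increment~$+2$, $u_2$ increment~$+1$, so the degrees in $K_3+W$ are $5,4,3$, all distinct. The paper uses the equivalent $v_1v_2v_3v_2$. For the lower bound at $n=3$, a length-$2$ walk either has two distinct endpoints with equal increment~$+1$ or is closed with two vertices at~$+2$; either way a conflict survives.
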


\begin{proof}
Note that an irregularising walk $W$ of minimum length of any graph $G$ must minimise the total number of degree alterations over all vertices; that is, 
the sum of all degrees in $G+W$ must be as small as possible.
Thus, for any irregularising walk $W$ of some complete graph $K_n$, note that, in $K_n+W$, 1) there are, compared to $K_n$, either none (if $W$ is closed) or exactly two (otherwise) vertices having their degrees altered by an odd amount, and 2) any two vertices must have their degrees altered by distinct amounts.

For $K_3$, note that any irregularising walk must go through all vertices. From this and the arguments above, the optimal set of degree alterations for $K_3$ is $\{+1,+2,+3\}$, which is realisable: denoting by $v_1,v_2,v_3$ the vertices, consider $W=v_1v_2v_3v_2$. So, $\mlw(K_3)=3$.

For $n \geq 4$, we proceed by induction on $n$.
We prove that, in any $K_n$, there is an irregularising walk which realises the set of degree alterations $$\{+0,+1,+2,+3,+4,+6,+8,\dots,+2(n-3)\},$$ which is the best we can hope for, by earlier arguments. As a base case, for $K_4$, denoting by $v_1,v_2,v_3,v_4$ the vertices, this set is achieved \textit{e.g.}~by the walk $v_2v_3v_4v_3$. Assume now that the claim holds up to some $n-1$, and consider proving it for $n$. Let us denote by $v_1,\dots,v_n$ the vertices of $K_n$. By the induction hypothesis, in the $K_{n-1}$ subgraph of $K_n$ with vertices $v_1,\dots,v_{n-1}$, there is an irregularising walk $W'$ which realises the set of degree alterations $$\{+0,+1,+2,+3,+4,+6,\dots,+2(n-4)\}.$$ W.l.o.g., assume $W'$ modifies the degree of $v_1,\dots,v_{n-1}$ by $+0,+1,+2,+3,+4,+6,\dots,+2(n-4)$, respectively. 
In particular, the end-vertices of $W'$ are $v_2$ and $v_4$. In $K_n$, we extend $W'$ to a walk $W$ by adding (at the end) the edges of the walk $v_4v_2v_5v_6v_7 \dots v_n$. Then $W$ realises the set of degree alterations $$\{+0,+1,+2,+3,+4,+6,\dots,+2(n-3)\};$$ in particular, compared to $K_n$, the degree of $v_n$ is augmented by $1$, the degree of $v_2$ is augmented by $3$, the degree of $v_4$ is augmented by $4$, the degrees of $v_1$ and $v_3$ remain augmented by $0$ and $2$, respectively, while, for all vertices in $\{v_5,\dots,v_{n-1}\}$, the degrees are augmented by $2$ more compared to how they are augmented by $W'$. Thus, the induction hypothesis holds.

In these inductive arguments, note that the length of $W$ is $n-3$ more than that of $W'$. Since $\mlw(K_4)=3$, the result follows.
\end{proof}

In our opinion, an interesting fact in the proof of Theorem~\ref{theorem:complete} lies in the type of arguments we employ, which consist in reasoning in terms of smallest degrees we can achieve in $G+W$, by an irregularising walk $W$ of some graph $G$. Unfortunately, while such arguments are perhaps usable in some contexts, we doubt they generalise to a large extent. Indeed, complete graphs have the property that all their vertices are pairwise adjacent, and thus that all degrees must be pairwise distinct in $G+W$. Complete graphs are also regular, which makes it easier to comprehend how degrees are affected in $G+W$. In graphs with a wider variety of degrees, it is likely that this is harder to understand, in general.

For comparison, note that for $n \geq 3$, by Corollaries~\ref{corollary:bound-general} and~\ref{corollary:bound-chromatic}, the upper bounds on $\mlw(K_n)$ we deduce are $n^2+n-2$ and $2n^2-2$, respectively. 
Thus, there is some difference between optimal Theorem~\ref{theorem:complete} and these general bounds,
although the proof arguments are actually common.
As mentioned earlier, this is rather expected due to the very restricted structure of complete graphs. 




\subsection{Complete bipartite graphs}

For any $n,m \geq 1$, we denote by $K_{n,m}$ the complete bipartite graph with partition classes of size $n$ and $m$.
Again, in this restricted context, we determine the exact value of $\mlw$.

\begin{theorem}\label{theorem:complete-bipartite}
For any $n,m \geq 1$ with $n+m \geq 3$, we have:
\begin{itemize}
	\item $\mlw(K_{n,m})=0$ if $n \neq m$;
	\item $\mlw(K_{n,m})=2n-2$ if $n=m \geq 2$.
\end{itemize}
\end{theorem}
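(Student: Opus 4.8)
The plan is to handle the two cases separately, starting with the easy observation that $K_{n,m}$ with $n \neq m$ is already locally irregular. Indeed, every vertex in the size-$n$ class has degree $m$ and every vertex in the size-$m$ class has degree $n$; since the only adjacencies in $K_{n,m}$ join the two classes, and $n \neq m$, no two adjacent vertices share a degree. Hence the empty walk is irregularising and $\mlw(K_{n,m}) = 0$.

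For the balanced case $K_{n,n}$ with $n \geq 2$, I would first prove the lower bound $\mlw(K_{n,n}) \geq 2n-2$. Here every vertex starts with degree $n$, so $K_{n,n}$ is $n$-regular and very far from locally irregular: for \emph{every} edge $uv$, the endpoints are initially in conflict. Let $W$ be an irregularising walk and let $A$, $B$ denote the two classes. For each edge $uv$ with $u \in A$, $v \in B$, we need $d_{G+W}(u) \neq d_{G+W}(v)$, i.e.\ the number of $W$-edges at $u$ must differ from the number of $W$-edges at $v$. In particular, at most one vertex of each class can have $0$ incident edges of $W$; equivalently, at least $n-1$ vertices of $A$ and at least $n-1$ vertices of $B$ are incident to some edge of $W$. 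A cleaner way to extract the bound: think of the multiset of "degree alterations" $c_w = d_{G+W}(w) - n$; the conflict condition forces, for every pair $(u,v) \in A \times B$, that $c_u \neq c_v$, so the value sets $\{c_u : u \in A\}$ and $\{c_v : v \in B\}$ are disjoint. Since all $c_w \geq 0$ and each class has $n$ vertices, the union of these two sets consists of $2n$ nonnegative integers all distinct within each class but — more importantly — the two sets are disjoint, forcing at least one class to contain a nonzero value for at least $n-1$ of its vertices, indeed forcing $\sum_{w} c_w$ to be reasonably large. Summing $\sum_{u\in A} c_u = \sum_{v \in B} c_v = \|W\|$ (each edge of $W$ contributes one endpoint to each side since $W$ only uses $A$–$B$ edges), and using that $\{c_u\}$, $\{c_v\}$ are disjoint sets of nonnegative integers of size $n$ each with equal sums, a short counting argument gives $\|W\| = \sum_{u \in A} c_u \geq 2n-2$ (the extremal configuration being $\{0,1,2,\dots\}$-type sequences on the two sides interleaved). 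The main obstacle is making this counting argument fully tight and airtight, in particular handling the parity constraint that a non-closed walk alters exactly two vertices by an odd amount while a closed walk alters none by an odd amount.

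For the matching upper bound $\mlw(K_{n,n}) \leq 2n-2$, I would exhibit an explicit irregularising walk of length $2n-2$. Label $A = \{a_1,\dots,a_n\}$ and $B = \{b_1,\dots,b_n\}$. A natural candidate is a walk that visits $a_2, b_2, a_3, b_3, \dots$ performing a nested pattern of half-turns, so as to realise degree alterations on $A$ of the form $0,1,2,\dots$ and on $B$ of the form $0,1,2,\dots$ but shifted so that the two value sets are disjoint. Concretely, one can try the walk $a_1 b_1 a_2 b_1 \dots$ zigzagging to give $a_1$ alteration $0$, $b_1$ alteration $1$, $a_2$ alteration $2$, $b_2$ alteration $3$, and so on, ending with all $2n$ alterations being the distinct integers $0,1,\dots,2n-1$ split as evens on one side and odds on the other; such a walk has total length $\tfrac12\sum_{j=0}^{2n-1} j$ on each side — which is too long — so instead I would use the smaller target set $\{0,1,2,3,4,6,8,\dots\}$ reminiscent of the complete-graph proof, assigning the even values to $A$ and odd-then-even values appropriately to $B$ to keep the two sides disjoint while making $\sum c_u = \sum c_v = 2n-2$. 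I expect the honest construction mirrors the inductive/greedy approach of Theorem \ref{theorem:bound-general} and Theorem \ref{theorem:complete}: build the walk for $K_{n,n}$ from one for $K_{n-1,n-1}$ by appending a short extension that bumps up a couple of new vertices, with careful bookkeeping of which vertex sits at the current end of the walk. The main risk here is an off-by-one or parity mismatch in the extension step, and reconciling it with the requirement that the final walk realises exactly the extremal set forced by the lower-bound analysis.
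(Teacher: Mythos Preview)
Your treatment of $n \neq m$ is fine and matches the paper. For $n = m$, however, both halves are incomplete, and in each the paper's route is much simpler than what you sketch. For the lower bound, the disjoint-supports observation is correct, but without the parity constraint it only yields $\|W\| \geq n$ (e.g.\ $c_u \equiv 1$ on $A$ and $c_v \in \{0,n\}$ on $B$ gives equal sums $n$ with disjoint supports), and you never carry out the parity case analysis you yourself flag as ``the main obstacle''. The paper bypasses all counting: if some $u \in U$ and some $v \in V$ are both missed by $W$, then $d_{G+W}(u) = d_{G+W}(v) = n$ while $uv \in E(G)$; hence $W$ visits every vertex of at least one side, and since a walk in a bipartite graph alternates sides, hitting $n$ distinct vertices on one side forces at least $2n-1$ vertices in the walk sequence, so $\|W\| \geq 2n-2$.

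For the upper bound you cycle through three candidate constructions and complete none; this is the real gap. The paper's walk is explicit and needs no induction: with $U = \{u_1,\dots,u_n\}$ and $V = \{v_1,\dots,v_n\}$, take $W = v_1\, u_1\, v_2\, u_1\, v_3\, u_1 \cdots u_1\, v_n$, of length $2n-2$. In $G+W$ one gets $d(u_1) = 3n-2$, $d(u_i)=n$ for $i \geq 2$, $d(v_1)=d(v_n)=n+1$, and $d(v_i)=n+2$ otherwise, so no $U$--$V$ edge joins equal degrees. The point you are missing is that alterations need not be distinct \emph{within} a side, only \emph{between} sides; the elaborate $\{0,1,2,3,4,6,\dots\}$ patterns from the complete-graph proof are entirely unnecessary here.
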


\begin{proof}
Before starting off, note that the condition $n+m \geq 3$ implies we are dealing with complete bipartite graphs different from $K_2$, thus with nice graphs only.

First off, $K_{n,m}$ is locally irregular whenever $n \neq m$, so $\mlw(K_{n,m})=0$ in those cases.
So, let us now focus on cases where $n=m$.
Recall that $n=m \geq 2$, so $K_{n,m}$ is always nice.

Set $G=K_{n,n}$ for some $n \geq 2$.
We denote by $(U,V)$ the bipartition of $G$, and set $U=\{u_1,\dots,u_n\}$ and $V=\{v_1,\dots,v_n\}$.
Consider any irregularising walk $W$ of $G$.
If there are two vertices $u \in U$ and $v \in V$ which are not traversed by $W$, then, in $G+W$, we have $d(u)=d(v)$, a contradiction.
Since $G$ is a balanced bipartite graph, this implies $\|W\| \geq 2n-2$.
We now give an irregularising walk of length $2n-2$. Start from $v_1$, go to $u_1$, then to $v_2$, then back to $u_1$, then to $v_3$, then back to $u_1$, and so on until reaching $v_n$. This walk $W$ has length $2n-2$. For $n=2$, in $G+W$, all $v_i$'s have odd degree, and all $u_i$'s have even degree, so $W$ is irregularising.
For $n \geq 3$, in $G+W$, all $v_i$'s have degree either $n+1$ or $n+2$, and all $u_i$'s have degree either $n$ or $2n > n+2$, so $W$ is irregularising.
\end{proof}

The proof of Theorem~\ref{theorem:complete-bipartite} is interesting in our opinion, because, just as in that of Theorem~\ref{theorem:complete}, we wonder about the minimum length of an irregularising walk. As mentioned earlier, however, the arguments we employ mostly make sense because of the graph structure considered, and we are not sure they generalise well.

Here, for comparison, note that, for $n \geq 2$, the upper bounds on $\mlw(K_{n,n})$ we derive from Corollaries~\ref{corollary:bound-general} and~\ref{corollary:bound-chromatic} are $2n^2+2n-2$ and $10n-4$, respectively. This illustrates that the bound from Corollary~\ref{corollary:bound-chromatic} is sometimes much better than that from Corollary~\ref{corollary:bound-general}.




\subsection{Paths and cycles}

In what follows, for any $n \geq 1$, for convenience we denote by $P_n$ the path of length~$n$ (\textit{i.e.}, with $n$ edges and, thus, $n+1$ vertices),
and, for any $n \geq 3$, we denote by $C_n$ the cycle of length~$n$. Below, we determine the value of $\mlw$ for paths and cycles. We start by refining Proposition~\ref{proposition:walk_shapes} in the case of paths.

\begin{figure}[!t]
\begin{center}
    \centering
    \includegraphics[scale=0.8]{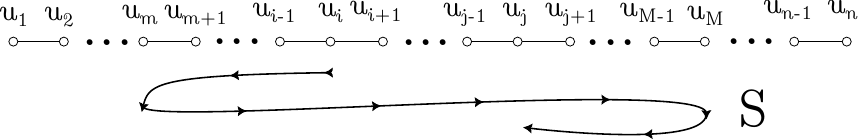}     
    
    \caption{The kind of walk desired in Lemma~\ref{lemma:shape_walk_path}.
    \label{figure:shapeS}}
\end{center}
\end{figure}

\begin{lemma}\label{lemma:shape_walk_path}
Let $n \geq 2$, and let $W$ be a walk of $P_n=u_0 \dots u_n$. Then, there exists a walk of $P_n$ equivalent to $W$ which follows a walk $S$ of the form $$ S=u_iu_{i-1}\dots u_mu_{m+1}\dots u_Mu_{M-1}\dots u_{j}$$ with $m \leq i\le j \leq M$ (see Figure~\ref{figure:shapeS}).
\end{lemma}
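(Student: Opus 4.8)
The plan is to apply Proposition~\ref{proposition:walk_shapes} first, which already gives us a walk equivalent to $W$ that follows a walk $S_0$ in which each edge appears at most twice (and the edges traversed an odd number of times by $W$ appear exactly once). So from now on we may assume $W$ itself follows such an $S_0$, and it suffices to massage $S_0$ into the ``$S$-shape'' while preserving the edge multiset (which preserves the set of edges appearing once versus twice). The key structural fact about $P_n$ is that its edge set is linearly ordered: write $e_k = u_{k-1}u_k$. Let $E_o$ be the set of edges appearing once in $S_0$ and $E_e$ the set appearing twice; what we need to show is that $E_e$ forms two ``prefix/suffix'' intervals relative to the interval of all edges used, i.e.\ that the edges used once form a contiguous block $\{e_{m+1},\dots,e_M\}$ minus two sub-intervals at its ends, and then that the canonical zig-zag walk $S$ of the stated form realises exactly this multiset.

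First I would argue that the set $E$ of all edges traversed by $W$ (equivalently by $S_0$) is an interval $\{e_{m+1},\dots,e_M\}$ of consecutive edges: a walk in a path is connected, and the edge set of a connected subgraph of a path is an interval. This fixes $m$ and $M$. Next I would pin down which edges in this interval are traversed twice. The crucial observation is a parity/connectivity constraint: fix any internal vertex $u_k$ with $m < k < M$, i.e.\ a ``cut vertex'' of the sub-path on $\{u_m,\dots,u_M\}$. The walk $S_0$ crosses between the two sides $\{e_{m+1},\dots,e_k\}$ and $\{e_{k+1},\dots,e_M\}$ only through the single edge $e_k$. Since $S_0$ starts at $u_i$ and ends at $u_j$, a counting argument on the number of times $S_0$ uses $e_k$ gives: $e_k$ is used an odd number of times iff exactly one of $i,j$ lies on each side of $u_k$, and an even (hence, here, exactly $2$ or $0$, but it is nonzero since $e_k\in E$) number of times otherwise. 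Because the multiplicities are at most $2$, this says precisely: $e_k$ appears twice iff $u_k$ is \emph{not} between $u_i$ and $u_j$ (in the linear order), i.e.\ $k \le \min(i,j)$ or $k \ge \max(i,j)$; and $e_k$ appears once iff $\min(i,j) < k < \max(i,j)$... wait, I should be careful at the boundary edges $e_{m+1}$ and $e_M$ adjacent to the genuine endpoints of the sub-path, and handle $i=j$ separately — but the same cut-vertex bookkeeping handles these, giving that setting $m' = \min(i,j)$ and $M' = \max(i,j)$ we need the edges strictly inside $(\,m', M'\,)$ used once and the rest (down to $m$, up to $M$) used twice. Relabelling, I would set the four indices $m \le i \le j \le M$ accordingly (after possibly swapping the roles of the two endpoints).

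Finally I would verify that the walk $S = u_i u_{i-1}\dots u_m u_{m+1}\dots u_M u_{M-1}\dots u_j$ traverses exactly the edges of $E$ with exactly these multiplicities: edges $e_{m+1},\dots,e_i$ are used once going left and once coming back, edges $e_{i+1},\dots,e_j$ once (only on the long rightward sweep), wait — on the sweep from $u_m$ to $u_M$ all of $e_{m+1},\dots,e_M$ are used once, then the initial descent $u_i\to u_m$ uses $e_{m+1},\dots,e_i$ a second time and the final descent $u_M\to u_j$ uses $e_{j+1},\dots,e_M$ a second time; so $e_{m+1},\dots,e_i$ and $e_{j+1},\dots,e_M$ appear twice, $e_{i+1},\dots,e_j$ appear once, matching the computation above. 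Hence the canonical walk $S$ has the right edge multiset, so $S$ (viewed as its own ``follow'' with all $i_k=0$) is equivalent to $W$, and it is of the desired form. I expect the \textbf{main obstacle} to be the careful boundary bookkeeping in the parity argument — correctly handling the genuine endpoints $u_m, u_M$ of the traversed sub-path, the degenerate cases $i=j$, $m=i$, or $j=M$, and the possible swap of the two endpoints of $S_0$ so that the final indices genuinely satisfy $m\le i\le j\le M$ — rather than the combinatorics itself, which is a clean cut-vertex parity count.
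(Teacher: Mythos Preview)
Your approach is correct and essentially the same as the paper's: identify $m,M$ as the extremal indices reached, $i,j$ as the endpoint indices (swapping so $i\le j$), establish the parity pattern of the traversal counts $t_k$ (even on $[m,i-1]$, odd on $[i,j-1]$, even on $[j,M-1]$), and then write down the explicit zig-zag walk. The paper does this more directly, without the detour through Proposition~\ref{proposition:walk_shapes}: it argues the parity of $t_k$ by a short induction on $\|W\|$ applied to $W$ itself, and then immediately exhibits the equivalent walk following $S$ with the correct half-turn exponents $(t_k-1)/2$ or $t_k/2-1$. One small slip in your write-up: in your last sentence, $S$ has the edge multiset of $S_0$, not of $W$; you still need to add back the half-turns (which you can, since $S$ traverses every edge $S_0$ does) to obtain a walk equivalent to $W$ that follows $S$.
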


\begin{proof}
We denote by $m$ and $M$ the minimum and maximum, respectively, index of a vertex of $W$, and by $i$ and $j$ the index of the first and last vertices of $W$, respectively. Up to traversing $W$ in reverse order, we can assume $i \leq j$.
Obviously, $W$ only contains edges of $P_n$ that are comprised between $u_m$ and $u_M$, and the subpath $u_m\dots u_M$ can be divided into the three pairwise edge-disjoint parts $u_m\dots u_i$, $u_i\dots u_j$, and $u_j\dots u_M$.
 
For any edge $u_{k}u_{k+1}$ of $P_n$, we denote by $t_k \geq 0$ the number of times this edge is traversed by $W$. Note that, by definition of $m$ and $M$, we have $t_k \geq 1$ when $m \leq k \leq M-1$, and $t_k=0$ otherwise. By induction on the length of $W$, one can check that:

\begin{itemize}
    \item $t_k$ is even for all $m\le k\le i-1$;
    \item $t_k$ is odd for all $i\le k\le j-1$;
    \item $t_k$ is even for all $j\le k\le M-1$.
\end{itemize}

\noindent These parity conditions, combined with the fact that $t_k$ is non-zero exactly when $m\le k\le M-1$, give that the following walk
is well defined and equivalent to $W$:
\begin{align*}      
    u_iu_{i-1}\dots u_m(u_{m+1}u_m)^{t_m/2-1}u_{m+1}\dots u_{i-1}(u_iu_{i-1})^{t_{i-1}/2-1}u_i\\
    (u_{i+1}u_i)^{(t_i-1)/2}u_{i+1}\dots u_{j-1}(u_ju_{j-1})^{(t_{j-1}-1)/2}u_j\\
    u_{j+1}\dots u_M(u_{M-1}u_M)^{t_{M-1}/2-1}u_{M-1}\dots u_{j+1}(u_ju_{j+1})^{t_j/2-1}u_j.
\end{align*}
It is also straightforward that this walk follows $S=u_iu_{i-1}\dots u_mu_{m+1}\dots u_Mu_{M-1}\dots u_{j}$.
\end{proof}

In our main result below, we will use the fact that the problem of making graphs locally irregular via edge multiplications (with no walk requirement) is easy for paths, even if we additionally require that the least possible number of edges are added. This can easily be retrieved from previous works on proper labellings, such as~\cite{BFN22,CLWY11}. We give a sketch of proof providing only details that will be needed later on. For a graph $G$ and a multiset $F$ of edges of $G$, we say that $F$ is \textit{irregularising} if $G+F$ is locally irregular, and we denote by $\varphi(G)$ the smallest cardinality of an irregularising multiset of edges of $G$.

\begin{lemma}\label{lemma:varphi_paths}
For any $n \geq 2$, we have
  \[
    \varphi(P_n)=
    \begin{cases}
        \frac{n}{2}\text{ if }n \equiv 0\bmod 4;\\
        \frac{n}{2}-1\text{ if }n \equiv 2\bmod 4;\\
        \frac{n-1}{2}\text{ otherwise}.
    \end{cases}
  \]
\end{lemma}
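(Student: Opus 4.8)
The plan is to analyze $P_n = u_0u_1\dots u_n$ directly by thinking of an irregularising multiset $F$ as a vector of edge multiplicities $(a_1,\dots,a_n)$, where $a_k \ge 0$ is the number of extra copies of $u_{k-1}u_k$, and then to recast the local irregularity condition in terms of the resulting degree sequence. For an internal vertex $u_k$ ($1 \le k \le n-1$) the degree in $P_n+F$ is $2 + a_k + a_{k+1}$, and for the endpoints it is $1 + a_1$ and $1 + a_n$. Local irregularity is exactly the requirement that consecutive degrees differ, i.e. $a_k + a_{k+1} \ne a_{k+1} + a_{k+2}$ (equivalently $a_k \ne a_{k+2}$) for internal consecutive triples, plus the two endpoint conditions $1 + a_1 \ne 2 + a_1 + a_2$ (always true) and similarly at the other end — so effectively the endpoint conditions are automatic and the real constraint is $a_k \ne a_{k+2}$ for all valid $k$, together with $a_1 \ne 1 + a_2$-type conditions comparing the endpoint to its neighbour. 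I would first carefully write down this reformulation, being careful about the off-by-one at the two ends; the condition between $u_0$ (degree $1+a_1$) and $u_1$ (degree $2+a_1+a_2$) reads $1 \ne 2 + a_2$, i.e. $a_2 \ne -1$, which is vacuous, and between $u_1$ and $u_2$ it is $a_1 \ne a_3$ — so actually the constraints are: $a_k \ne a_{k+2}$ for $1 \le k \le n-2$. Wait, I need to also handle $u_{n-1}$ vs $u_n$: degree $2 + a_{n-1}+a_n$ vs $1 + a_n$, again vacuous. So the clean statement is: $F$ is irregularising iff $a_k \ne a_{k+2}$ for all $1 \le k \le n-2$, and we want to minimise $\sum_k a_k$.

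Next I would split by residue of $n$ mod $4$ and exhibit optimal solutions. The constraint $a_k \ne a_{k+2}$ decouples the odd-indexed and even-indexed coordinates into two independent chains, each of which must be a sequence of nonnegative integers with no two \emph{consecutive} terms (within the chain) equal. On a chain of length $L$, the minimum sum of such a sequence is achieved by alternating $0,1,0,1,\dots$, giving $\lfloor L/2 \rfloor$. The odd chain has indices $1,3,5,\dots$ and the even chain has indices $2,4,6,\dots$; their lengths are $\lceil n/2\rceil$ and $\lfloor n/2\rfloor$. So the naive lower bound would be $\lfloor \lceil n/2\rceil/2\rfloor + \lfloor \lfloor n/2\rfloor/2\rfloor$. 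The subtlety — and the place where the $n \equiv 2 \bmod 4$ savings of an extra $-1$ comes from — is that one can do slightly better by not forcing every chain to start and end at its minimum: one also has freedom because the endpoint conditions turned out vacuous, so the first and last coordinate of each chain are unconstrained except by their single chain-neighbour. I would compute the four cases: for $n \equiv 0 \bmod 4$ both chains have even length $n/2$, each contributes $n/4$, total $n/2$; for $n \equiv 1$ or $n \equiv 3 \bmod 4$ one gets $(n-1)/2$ after summing the two floors; for $n \equiv 2 \bmod 4$, both chains have length $n/2$ which is odd, so each naively contributes $(n/2-1)/2$... I'd need to recheck the arithmetic here, but the point is the stated formula $\frac n2 - 1$ must come out, and I would verify it by giving an explicit pattern such as $0,1,0,1,\dots$ on one chain and a shifted pattern on the other that exploits the free endpoints to shave one unit.

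For the lower bound I would argue that in each chain of length $L$, among any two consecutive terms at least one is $\ge 1$, hence at least $\lfloor L/2\rfloor$ of the terms are positive, giving sum $\ge \lfloor L/2\rfloor$; summing over the two chains gives the lower bound, and then I'd check case-by-case that this matches the construction — except in the $n\equiv 2$ case where I expect the honest lower bound argument needs a touch more care (perhaps a global parity or counting argument rather than a per-chain one, or conversely the per-chain bound already gives $\frac n2 - 1$ and it's the $n \equiv 0$ case where both chains being even forces the full $\frac n2$). The main obstacle I anticipate is exactly this bookkeeping around the two chains' parities and making the lower bound tight in every residue class; it is entirely elementary but easy to get an off-by-one wrong, so I would organise it as: (1) the reformulation lemma $a_k \ne a_{k+2}$, (2) decoupling into two chains, (3) $\min$-sum on a no-two-consecutive-equal chain of length $L$ is $\lfloor L/2\rfloor$ with matching construction, (4) a short case analysis on $n \bmod 4$ to assemble the final formula, and (5) since the excerpt only asks for a sketch "providing only details that will be needed later on," I would keep (3)–(4) terse and emphasise the explicit near-optimal multisets (the $0,1,0,1,\dots$ patterns) since those are presumably what the later path theorem invokes.
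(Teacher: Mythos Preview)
Your approach is correct and in fact more self-contained than the paper's: the paper simply cites prior work and exhibits the explicit multiset $\{u_{4i+2}u_{4i+3},\,u_{4i+3}u_{4i+4}:i\ge 0\}$, deferring the lower bound to the references, whereas you prove both directions directly via the clean reformulation ``$a_k\ne a_{k+2}$ for $1\le k\le n-2$'' and the decoupling into two chains.

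One point to clean up: there is no subtlety in the $n\equiv 2\bmod 4$ case, and your hedging about needing ``a touch more care'' is unwarranted. The per-chain bound $\lfloor L/2\rfloor$ is already tight in every residue class. Concretely, the two chains have lengths $\lceil n/2\rceil$ and $\lfloor n/2\rfloor$, and $\lfloor\lceil n/2\rceil/2\rfloor+\lfloor\lfloor n/2\rfloor/2\rfloor$ evaluates to exactly $n/2$, $(n-1)/2$, $n/2-1$, $(n-1)/2$ for $n\equiv 0,1,2,3\bmod 4$ respectively --- when $n\equiv 2$, both chain lengths equal $n/2$, which is odd, so each contributes $(n/2-1)/2$ and they sum to $n/2-1$ with no extra trick needed. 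The paper's explicit construction is precisely your $0,1,0,1,\dots$ pattern on both chains (in your notation, $a_{4i+3}=a_{4i+4}=1$ and all other $a_k=0$), so the two approaches match at the level of the witnessing multiset; you just supply the lower-bound argument that the paper outsources.
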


\begin{proof}
By arguments from \textit{e.g.}~\cite{BFN22,CLWY11}, an optimal irregularising multiset of $P_n=u_0 \dots u_{n}$
is the set containing $u_{4i+2}u_{4i+3}$ and $u_{4i+3}u_{4i+4}$ for all $i \geq 0$. This yields the stated equalities. 
\end{proof}

We are now ready for our main result in this section.

\begin{theorem}\label{theorem:paths}
For any $n \geq 2$, we have
  \[
    \mlw(P_n)=
    \begin{cases}
      0\text{ if }n=2;\\
      1\text{ if }n=3;\\
      2\text{ if }n \in \{4,5\};\\
      2n-10\text{ otherwise.}\\
    \end{cases}
  \]
\end{theorem}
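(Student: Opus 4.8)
\textbf{Proof plan for Theorem~\ref{theorem:paths}.}

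The plan is to split the argument into the small cases $n \in \{2,3,4,5\}$, which are handled by ad hoc constructions plus trivial lower bounds, and the main range $n \geq 6$, where we prove $\mlw(P_n) = 2n-10$ by matching an upper bound and a lower bound. For the small cases: $P_2$ is already locally irregular, so $\mlw(P_2)=0$; for $P_3=u_0u_1u_2u_3$ the two inner vertices have degree $2$, one half-turn along a central edge fixes this, and no shorter walk works, so $\mlw(P_3)=1$; for $P_4$ and $P_5$ a length-$2$ walk (two half-turns, or a walk $u_1u_2u_3u_2$ type) suffices, and length $1$ or $0$ is easily ruled out since a single edge traversal cannot separate all conflicting pairs. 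These are quick checks that I would just state with the explicit walks.

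For the main inequality $\mlw(P_n) \leq 2n-10$ when $n \geq 6$: here I would combine Lemma~\ref{lemma:varphi_paths} with the greedy-following approach from Theorem~\ref{theorem:bound-general}, but tailored to the path geometry using Lemma~\ref{lemma:shape_walk_path}. By Lemma~\ref{lemma:varphi_paths}, there is an irregularising edge multiset $F$ of $P_n$ of size $\varphi(P_n) \approx n/2$, concentrated on disjoint pairs $\{u_{4i+2}u_{4i+3}, u_{4i+3}u_{4i+4}\}$. The idea is to realise (a shifted/rescaled version of) this multiset as a walk of the $S$-shape from Lemma~\ref{lemma:shape_walk_path}: traverse the path essentially once from near one end to near the other, and at the prescribed edges perform the required number of half-turns. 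The subtlety is that half-turns come in increments that change two incident degrees by $2$, whereas $F$ adds single edges; so one has to be careful about the parity bookkeeping at the turning points $u_m$, $u_M$ and at the endpoints, which is exactly what Lemma~\ref{lemma:shape_walk_path}'s parity conditions control. The constant $-10$ (versus the naive $\approx 2\cdot n/2 + n$) should come from the fact that on a path the guiding walk need not be a closed walk of length $2(n-1)$: a single left-to-right sweep of length $n$ suffices, and moreover the already-locally-irregular behaviour near the two ends (as noted in the remarks after Corollary~\ref{corollary:bound-general}, one only needs to cover one endpoint of each monochromatic edge) lets us shave off a bounded number of edges near both ends. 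I would compute the length of the explicit $S$-walk carefully for each residue of $n$ modulo $4$ and check it equals $2n-10$.

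For the lower bound $\mlw(P_n) \geq 2n-10$ when $n \geq 6$: this is the step I expect to be the main obstacle. The approach is: let $W$ be any irregularising walk, apply Lemma~\ref{lemma:shape_walk_path} to replace it by an equivalent $S$-shaped walk $W_S$ following $S = u_i u_{i-1}\dots u_m u_{m+1}\dots u_M u_{M-1}\dots u_j$ (equivalent walks have the same length and the same irregularising status). Now $\|W_S\| = \|W\|$ and $W_S$ decomposes as a ``base'' traversal of the sub-path $u_m\dots u_M$ with multiplicities $1$ or $2$ as in the lemma, plus half-turns. I would argue that (i) the base part already has length at least $(M-m) + (\text{number of edges traversed twice})$, and (ii) the multiset of added half-turn edges must, together with the base multiplicities, yield a locally irregular multigraph on $P_n$, hence its size is at least $\varphi(P_n)$ minus a bounded correction, giving roughly $2\varphi(P_n) \approx n$ extra edges on top of a near-$n$ base — and one then optimises over the choice of $m,M,i,j$ to see that the minimum total is exactly $2n-10$. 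Concretely, I would set up a small linear bound: writing the degree-change vector of $W_S$ in terms of $m,M$ and the half-turn counts, the local irregularity constraint forces a chain of strict inequalities along the path (as in the analysis of $\varphi(P_n)$), and summing telescoping lower bounds on the half-turn counts gives the bound. The main difficulty is getting the additive constant exactly right (why $10$ and not $8$ or $12$): this requires a careful case analysis of the behaviour at $u_0$ and $u_n$ (degree-$1$ vertices cannot be separated from a degree-$1$ neighbour the naive way, so conflicts near the ends are automatically absent and should not be ``paid for''), and of whether the turning points $m,M$ can be pushed all the way to $0$ and $n$. I would organise this as: first reduce to $m=0$, $M=n$ or explain why moving them inward never helps by more than it costs; then reduce to the endpoints $i,j$ being at bounded distance from $0$ and $n$; then the bound becomes a clean optimisation that matches the construction. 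Checking all four congruence classes of $n \bmod 4$ against the same formula $2n-10$ is the final routine (but fiddly) verification.
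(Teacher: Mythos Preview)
Your overall skeleton matches the paper's: reduce to an $S$-shaped walk via Lemma~\ref{lemma:shape_walk_path}, relate the half-turn counts on the central stretch to $\varphi$ via Lemma~\ref{lemma:varphi_paths}, and optimise over the parameters $m,M,i,j$. But several of the concrete steps you sketch point in the wrong direction, and the proof as planned would stall.

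First, your guess ``reduce to $m=0$, $M=n$'' is off: the paper proves that for an \emph{optimal} walk one has $m=2$ and $M=n-2$. The reason is that if $m\le 1$ the walk wastes edges near the leaf $u_0$ that can simply be dropped (the pair $u_0,u_1$ is never in conflict), while if $m\ge 3$ the inner vertices $u_1,u_2$ remain with equal degree. So the walk must reach exactly to $u_2$ and $u_{n-2}$, and this is where the constant $-10$ ultimately comes from; trying to argue with $m=0,M=n$ will not reproduce it. Second, your lower bound ``size at least $\varphi(P_n)$ minus a bounded correction'' is not the right formulation: what one actually shows is that the half-turn multiset restricted to the central segment $u_i\dots u_j$ is irregularising for $P_{j-i}$, giving $\mlw(P_n)\ge 2(i-m)+(j-i)+2\varphi(P_{j-i})+2(M-j)$, and only then does one optimise over $i,j$. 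Third, and this is the step you underestimate most, bounding $i\le 4$ (and symmetrically $j\ge n-4$) is not a soft ``bounded distance'' argument: it requires a genuine case analysis on the half-turn pattern at $u_iu_{i+1}$ and $u_{i+1}u_{i+2}$, exhibiting in each case a strictly shorter irregularising walk when $i\ge 5$ (the paper handles four cases explicitly). Finally, the paper runs the main argument only for $n\ge 10$ (so that $j>i+1$ and various parity claims hold) and checks $n\le 9$ by hand; your split at $n\ge 6$ would force you to handle several boundary cases inside the main argument.
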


\begin{proof}
Set $P=P_n=u_0 \dots u_{n}$.
Since the bounds can easily be checked when $n \leq 9$, from now on we focus on cases with $n \geq 10$.
Let us consider an optimal irregularising walk $W_S$ of $P$, which can be assumed to meet the form described in Lemma~\ref{lemma:shape_walk_path}. Reusing the terminology from that lemma, $W_S$ divides $P$ into exactly five different part: parts $u_0\dots u_m$ and $u_M\dots u_{n}$ have no edges traversed by $W_S$; parts $u_m\dots u_i$ and $u_j\dots u_M$ have all their edges traversed a non-zero, even number of times by $W_S$;
and part $u_i\dots u_j$ has all its edges traversed a non-zero, odd number of times by $W_S$. Given a part $Q$ of $P$,
we will denote by $k_Q$ the minimum number of times each edge of this part has to be traversed by $W_S$ ($0$, $2$, and $1$, respectively, for the three types of parts we described above). This provides parity conditions
on the number of added parallel edges in the five different parts.

In particular, all edges but those of $u_i\dots u_j$ are traversed an even number of times by $W_S$. As a consequence, we get the following useful fact we will use implicitly throughout:

\begin{remark}
Whenever $i \neq 0$, $i\neq j$, and $j\neq n$, then, in $P+W_S$, all vertices have even degree, except exactly for
$u_0$, $u_i$, $u_j$, and $u_n$.
\end{remark}

The parity conditions also ensure that,
given any edge $e$ of any part $Q$ of $P$, we can define $p(e)$ to be the nonnegative integer such that $2p(e)+k_Q$ is the number of edges $W_S$ adds on top of $e$. If $e$ is part of $Q$, then we set $k(e)=k_Q$.
More formally, Lemma~\ref{lemma:shape_walk_path} states that
\begin{align*}
    W_S=&u_i\dots u_m\\
    &u_{m+1}(u_mu_{m+1})^{p(u_mu_{m+1})}\dots u_{M}(u_{M-1}u_M)^{p(u_{M-1}u_M)}\\
    &u_{M-1}\dots u_j,
\end{align*} 
meaning that $W_S$ performs $p(u_ku_{k+1})$ half-turns on $u_ku_{k+1}$ on top of following
\[
    S=u_iu_{i-1}\dots u_mu_{m+1}\dots u_Mu_{M-1}\dots u_j.
\]

Changing any value $p(u_ku_{k+1})$ gives a new walk.
We do such local modifications to $W_S$ to prove that $m$, $i$, $j$, and $M$ must have restricted values.
First, we prove $m=2$ and $M=n-2$. If we had $m\ge 3$, then $u_1$ and $u_2$ would both have degree $2$ in $P+W_S$,
a contradiction. Hence, $m\le 2$. So assume $m<2$ and consider the walk 
\begin{align*}
    W_S' = & u_{\max \{i,2\}}\dots u_2\\
    &u_{3}(u_2u_3)^{p(u_2u_{3})}\dots u_{M}(u_{M-1}u_M)^{p(u_{M-1}u_M)}\\
    &u_{M-1}\dots u_j.
\end{align*}
Consider any two neighbours $u_k$ and $u_{k+1}$ in $P$. If $k\ge 3$, then the degrees of $u_k$ and $u_{k+1}$ in $P+W_S'$
are the same as in $P+W_S$, and hence they are distinct. For the remaining cases,
\begin{align*}
    d_{P+W_S'}(u_0)=& ~1,\\
    d_{P+W_S'}(u_1)=& ~2,\\
    d_{P+W_S'}(u_2)=& ~2+k(u_2u_3)+2p(u_2u_3),\\
    d_{P+W_S'}(u_3)=& ~2+k(u_2u_3)+2p(u_2u_3)+k(u_3u_4)+2p(u_3u_4).
\end{align*}
As $n\ge 10$, we necessarily have $k(u_3u_4)>0$, which proves that $W_S'$ is irregularising. This is a contradiction as $W_S'$ is shorter
than $W_S$. Hence, $m=2$ and by symmetry $M=n-2$.


We now prove that $i\le 4$ and $j\ge n-4$. Towards a contradiction, assume that $i\ge 5$. In particular $i-3\ge m$. We claim that the number of parallel edges added
by $W_S$ on top of $u_{i-3}u_{i-2}u_{i-1}u_i$ is at least $8$.
Indeed, since $i-3\ge m$, the parity conditions ensure that, for any edge $e$ of this subpath, we have $k(e)=2$.
Walk $W_S$ must perform at least one half-turn on top of one of these edges, as otherwise $u_{i-2}$ and $u_{i-1}$ would
both have degree $6$ in $P+W_S$. Hence, in total $W_S$ adds at least $8$ edges
on top of $u_{i-3}u_{i-2}u_{i-1}u_i$.

Towards a contradiction, we prove that we can build an irregularising walk $W_S'$ of shorter length
by case analysis on the values of $p(u_iu_{i+1})$ and  $p(u_{i+1}u_{i+2})$. In each
case, $W_S'$ is constructed by modifying values $p(e)$ for edges $e$ of $u_{i-3}u_{i-2}u_{i-1}u_i$
and by making this subpath part of the central part of $P$. We consider some cases:
\begin{itemize}
    \item In case $p(u_i u_{i+1})=0$ and $p(u_{i+1}u_{i+2})=0$, consider (see Figure~\ref{figure:paths_cases})
    \begin{align*}
    W_S'=~&u_{i-3}\dots u_m\\
    &u_{m+1}(u_mu_m+1)^{2p(u_mu_{m+1})}\dots u_{i-3}(u_{i-4}u_{i-3})^{2p(u_{i-4}u_{i-3})}\\
    &u_{i-3}u_{i-2}u_{i-1}(u_{i-2}u_{i-1})u_i(u_{i-1}u_i)\\
    &u_{i+1}(u_i u_{i+1})^{p(u_i u_{i+1})}\dots u_{M}(u_{M-1}u_{M})^{p(u_{M-1}u_M)}\\
    &u_{M-1}\dots u_j.
    \end{align*}
    In $P+W_S'$, consider any two neighbours $u_k$ and $u_{k+1}$. If $k+1\le i-4$ or $k\ge i+1$, then, since the degrees in $P+W_S'$ of $u_k$ and $u_{k+1}$ are the same as in $P+W_S$,
    and because $W_S$ is irregularising, we have $d_{P+W_S'}(u_k)\neq d_{P+W_S'}(u_{k+1})$. Also, $d_{P+W_S'}(u_{i-2})=6$, $d_{P+W_S'}(u_{i-1})=8$, $d_{P+W_S'}(u_i)=6$, and $d_{P+W_S'}(u_{i+1})=4$ (since $p(u_i u_{i+1})=0$ and $p(u_{i+1}u_{i+2})=0$).
    Finally, when $k=i-3$ and $k=i-4$, the degree of $u_{i-3}$ in $P+W_S'$ is odd
    while that of $u_{i-2}$ and $u_{i-4}$ are even. Hence, $W_S'$ is irregularising, and, by definition, its length is less than that of $W_S$ (since we removed at least~$8$ edges, and added exactly~$7$), a contradiction.
    
    \item The other cases are illustrated in Figure~\ref{figure:paths_cases} and are treated in similar ways.
\end{itemize}

\begin{figure}
    \centering
    \includegraphics{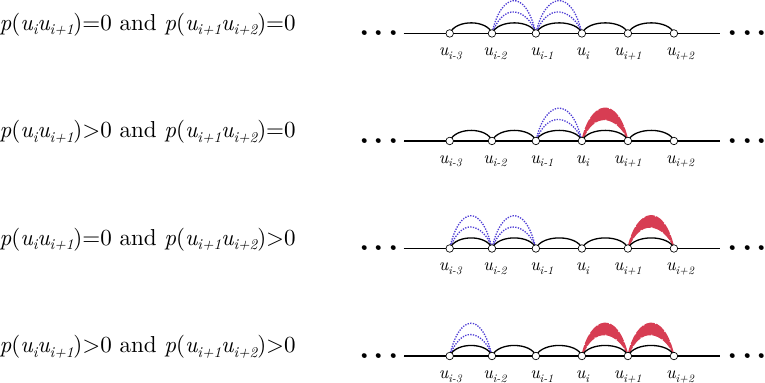}
    \caption{Cases considered in the proof of Theorem~\ref{theorem:paths}. There are four cases depending on whether $W_S$ performs a non-zero number of half-turns on top of $u_iu_{i+1}$ and $u_{i+1}u_{i+2}$. Red filled crescents represent these half-turns. Straight black regular lines are the edges of $P$ while curved black regular lines represent edges added by $W_S'$ besides half-turns (here, we display the central part $Q$ of $W_S'$ so that $k_Q=1$). Blue double edges are half-turns performed on top of $u_{i-3}u_{i-2}u_{i-1}u_i$ by $W_S'$. Hence, these figures illustrate the exact behaviour of $W_S'$ on the part $u_{i-3}u_{i-2}u_{i-1}u_i$ of $P$ along which $W_S'$ differs from $W_S$.}
    \label{figure:paths_cases}
\end{figure}

So, we can now assume $i\le 4$, and $j\ge n-4$ by symmetry. As $n\ge 10$, this implies $j>i+1$. We prove that $p(u_ku_{k+1})=0$ for every $m\le k\le i-1$.
Let $W_S'$ be the walk obtained from $W_S$ by setting these half-turn numbers to be
$0$ while
retaining the other performed half-turns. Consider any two neighbours $u_k$
and $u_{k+1}$ of $P$. If $k\ge i+1$ or $k=0$, then the degrees in $P+W_S'$
of $u_k$
and $u_{k+1}$ are the same as in $P+W_S$.
If $k=1$, then we have $d_{P+W_S'}(u_1)=2$ and $d_{P+W_S'}(u_2)\ge 3$ as
$W_S'$ must traverse $u_2u_3$. In case $k=2$, the degree
of $u_3$ is the sum of that of $u_2$ plus the number of parallel
edges that $W_S'$ adds on top of $u_3u_4$ (which is non-zero as $n\ge 10$).
Finally, if $k=i-1$ or $k=i$, then the parity conditions when $j>i+1$ ensure that $u_i$
has odd degree while its neighbours have even degrees.

As $j>i+1$, the central part $Q=u_iu_{i+1}\dots u_j$ of $P$
is nice (\textit{i.e.}, is not $K_2$). As $W_S$ is irregularising, we get that the multiset $\{(u_ku_{k+1},p(u_ku_{k+1}))\mid i\le k\le j-1\}$
is irregularising for $Q$. Combining Lemma~\ref{lemma:varphi_paths} and facts proved earlier in the proof provides the following bound:
\begin{align*}
    \mlw(P)\ge&~ 2(i-m)+(j-i)+2\varphi(P_{j-i})+2(M-j)\\
    \ge&~2n+i-j+2\varphi(P_{j-i})-8.
\end{align*}

We prove that equality actually holds. Consider an irregularising multiset $F$ for $u_i\dots u_j$, and denote by $p_F(u_ku_{k+1})$
the number of edges $u_ku_{k+1}$ it contains for $i\le k\le j-1$. Then, the walk
\begin{align*}
    W_S'=~&u_i\dots u_m\dots u_i\\
    &u_{i+1}(u_iu_{i+1})^{p_F(u_iu_{i+1})}\dots u_{j}(u_{j-1}u_j)^{p_F(u_{j-1}u_j)}\\
    &u_{j+1}\dots u_M\dots u_j
\end{align*}
is irregularising for $P$. Indeed, the degrees of any two neighbours $u_k$ and $u_{k+1}$
in $P+W_S'$ are distinct whenever $k\in\{0,\dots,i-2\}\cup\{j+1,\dots,n-1\}$ as their
degrees are the same as in $P+W_S$, and $W_S$ is irregularising.
Since $j>i+1$, the degrees of $u_i$ and $u_j$ are odd while those of their neighbours are even,
so whenever $k\in\{i-1,i,j-1,j\}$ the two neighbours have distinct degrees. The other cases follow directly
from $F$ being irregularising. Then, the fact that $W_S'$ is irregularising for $P$ proves the other
inequality; so, for some value of $i$ and $j$:
\begin{align*}
    \mlw(P)=&~2n+i-j+2\varphi(P_{j-i})-8.
\end{align*}
Now, since $2\le i\le 4$, $n-4\le j\le n-2$, and, up to symmetry between $i$ and $j$, there are $6$ possible pairs of values to consider, by
exploring the possible values of $n$ modulo $4$ we get the final statement by Lemma~\ref{lemma:varphi_paths}.
\end{proof}

We believe an interesting thing behind the proof of Theorem~\ref{theorem:paths} is that, through using Lemmas~\ref{lemma:shape_walk_path} and~\ref{lemma:varphi_paths}, we build upon results on proper labellings. Although this cannot be applied for all graphs, recall our discussion in Section~\ref{subsection:relation-labellings}, this proves that, still, there are situations where this philosophy applies.

Regarding Theorem~\ref{theorem:paths}, for comparison, for any $n \geq 2$, the bounds on $\mlw(P_n)$ we derive from Corollaries~\ref{corollary:bound-general} and~\ref{corollary:bound-chromatic} are $4n$ and $4n+4$, respectively. 

A corollary we deduce is:

\begin{corollary}\label{corollary:cycles}
For any $n \geq 3$, we have
  \[
    \mlw(C_n)=
    \begin{cases}
      3\text{ if }n=3;\\
      2n-6\text{ otherwise.}
    \end{cases}
  \]
\end{corollary}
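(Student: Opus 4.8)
The plan is to reduce the cycle case to the path case already settled in Theorem~\ref{theorem:paths}, treating $C_3$ separately. For $C_3=K_3$, the value $\mlw(C_3)=3$ is already established in Theorem~\ref{theorem:complete}, so nothing new is needed there. For $n\geq 4$, write $C_n=u_0u_1\dots u_{n-1}u_0$. The lower bound and the upper bound will be handled separately.

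For the \textbf{upper bound}, I would exhibit an explicit irregularising walk of length $2n-6$. The natural idea is to mimic the optimal path construction: since $C_n$ is vertex-transitive and $2$-regular, any irregularising walk must alter degrees so that adjacent vertices differ, and one wants to ``use up'' as little length as possible. Concretely, one can take a walk that traverses an arc of the cycle and performs the same pattern of half-turns as in the optimal path walk on $P_{n}$ (or rather on a path obtained by cutting the cycle at one vertex), being careful that the two ``ends'' now coincide at a single vertex of the cycle, which slightly changes the bookkeeping near the cut. Alternatively, and perhaps more cleanly, one can invoke Theorem~\ref{theorem:bound-chromatic} style reasoning or simply adapt the greedy half-turn procedure of Theorem~\ref{theorem:bound-general} along a Hamiltonian cycle of $C_n$ (which is $C_n$ itself, of length $n$), tightening the count by hand. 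Either way, the target is a walk of length exactly $2n-6$; I would double-check the small cases $n\in\{4,5,6,7\}$ directly to make sure no off-by-something error creeps in near the cut vertex.

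For the \textbf{lower bound}, the key observation is that an irregularising walk $W$ of $C_n$ can be analysed much like one of a path. Cut $C_n$ at a vertex $v$ not of maximum traversal (or at a vertex incident to an edge traversed $0$ times by $W$ if one exists; if $W$ traverses every edge, the length is already at least $n\geq 2n-6$ for $n\le 6$ and one argues differently for larger $n$). More robustly: if $W$ does not traverse some edge $e=u_{n-1}u_0$ of $C_n$, then $W$ is in fact a walk of the path $P_n$ obtained by deleting $e$, and moreover $W$ must be irregularising for that path \emph{as a cycle-edge has been removed}, so some care is needed because degrees in $C_n$ and in $P_n$ differ at the two endpoints of $e$. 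I would instead argue directly on $C_n$: since $C_n$ is $2$-regular, $W$ irregularising means all vertices get pairwise-distinct-among-neighbours degrees; sum-of-degree-alterations arguments (as in the proof of Theorem~\ref{theorem:complete}) combined with the ``shape'' result Lemma~\ref{lemma:shape_walk_path} applied to the path obtained by cutting at an untraversed edge, together with Lemma~\ref{lemma:varphi_paths}, should yield $\|W\|\geq 2n-6$. The cleanest route is probably: any irregularising walk of $C_n$ restricts to an irregularising object on an induced sub-path, and comparing with $\mlw(P_{n-1})$ or $\varphi$ of a suitable path gives the bound, since cutting loses at most a constant amount.

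The \textbf{main obstacle} I anticipate is the boundary bookkeeping at the cut vertex: in the path proof, the two endpoints $u_0,u_n$ are distinct and have degree-$1$ baseline, whereas in the cycle they merge into one degree-$2$ vertex, and the parity structure (which edges are traversed an even vs.\ odd number of times) is now constrained by a \emph{cyclic} sum rather than a linear one — the number of edges traversed an odd number of times must be even along the whole cycle. Getting the constant exactly $-6$ rather than $-4$ or $-8$, and matching the upper and lower bounds, will require carefully redoing the case analysis near the cut, analogous to the $i\le 4$, $j\ge n-4$, $m=2$, $M=n-2$ analysis in the proof of Theorem~\ref{theorem:paths}, but with one of those four special vertices absorbed into the cyclic structure. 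I would expect the final clean statement to follow once the correspondence ``irregularising walk of $C_n$ $\leftrightarrow$ irregularising walk of $P_{n}$ with one identified endpoint pair'' is made precise.
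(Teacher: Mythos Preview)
Your proposal has a genuine gap in the reduction to the path case. You suggest cutting $C_n$ at a single untraversed edge to obtain $P_{n-1}$ (or ``$P_n$ with one identified endpoint pair''), but this correspondence does not work cleanly: the two endpoints of the resulting path have baseline degree~$1$, whereas the same vertices in $C_n$ have degree~$2$, so the irregularity constraints at the cut do not match. More importantly, $\mlw(P_{n-1})=2n-12$ and $\mlw(P_n)=2n-10$ (for $n$ large), neither of which equals the target $2n-6$, so no amount of boundary bookkeeping of the kind you describe will close that gap.

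The paper's key step, which you are missing, is to prove that in any \emph{optimal} irregularising walk $W$ of $C_n$ the set $P_\neg$ of untraversed edges has length \emph{exactly}~$2$ (length~$\geq 3$ is ruled out by a trivial conflict between two inner vertices; lengths~$0$ and~$1$ are ruled out by explicit shortening arguments). Once $|P_\neg|=2$, the inner vertex $b$ of $P_\neg$ has degree~$2$ in $C_n+W$ with both incident edges untraversed, exactly like the second and penultimate vertices of a path. Splitting $C_n$ at $b$ (duplicating $b$ into two copies and hanging a leaf off each) yields a path of length $n+2$, and the irregularity constraints now match perfectly on both sides. Hence $\mlw(C_n)=\mlw(P_{n+2})=2(n+2)-10=2n-6$. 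This single lemma (that $|P_\neg|=2$) simultaneously handles both the upper and lower bounds, so your separate treatment of the two directions, and in particular your vague upper-bound construction, becomes unnecessary.
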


\begin{proof}
The result is trivial for $n=3$. We assume that $n\ge 4$ from now on. Let $W$ be any irregularising walk on $C_n$. Let $G_e$, $P_o$, and $P_\neg$ be the subgraphs
of $C_n$ induced by those edges that are traversed, respectively, an even non-zero number of times by $W$, an odd number of times by $W$, and that are not traversed by $W$. Note that some of $G_e$, $P_o$, and $P_\neg$ might be empty, and that the next arguments still apply even when this is the case. Obviously, $C_n$ is the edge-disjoint
union of $G_e$, $P_o$, and $P_\neg$.
By induction on the length of $W$,
one can see that the first and last vertices of $W$ must be the end-vertices of $P_o$, that $P_o$ and $P_\neg$ are connected,
and that $G_e$ is made of at most two connected components $P_e$ and $P_e'$ with, for both of them, one end-vertex shared with $P_o$ and the other with $P_\neg$. In particular,
$P_e$, $P_e'$, $P_o$, and $P_\neg$ are all subpaths of $C_n$ and $C_n=P_eP_oP_e'P_\neg$. Similarly as in the proofs of Proposition~\ref{proposition:walk_shapes} and Lemma~\ref{lemma:shape_walk_path},
by simply adding the necessary half-turns, one can then prove that $W$ must follow the walk
\[
    \overline{P_e}P_eP_oP'_e\overline{P'_e},
\]
where $\overline{P}$ is the path $P$ in reverse order, and where ``multiplications of paths'' are just concatenations of paths.

We prove that if $W$ is irregularising and optimal, then $P_\neg$ has length exactly $2$. First off, it cannot
have length at least $3$ as then any two inner neighbours of $P_\neg$ would have equal degree in $P+W$.
Now suppose $P_{\neg}$ has length $1$, and write $P_{\neg} = uv$. Let $w$ be the other neighbour of $u$ in $C_n$. Then, removing from $W$ all occurrences of $uw$ yields a shorter walk $W'$. Note that all vertices have the same degree in $G+W$ and $G+W'$, except for $u$ and $w$. Furthermore, denoting by $x$ the neighbour of $w$ other than $u$, in $G+W'$, we have $d(u) = 2$, $d(v) \geq 3$, $d(w) \geq 3$, and $d(x) \geq d(w)+1$ (using here that $n \geq 4$); so $W'$ is irregularising, a contradiction.
Note that we can do something similar if $P_\neg$ had length $0$ (we voluntarily omit giving a formal proof, which would be tedious as it would have to consider whether $P_e$, $P_e'$, and $P_o$ are empty). Thus we get a contradiction whenever $P_{\neg}$ has length different from~$2$.



Assume now $P_\neg$ has length~$2$. Due to this, its inner vertex acts just as the second and penultimate vertices of a
path of length $n+2$ so that $W$ is irregularising and optimal for $C_n$ if, and only if, it corresponds
to an optimal irregularising walk of $P_{n+2}$ of the same length (see Figure~\ref{figure:cycles}). As $n+2\ge 6$, the result follows
from Theorem~\ref{theorem:paths}.
\begin{figure}
    \centering
    \includegraphics{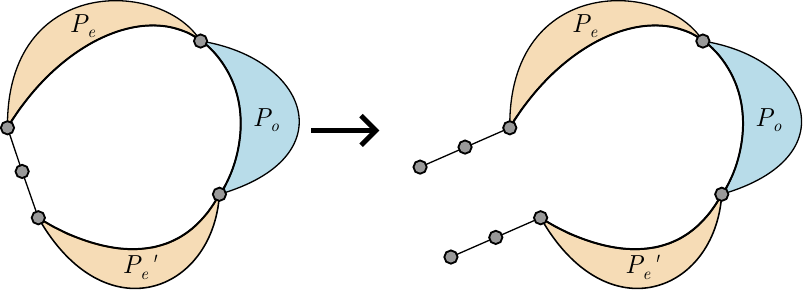}
    \caption{Arguments used in the proof of Corollary~\ref{corollary:cycles}. On the left is a cycle $C$ of length $n$ with an associated walk $W$. Only the common extremities of $P_e, P'_e$ and $P_o$ as well as $P_\neg$ are explicitly represented using $5$ gray vertices. Parallel edges added on top of $P_e, P'_e$ and $P_o$ by $W$ are represented by the three thick coloured crescents.
    On the right is a path $P$ of length $n+2$ built from the cycle. The possible degree conflicts between neighbours in $P$ and in $C$ are exactly the same.}
    \label{figure:cycles}
\end{figure}
\end{proof}




\section{Algorithmic aspects}\label{section:algo}

In this section, we provide algorithmic results on the complexity of determining $\mlw$. We begin with negative results, by showing, in brief, that determining $\mlw(G)$ for a given graph $G$ is \np-complete, even when $G$ belongs to restricted graph classes. In the same vein, we then establish that determining whether a given graph admits irregularising paths is \np-complete. On the positive side of things, we mainly prove that determining $\mlw(T)$ for a given tree $T$ can be done in polynomial time.

\subsection{Determining whether a graph admits irregularising walks}

From the algorithmic point of view, 
a first natural question is whether it is easy to determine if $\mlw(G) \leq k$ holds for some given graph $G$ and fixed $k \geq 1$.
Formally:

\medskip

\noindent\textsc{Irregularising $k$-Walk}\\
\noindent\textbf{Input:} A graph $G$.\\
\noindent\textbf{Question:} Do we have $\mlw(G) \leq k$?

\medskip

Clearly, \textsc{Irregularising $k$-Walk} lies in \textsf{P} for all $k \geq 1$: one can simply enumerate all walks $W$ of length at most $k$
of $G$ (there are $\mathcal{O}(m^k)$ of them) and check whether any of them is irregularising by looking at degrees in $G+W$. 

The next step is thus to wonder what happens if $k$ is part of the input:

\medskip

\noindent\textsc{Irregularising Walk}\\
\noindent\textbf{Input:} A graph $G$, and some $k \geq 1$.\\
\noindent\textbf{Question:} Do we have $\mlw(G) \leq k$?

\medskip

The previous exhaustive search algorithm shows that \textsc{Irregularising Walk} lies in \textsf{XP} when parameterised by $k$, which leads to the question of whether it lies in \textsf{FPT}, which we will consider later on.

We prove that \textsc{Irregularising Walk} is \textsf{NP}-complete.

\begin{theorem}\label{theorem:npc-walks}
\textsc{Irregularising Walk} is \textsf{NP}-complete.
\end{theorem}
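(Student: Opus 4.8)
The plan is to show membership in \textsf{NP} and then give a polynomial-time reduction from a known \np-complete problem. Membership is the easy part: given a graph $G$, a bound $k$, and a walk $W$ of length at most $k$ written as a sequence of at most $k$ edges, one checks in polynomial time that $W$ is indeed a walk of $G$ (consecutive edges share a vertex) and that $G+W$ is locally irregular (compute all degrees in the multigraph, compare across edges). Since $k$ may be exponential in $|G|$, one has to be slightly careful: by Corollary~\ref{corollary:bound-general} we may assume $k \le 4m$, because if $k \ge 4m$ the answer is automatically ``yes''; hence a witnessing walk of length at most $\min(k,4m)$ always exists when the answer is ``yes'', and this has polynomial size. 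So \textsc{Irregularising Walk} $\in \textsf{NP}$.

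**The reduction.** For \np-hardness I would reduce from a convenient \np-complete problem — natural candidates are \textsc{3-Partition}, \textsc{Hamiltonian Path/Cycle}, or one of the known hardness results for distinguishing labellings. Given the flavour of the parameter (it is a \emph{length}, i.e. a total count of added edges, and the earlier observations tie it to the minimum label sum of a proper labelling), I expect the cleanest route is to engineer a graph whose only cheap irregularising walks correspond to a combinatorial choice in the source instance. Concretely, I would build a graph $G$ out of many small ``gadget'' subgraphs, each attached to a common backbone, such that: (i) each gadget is \emph{almost} locally irregular on its own, with exactly one local conflict that can be resolved by a short detour of one of two possible lengths; (ii) resolving a gadget's conflict forces the walk to pass through a designated backbone edge (this is the phenomenon illustrated in Figure~\ref{figure:hard-to-connect}: local fixes can be forced to drag in extra backbone edges); and (iii) the backbone is arranged so that the walk, being a single connected walk, must traverse the gadgets in a way that encodes the source problem — e.g., a Hamiltonian-type constraint, or a packing/partition constraint on how many times each backbone edge is reused. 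Then $\mlw(G) \le k$ for the chosen threshold $k$ if and only if the source instance is a ``yes''-instance. Because this is the \emph{first} of several hardness results in the paper (the authors announce later that it holds even for restricted classes, and that deciding the existence of irregularising \emph{paths} is also \np-complete, Theorem~\ref{theorem:npc-paths}), I would expect the gadgets here to be the workhorse that the later theorems refine.

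**Where the difficulty lies.** The main obstacle is the global connectivity constraint: a walk is a single connected object, so I cannot fix each gadget in isolation — the walk has to \emph{travel between} gadgets, and that travel cost must be accounted for exactly, not just up to a constant. The art of the reduction is choosing the backbone distances and the gadget conflict-resolution costs so that the travel overhead is a fixed constant independent of the choices made inside the gadgets; otherwise the threshold $k$ cannot be set cleanly. A secondary subtlety is ruling out ``clever'' irregularising walks that do not respect the intended gadget-by-gadget structure — one must argue, using parity of degree changes and the rigidity of the gadgets (each gadget's conflict really does have only the two intended resolution lengths, and any walk entering a gadget pays at least the smaller of the two), that every irregularising walk of length $\le k$ decomposes as intended. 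I would handle this with a careful ``normal form'' lemma in the spirit of Proposition~\ref{proposition:walk_shapes} and Lemma~\ref{lemma:shape_walk_path}, reducing an arbitrary cheap irregularising walk to one built from half-turns along a controlled skeleton, and then reading off the source solution from that skeleton. Once the gadget rigidity and the constant travel overhead are nailed down, both directions of the ``iff'' — encode a yes-instance as a short walk, and decode a short walk into a yes-instance — should be routine.
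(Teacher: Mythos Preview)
Your \textsf{NP} membership argument is fine, and in fact more careful than the paper's (you explicitly bound the certificate size via Corollary~\ref{corollary:bound-general}, which the paper glosses over).

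The hardness half, however, is not a proof: it is a plan that never commits to a source problem, never specifies a gadget, and defers all the work to an unstated ``normal form'' lemma. The difficulties you list (controlling travel overhead exactly, ruling out unintended walks, reading off a solution from a skeleton) are real if one follows your outline, but you do not resolve any of them. As written there is nothing to check.

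More importantly, the paper's route is far simpler than what you envision, and avoids every obstacle you raise. It reduces from \textsc{Hamiltonian Cycle} on cubic bipartite graphs. Given such an $H$ with bipartition $(U,V)$, one attaches small trees to each vertex so that every $u\in U$ ends up with degree~$5$ and exactly one new neighbour $a_u$ also of degree~$5$, and every $v\in V$ ends up with degree~$6$ and exactly one new neighbour $a_v$ also of degree~$6$; the remaining new neighbours have degrees chosen so that no other conflicts exist. One then sets $k=|V(H)|$. Any irregularising walk must touch, for each vertex $x$ of $H$, either $x$ or $a_x$; but the $a_x$'s are separated by $H$, so the walk must pass through every vertex of $H$. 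With only $n$ edges to spend, the walk restricted to $H$ is a connected spanning subgraph of size at most $n$, and a short case analysis (spanning tree versus unicyclic) together with the carefully placed ``blocking'' neighbours $b_u,b_v$ forces it to be a Hamiltonian cycle. Conversely any Hamiltonian cycle raises every $d(x)$ by exactly~$2$, which kills all conflicts.

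So: no backbone, no two-length gadgets, no normal-form lemma. The whole argument is a tight length budget ($k=n$) plus a handful of pendant vertices with prescribed degrees. Your instinct that Hamiltonicity is the right source is correct; the missing idea is that the length bound itself does all the rigidity work, rather than any per-gadget structural analysis.
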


\begin{proof}
The problem is in \textsf{NP} since given a sequence $W$ of edges of $G$,
one can check in polynomial time whether $W$ is a walk of length at most~$k$ of $G$,
and whether $G+W$ is locally irregular. Thus, we focus on proving the \textsf{NP}-hardness of \textsc{Irregularising Walk}.
This is done by reduction from the \textsc{Hamiltonian Cycle} problem restricted to cubic bipartite graphs,
which is \textsf{NP}-complete~\cite{ANS80}.
Given a cubic bipartite graph $H$, we construct, in polynomial time,
a graph $G$ such that, for some $k \geq 1$, there is an irregularising walk of length at most $k$ of $G$ (that is, $\mlw(G) \leq k$) if and only if $H$ has a Hamiltonian cycle.

Since $H$ is bipartite, it admits a bipartition $(U,V)$ which we can obtain in polynomial time (\textit{e.g.}~through running a BFS algorithm). 
The construction of $G$, now, goes as follows (a similar construction is illustrated in Figure~\ref{figure:redux-paths}).

\begin{itemize}
	\item Start from $G$ being $H$. 
	
	\item For all vertices $u \in U$, we make, in $G$, vertex $u$ adjacent to two new vertices $a_u$ and $b_u$.
	We then add four leaves adjacent to $a_u$, and five leaves adjacent to $b_u$.
	As a result, we have $d(u)=d(a_u)=5$, and $d(b_u)=6$.
	
	\item For all vertices $v \in V$, likewise we make $v$ adjacent to three new vertices $a_v$, $b_v$, and $c_v$.
	We then add five leaves adjacent to $a_v$, and six leaves adjacent to $b_v$.
	As a result, $d(v)=d(a_v)=6$, $d(b_v)=7$, and $d(c_v)=1$.
\end{itemize}

\noindent The construction of $G$ is achieved in polynomial time. 
The value of $k$ we choose for the instance is $n$, where $n=|V(H)|$.

To see that we have the desired equivalence between $H$ and $G$ (with $k=n$), let us explore how an irregularising walk $W$ of length at most $k=n$ of $G$ should behave.
First off, note that $G$ is not locally irregular.
More precisely, all degree conflicts are along edges of the form $ua_u$ (for all $u \in U$) or $va_v$ (for all $v \in V$).
As a result, for all $u \in U$ it must be that $W$ goes through $u$ or $a_u$, and for all $v \in V$ it must be that $W$ goes through $v$ or $a_v$. Note that, because $|V(H)|\ge2$, for every $u\in V(H)$, if $W$ goes through $a_u$ then it also goes through $u$ to get to other vertices $v\in V(H)$ or $a_v$ (thus, distinct from $u$).
From this, we get that $W$ must go through every vertex of $V(H)$.

So the edges of $H$ traversed by $W$ induce a connected spanning subgraph $K$
of $H$ of size at most $n$ (connected because every connected component of $G[V(G) \setminus V(H)]$ has at most one neighbour in $V(H)$). It is well known that, in this case, either $K$ has size $n-1$ and is a spanning subtree
of $H$, or $K$ has size $n$ and contains exactly one cycle. 

\begin{itemize}
    \item In the latter case, $W$ must only traverse
edges of $K$ once. If $K$ had a leaf $u\in U$ or $v\in V$, then that would imply that $d(u)=d(b_u)$ or $d(v)=d(b_v)$ in $G+W$.
So $K$ must reduce to a cycle going through all vertices of $H$, thus to a Hamiltonian cycle of $H$.
\item In the former case,
$K$ has size $n-1$ and forms a spanning subtree of $H$. Since $H$ is cubic and bipartite we have $|V(H)|\ge 6$. In case
$W$ goes only through vertices of $K$, at most one edge of $K$ can be traversed twice by $W$. Otherwise, all edges of
$K$ are traversed exactly once by $W$, and at most one edge of $G$ not in $E(H)$ can be traversed. Thus, in any case, some leaf $u \in U$ or $v \in V$ of $K$
must be visited only once by $W$, and $d(u)=d(b_u)$ or $d(v)=d(b_v)$ in $G+W$. This is a contradiction.
\end{itemize}

\noindent Hence, the only possibility is that $W$ is a Hamiltonian cycle of $H$.

Last, to see that the other direction of the equivalence holds as well, note that if $W$ is a Hamiltonian cycle of $H$,
then, in $G+W$, vertices of $H$ see their degrees increase by exactly $2$ and vertices not in $H$ retain their original degrees:
in $G+W$,
for all vertices $u\in U$ we have $d(u)=7$, $d(a_u)=5$, and $d(b_u)=6$, and for all vertices in $v\in V$
we have $d(v)=8$, $d(a_v)=6$, and $d(b_v)=7$. All other vertices (including the $c_v$'s) are leaves and
thus have degree $1$ in $G+W$. So $W$ is an irregularising walk of $G$.
\end{proof}

A consequence of the proof of Theorem~\ref{theorem:npc-walks} is that \textsc{Irregularising Walk} remains \textsf{NP}-complete
when restricted to bipartite graphs of maximum degree~$7$. Other such results can be deduced, with some efforts, from the fact that the \textsc{Hamiltonian Cycle} problem remains hard when restricted to particular graph classes. For instance, see~\cite{GJT76}, the \textsc{Hamiltonian Cycle} problem remains \np-hard when restricted to cubic planar graphs; mostly because our graph modifications in the proof of Theorem~\ref{theorem:npc-walks} preserve planarity, our proof can be adapted\footnote{Note that, in such a proof, $H$ would not necessarily be bipartite; however, instead of a bipartition, one can \textit{e.g.}~consider a proper $3$-vertex-colouring $\phi$ of $H$ (which exists by Brooks' Theorem, since $\Delta(H) \leq 3$), and, to build $G$, increase the vertex degrees and add constraining neighbours according to colours by $\phi$. Alternatively, one could just, still in polynomial time, modify all vertex degrees in $H$ so that they are pairwise distinct, and add constraining neighbours accordingly.} to prove that \textsc{Irregularising Walk} remains \textsf{NP}-complete
when restricted to planar graphs of bounded maximum degree.
Likewise, the \textsc{Hamiltonian Cycle} problem is \textsf{W[1]}-hard when parameterised by the clique-width~\cite{FGLS10}, and, again, given the graph modifications we perform in the proof of Theorem~\ref{theorem:npc-walks}, through modifications of our reduction it can be proved that \textsc{Irregularising Walk} is \textsf{W[1]}-hard when parameterised by the clique-width.

The main goal in this paper being to somewhat comprehend the parameter $\mlw$, we naturally focused, above, on the complexity of the decision problems \textsc{Irregularising $k$-Walk} and \textsc{Irregularising Walk}. But one may of course wonder about the complexity of determining $\mew(G)$ and $\mvw(G)$ for an input graph $G$. Given that, despite being slightly different, all these parameters are quite close in spirit, it is possible to tweak the proof of Theorem~\ref{theorem:npc-walks} a bit to establish akin results. Namely, in later Theorem~\ref{theorem:npc-paths}, we will prove (through a slight modification of the proof of Theorem~\ref{theorem:npc-walks}) that determining whether a graph admits an irregularising path is \np-complete. It can be observed that finding an irregularising walk with certain restrictions on how edges/vertices are traversed is actually equivalent to finding an irregularising path. More precisely, through slight modifications of our reduction (similar to ones to be used in the proof of later Theorem~\ref{theorem:npc-paths}), one can prove:

\begin{corollary}\label{cor:np-c}
The following two problems are \np-complete:
\begin{itemize}
    \item determining, for a given graph $G$, whether $\mew(G) \leq 1$;
    \item determining, for a given graph $G$, whether $\mvw(G) \leq 2$.
\end{itemize}
\end{corollary}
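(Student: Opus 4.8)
The plan is to derive both statements from the reduction already built for Theorem~\ref{theorem:npc-walks}, after recording how the extra traversal restrictions constrain an irregularising walk. Membership in \np\ is immediate in both cases: a certificate is a sequence of edges, and one checks in polynomial time that it forms a walk, that it respects the required bound (each edge appears at most once, resp.~each vertex is incident to at most two of its edges), and that adding it to $G$ yields a locally irregular multigraph. For hardness, the key remark is the one made in the text: a walk traversing each edge at most once is a trail, and a walk with at most two of its edges incident to each vertex is a disjoint union of paths and cycles which, being a single walk, is a single path or a single cycle (allowing the degenerate ``$2$-cycle'' $uvu$). Thus ``$\mew(G)\le 1$'' asks for an irregularising trail of $G$, and ``$\mvw(G)\le 2$'' asks for an irregularising path or cycle of $G$.

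Both reductions are from \textsc{Hamiltonian Cycle} restricted to cubic bipartite graphs~\cite{ANS80}, and reuse, up to a light adjustment of the leaf counts in the gadgets, the graph $G$ built from $H$ in the proof of Theorem~\ref{theorem:npc-walks} (attaching to each $u\in U$ the vertices $a_u,b_u$ with their private leaves, and to each $v\in V$ the vertices $a_v,b_v,c_v$ with theirs). The forward direction is the easy one and is literally the computation already performed there: if $C$ is a Hamiltonian cycle of $H$, then $C$, viewed as a closed walk of $G$, traverses each edge at most once and is incident to exactly two of its edges at each vertex, and the degree count in the proof of Theorem~\ref{theorem:npc-walks} shows $G+C$ is locally irregular. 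Hence the same $C$ witnesses $\mew(G)\le 1$ and $\mvw(G)\le 2$ simultaneously.

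For the converse, let $W$ be an irregularising trail (resp.~path or cycle) of $G$. As in the proof of Theorem~\ref{theorem:npc-walks}, the only degree conflicts of $G$ lie along the edges $ua_u$ and $va_v$, so for every $x\in V(H)$ the walk $W$ must meet $x$ or $a_x$; since $a_u$ (resp.~$a_v$) has all its neighbours but $u$ (resp.~$v$) equal to leaves, meeting $a_x$ forces $W$ to meet $x$ as well, so $W$ meets every vertex of $H$. One then argues, now using the traversal restriction in place of the length bound $k=n$, that $W$ cannot use an edge incident to a leaf nor, more generally, wander into a gadget: any such excursion leaves a leaf, or one of $a_x,b_x,c_x$, in a degree that a trail (resp.~a path) cannot subsequently repair, creating an unfixable conflict with $x$ or with a neighbouring leaf. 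Therefore $W$ lives entirely inside $H$; being connected and meeting all of $V(H)$, it is a connected spanning sub(multi)graph of $H$ of maximum degree at most $3$ traced by a single trail/path/cycle, and the ``leaf of $W$ creates the conflict $d(x)=d(b_x)$'' argument from Theorem~\ref{theorem:npc-walks}, combined with parity (an Eulerian trail has $0$ or $2$ odd-degree vertices), forces $W$ to reduce to a single cycle through all vertices of $H$, i.e.~to a Hamiltonian cycle of $H$.

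The main obstacle is precisely this last dichotomy: without the length bound of Theorem~\ref{theorem:npc-walks}, one must separately exclude an irregularising \emph{path} that realises a Hamiltonian \emph{path} of $H$ decorated with two gadget tails (and, similarly, a spanning trail of $H$ that is not closed). This is the role of the ``slight modifications'' alluded to in the statement and carried out for Theorem~\ref{theorem:npc-paths}: one retunes the degrees and leaf counts of the $b_x$ (and $c_v$) gadgets so that any trail or path of $G$ that ends \emph{inside} $H$, or uses a tail edge $xb_x$ or $vc_v$ at one of its ends, necessarily produces a conflict between $x$ and one of its gadget neighbours in $G+W$. With these adjustments the only surviving possibility is a closed $W$ contained in and spanning $H$, hence a Hamiltonian cycle, completing the reduction for both parameters; the planarity and bounded-degree refinements carry over verbatim as in the remark following Theorem~\ref{theorem:npc-walks}.
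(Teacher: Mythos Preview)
Your approach matches the paper's: both reduce from \textsc{Hamiltonian Cycle} in cubic bipartite graphs by reusing the gadget construction of Theorem~\ref{theorem:npc-walks} with the modifications from Theorem~\ref{theorem:npc-paths}, observing that $\mew(G)\le 1$ asks for an irregularising trail and $\mvw(G)\le 2$ for an irregularising path or cycle. The forward direction and the argument confining $W$ to $H$ are correct, and for $\mvw$ the path/cycle case is fully covered by the Theorem~\ref{theorem:npc-paths} analysis.

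There is, however, a genuine gap in your treatment of the trail case ($\mew\le 1$). You invoke only the ``leaf of $W$ yields $d(x)=d(b_x)$'' argument together with parity, which handles trail endpoints of degree~$1$ in $W$. But an open trail in the cubic graph $H$ may have endpoints of degree~$3$ in $W$, and for such a vertex $x$ one has $d_{G+W}(x)\in\{8,9\}$, which does \emph{not} conflict with any gadget neighbour of $x$; no retuning of the $b_x$ or $c_v$ gadgets along the lines you describe will exclude this case without also breaking the Hamiltonian-cycle direction. The correct exclusion uses the structure of $H$ itself: if both degree-$3$ endpoints lie in the same part of the bipartition, a degree count over the two parts (using $|U|=|V|$) gives a contradiction; if one endpoint $u\in U$ has degree~$3$, then all three of its $H$-neighbours lie in $V$ and at least two of them have degree~$2$ in $W$, so $d_{G+W}(u)=8=d_{G+W}(v')$ for such a neighbour $v'$, a conflict inside $H$. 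With this case handled, the trail must be closed, hence $2$-regular on $V(H)$, hence a Hamiltonian cycle, and your argument goes through.
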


Note that this contrasts with the fact that \textsc{Irregularising $k$-Walk} is polynomial-time tractable for all $k \geq 1$.

\subsection{Determining whether a graph admits irregularising paths}

We mentioned earlier that many nice graphs do not admit irregularising paths,
and, even worse, that there is no nice characterisation of them (unless $\p=\np$).
We prove this now, through a slight modification of the reduction in the proof of Theorem~\ref{theorem:npc-walks}.
To be precise, the formal problem we consider reads as:

\medskip

\noindent\textsc{Irregularising Path}\\
\noindent\textbf{Input:} A graph $G$.\\
\noindent\textbf{Question:} Is there an irregularising path of $G$?

\medskip

We thus proceed to proving:

\begin{theorem}\label{theorem:npc-paths}
\textsc{Irregularising Path} is \textsf{NP}-complete.
\end{theorem}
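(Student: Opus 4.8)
The plan is to show \textsc{Irregularising Path} is in \np and \np-hard, the hardness part coming from a reduction that closely mirrors the one behind Theorem~\ref{theorem:npc-walks}. Membership is immediate: a path of a graph on $n$ vertices has at most $n-1$ edges, so any irregularising path of $G$ is a certificate of polynomial size, and, given a sequence of edges, one checks in polynomial time that it forms a path $P$ of $G$ and that $G+P$ is locally irregular.

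For hardness I would again reduce from \textsc{Hamiltonian Cycle} restricted to connected cubic bipartite graphs, which is \np-complete~\cite{ANS80}. Given such a graph $H$ with bipartition $(U,V)$, first apply the textbook transformation turning a Hamiltonian-cycle instance into a Hamiltonian-path one: fix $r\in U$, replace $r$ by two vertices $r_1,r_2$ each made adjacent to all three neighbours of $r$ in $H$, and attach a pendant vertex $p_i$ to each $r_i$; the resulting graph $H'$ is connected and bipartite, and $H$ has a Hamiltonian cycle if and only if $H'$ has a Hamiltonian path, which is then necessarily of the form $p_1r_1\dots r_2p_2$. Now build $G$ from $H'$ exactly as in the proof of Theorem~\ref{theorem:npc-walks}: attach to each non-pendant vertex $w$ of $H'$ a constraining gadget consisting of vertices $a_w$ and $b_w$ with suitable numbers of pendant leaves (on $a_w$, on $b_w$, and, if needed, on $w$ itself), tuning the leaf counts so that $d_G(w)=d_G(a_w)$ --- which makes $wa_w$ a degree conflict, forcing every irregularising walk of $G$ to pass through $w$ --- so that $wb_w$ is not a conflict and stays so after any admissible degree increase at $w$, and so that the two sides of $H'$ receive distinct degrees in $G$ (so that two adjacent $H'$-vertices cannot have colliding degrees after the increases); finally, attach to each pendant $p_i$ a tiny conflict gadget (a vertex $a_{p_i}$ carrying one extra leaf, with $d_G(p_i)=d_G(a_{p_i})$), which forces any irregularising walk to visit $p_i$ and, since $p_i$ then has just two neighbours, to use the edge $p_ir_i$. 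The construction is polynomial-time; note that, unlike for \textsc{Irregularising Walk}, there is no integer $k$ to specify.

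For the equivalence: in one direction, a Hamiltonian path of $H'$, read as a path $P$ of $G$, raises by exactly $2$ the degree of every vertex of $H'$ other than $p_1,p_2$, whose degrees increase by $1$, and leaves all other degrees untouched; a case check over the edge types of $G$ then shows $G+P$ is locally irregular. In the other direction, let $P$ be an irregularising path of $G$. Every conflict of $G$ lies on an edge $wa_w$ or $p_ia_{p_i}$; since each such $a$-vertex is a dead end for a walk (entering it from $w$, one can only continue to a leaf) and $|V(H)|\ge 6$, the path $P$ must visit every vertex of $H'$, and moreover $P$ is never routed \emph{through} an $a$-gadget, as that would force $w$ and $a_w$ to have equal degrees in $G+P$. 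Hence both endpoints of $P$ lie in $H'$, in a $b$-gadget, or at some $p_i$; the conflicts at $p_1$ and $p_2$, together with the forced edges $p_1r_1$ and $p_2r_2$, then pin the two endpoints of $P$ to $p_1$ and $p_2$. The restriction of $P$ to $H'$ is therefore a Hamiltonian path of $H'$ with endpoints $r_1,r_2$, and, contracting $r_1,r_2$ back to $r$, this gives a Hamiltonian cycle of $H$.

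The step I expect to be hardest is the degree bookkeeping in the construction: in contrast with Theorem~\ref{theorem:npc-walks}, there is no length bound to stop the path from wandering, so the gadget degrees alone must guarantee that (a) every vertex of $H$ is still forced to be visited, (b) the endpoints of the underlying Hamiltonian path are steered onto $p_1,p_2$ --- which is precisely why $r$ is split and why the gadget degrees keep $b_w$ away from the possible new degrees of $w$ --- and (c) no spurious irregularising path exists, in particular that the conflicts $wa_w$ cannot be cleared by a path dipping into the gadgets. Once these points are secured, the rest of the argument is a routine adaptation of the proof of Theorem~\ref{theorem:npc-walks}.
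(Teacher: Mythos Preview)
Your approach is correct and genuinely different from the paper's. The paper keeps the Hamiltonian \emph{Cycle} reduction and instead deepens the gadgets: around each $a_w$ and $b_w$ it adds extra ``guard'' vertices $\alpha_w^1,\dots,\alpha_w^4$ and $\beta_w^1,\dots,\beta_w^4$ with carefully chosen degrees, so that if an irregularising path $P$ has an endpoint at $a_w$, at $b_w$, or deeper in a gadget, some guard is left in conflict with $a_w$ or $b_w$. This forces $P$ to stay inside $H$ entirely, and then the $b_w$'s (as in Theorem~\ref{theorem:npc-walks}) force every vertex of $H$ to be internal to $P$, so $P$ is a Hamiltonian cycle of $H$. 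Your route instead preprocesses $H$ into a Hamiltonian \emph{Path} instance $H'$ via the standard split-and-pendant trick, and then re-uses the gadgets of Theorem~\ref{theorem:npc-walks} essentially unchanged: the two pendant gadgets at $p_1,p_2$ absorb both endpoints of $P$, which then forces all $a_w,b_w$ (for non-pendant $w$) to be unvisited, and the rest is a Hamiltonian path of $H'$. What the paper's approach buys is directness (same source graph, the path is literally the cycle); what yours buys is gadget simplicity (no new guard layer) at the cost of the preprocessing step and some extra degree-balancing (since the split vertices $r_1,r_2$ and their neighbours in $V$ acquire degree~$4$ in $H'$, unlike the other degree-$3$ vertices).

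Two small points to tighten. First, the sentence ``$P$ is never routed through an $a$-gadget, as that would force $w$ and $a_w$ to have equal degrees'' is only valid once you know neither $a_w$ nor its leaves can be endpoints of $P$; that fact comes from the $p_1,p_2$ gadgets consuming both endpoints, so state that first. Second, the Hamiltonian path of $H'$ you obtain has endpoints $p_1,p_2$ (not $r_1,r_2$, since $p_i\in V(H')$); after dropping $p_1,p_2$ you get the $r_1$--$r_2$ path you then contract.
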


\begin{proof}
The {\np}ness of the problem is obvious.
The \np-hardness part of the result follows from a straight modification of the reduction in the proof of Theorem~\ref{theorem:npc-walks}. The main difference here, is that, since there is here no length restriction, we have to modify the construction a bit to forbid paths traversing all vertices of $H$, but having their two end-vertices in some of the trees we have attached to the $u$'s and $v$'s (this is to forbid paths visiting vertices outside $H$).
Recall that, in the proof of Theorem~\ref{theorem:npc-walks}, this was indeed mainly prevented by the length of the irregularising walks having to be at most $|V(H)|$.

\begin{figure}[t]
\centering

\scalebox{0.8}{
    \begin{tikzpicture}[inner sep=0.7mm,thick]	
    
    \draw[rounded corners,fill=lightgray!50] (-1,1) rectangle (5,-1);
    \node at (2,1.25){$H$};
    
    \node[draw,circle,black,fill=white,line width=1pt](u) at (0,0)[label=above:\scriptsize $u$]{\scriptsize $5$};
    \node[draw,circle,black,fill=white,line width=1pt](au) at (-2,2.5)[label=above:\scriptsize $a_u$]{\scriptsize $5$};
    \node[draw,circle,black,fill=white,line width=1pt](bu) at (-2,-2.5)[label=above:\scriptsize $b_u$]{\scriptsize $6$};
    \node[draw,circle,black,fill=white,line width=1pt](alphau1) at (-5,1)[label=left:\scriptsize $\alpha_u^1$]{\scriptsize $6$};
    \node[draw,circle,black,fill=white,line width=1pt](alphau2) at (-5,2)[label=left:\scriptsize $\alpha_u^2$]{\scriptsize $6$};
    \node[draw,circle,black,fill=white,line width=1pt](alphau3) at (-5,3)[label=left:\scriptsize $\alpha_u^3$]{\scriptsize $7$};
    \node[draw,circle,black,fill=white,line width=1pt](alphau4) at (-5,4)[label=left:\scriptsize $\alpha_u^4$]{\scriptsize $7$};
    \node[draw,circle,black,fill=white,line width=1pt](betau1) at (-5,-1)[label=left:\scriptsize $\beta_u^1$]{\scriptsize $7$};
    \node[draw,circle,black,fill=white,line width=1pt](betau2) at (-5,-2)[label=left:\scriptsize $\beta_u^2$]{\scriptsize $7$};
    \node[draw,circle,black,fill=white,line width=1pt](betau3) at (-5,-3)[label=left:\scriptsize $\beta_u^3$]{\scriptsize $8$};
    \node[draw,circle,black,fill=white,line width=1pt](betau4) at (-5,-4)[label=left:\scriptsize $\beta_u^4$]{\scriptsize $8$};
    
    \node[draw,circle,black,fill=white,line width=1pt](v) at (4,0)[label=above:\scriptsize $v$]{\scriptsize $6$};
    \node[draw,circle,black,fill=white,line width=1pt](av) at (6,2.5)[label=above:\scriptsize $a_v$]{\scriptsize $6$};
    \node[draw,circle,black,fill=white,line width=1pt](bv) at (6,-2.5)[label=above:\scriptsize $b_v$]{\scriptsize $7$};
    \node[draw,circle,black,fill=white,line width=1pt](cv) at (6,0)[label=above:\scriptsize $c_v$]{\scriptsize $6$};
    \node[draw,circle,black,fill=white,line width=1pt](alphav1) at (9,2.5)[label=right:\scriptsize $\alpha_v^1$]{\scriptsize $7$};
    \node[draw,circle,black,fill=white,line width=1pt](alphav2) at (9,3.5)[label=right:\scriptsize $\alpha_v^2$]{\scriptsize $7$};
    \node[draw,circle,black,fill=white,line width=1pt](alphav3) at (9,4.5)[label=right:\scriptsize $\alpha_v^3$]{\scriptsize $8$};
    \node[draw,circle,black,fill=white,line width=1pt](alphav4) at (9,5.5)[label=right:\scriptsize $\alpha_v^4$]{\scriptsize $8$};
    \node[draw,circle,black,fill=white,line width=1pt](betav1) at (9,-2.5)[label=right:\scriptsize $\beta_v^1$]{\scriptsize $8$};
    \node[draw,circle,black,fill=white,line width=1pt](betav2) at (9,-3.5)[label=right:\scriptsize $\beta_v^2$]{\scriptsize $8$};
    \node[draw,circle,black,fill=white,line width=1pt](betav3) at (9,-4.5)[label=right:\scriptsize $\beta_v^3$]{\scriptsize $9$};
    \node[draw,circle,black,fill=white,line width=1pt](betav4) at (9,-5.5)[label=right:\scriptsize $\beta_v^4$]{\scriptsize $9$};
    \node[draw,circle,black,fill=white,line width=1pt](gamma1) at (9,1.5){\scriptsize $7$};
    \node[draw,circle,black,fill=white,line width=1pt](gamma2) at (9,0.5){\scriptsize $7$};
    \node[draw,circle,black,fill=white,line width=1pt](gamma3) at (9,-0.5){\scriptsize $8$};
    \node[draw,circle,black,fill=white,line width=1pt](gamma4) at (9,-1.5){\scriptsize $8$};
    
    \draw [-, line width=2pt,black] (u) -- (au);
    \draw [-, line width=2pt,black] (u) -- (bu);
    \draw [-, line width=2pt,black] (u) -- (1.5,0.75);
    \draw [-, line width=2pt,black] (u) -- (1.5,0);
    \draw [-, line width=2pt,black] (u) -- (1.5,-0.75);
    \draw [-, line width=2pt,black] (au) -- (alphau1);
    \draw [-, line width=2pt,black] (au) -- (alphau2);
    \draw [-, line width=2pt,black] (au) -- (alphau3);
    \draw [-, line width=2pt,black] (au) -- (alphau4);
    \draw [-, line width=2pt,black] (bu) -- (betau1);
    \draw [-, line width=2pt,black] (bu) -- (betau2);
    \draw [-, line width=2pt,black] (bu) -- (betau3);
    \draw [-, line width=2pt,black] (bu) -- (betau4);
    \draw [-, line width=2pt,black] (v) -- (av);
    \draw [-, line width=2pt,black] (v) -- (bv);
    \draw [-, line width=2pt,black] (v) -- (cv);
    \draw [-, line width=2pt,black] (v) -- (2.5,0.75);
    \draw [-, line width=2pt,black] (v) -- (2.5,0);
    \draw [-, line width=2pt,black] (v) -- (2.5,-0.75);
    \draw [-, line width=2pt,black] (av) -- (alphav1);
    \draw [-, line width=2pt,black] (av) -- (alphav2);
    \draw [-, line width=2pt,black] (av) -- (alphav3);
    \draw [-, line width=2pt,black] (av) -- (alphav4);
    \draw [-, line width=2pt,black] (bv) -- (betav1);
    \draw [-, line width=2pt,black] (bv) -- (betav2);
    \draw [-, line width=2pt,black] (bv) -- (betav3);
    \draw [-, line width=2pt,black] (bv) -- (betav4);
    \draw [-, line width=2pt,black] (cv) -- (gamma1);
    \draw [-, line width=2pt,black] (cv) -- (gamma2);
    \draw [-, line width=2pt,black] (cv) -- (gamma3);
    \draw [-, line width=2pt,black] (cv) -- (gamma4);
    \end{tikzpicture}
}
	
\caption{Illustration of the reduction in the proof of Theorem~\ref{theorem:npc-paths}, being a modification of the reduction in the proof of Theorem~\ref{theorem:npc-walks}. For convenience, not all edges are depicted; numbers in vertices indicate their degrees, all missing edges going to leaves.}
\label{figure:redux-paths}
\end{figure}
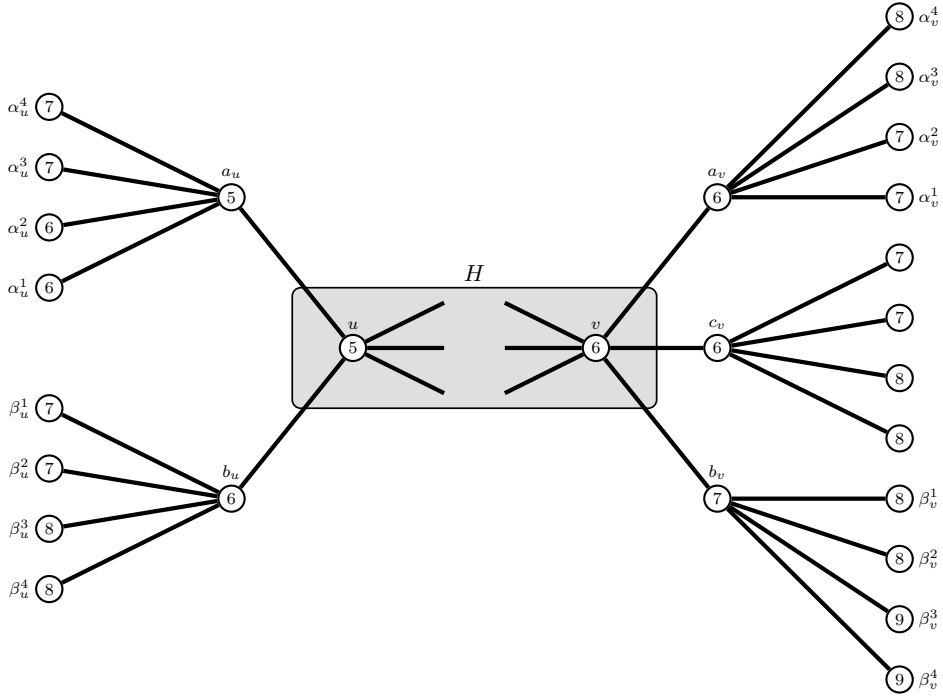

The modified reduction goes as follows (see Figure~\ref{figure:redux-paths}).

\begin{itemize}
    \item Start again from $G$ being $H$.
    
    \item For every vertex $u \in U$, again, make $u$ adjacent to two new vertices, $a_u$ and $b_u$, and make these of degree~$5$ and~$6$, respectively, by making them adjacent to new leaves. Pick now any two leaves $\alpha^1_u$ and $\alpha^2_u$ adjacent to $a_u$, and make them of degree~$6$ by adding new leaves. Likewise, pick any two other leaves $\alpha_u^3$ and $\alpha_u^4$ adjacent to $a_u$, and make them of degree~$7$. We perform similar modifications for neighbours of $b_u$: we make two leaves $\beta^1_u$ and $\beta^2_u$ of degree~$7$, and two others $\beta^3_u$ and $\beta^4_u$ of degree~$8$. 
    
    \item We perform similar modifications around the $v$'s. Namely, as in the proof of Theorem~\ref{theorem:npc-walks}, we make every vertex $v \in V$ adjacent to three new neighbours $a_v$, $b_v$, and $c_v$. Then, we make $a_v$, $b_v$, and $c_v$ of degree~$6$, $7$, and~$6$, respectively. We further modify the construction by making two  leaves $\alpha_v^1$ and $\alpha_v^2$ neighbouring $a_v$ of degree~$7$, and two others $\alpha_v^3$ and $\alpha_v^3$ of degree~$8$. We perform the exact same modifications around $c_v$.
    Eventually, we also turn four leaves $\beta_v^1,\beta_v^2,\beta_v^3,\beta_v^4$ adjacent to $b_v$ into vertices of degree $8$, $8$, $9$, and $9$, respectively.
\end{itemize}

The reduction is still performed in polynomial time.
The desired equivalence, now, follows mainly from the same arguments as in the proof of Theorem~\ref{theorem:npc-walks}. The main difference, here, is that since there is no restriction on the length of a desired irregularising path $P$ of $G$, nothing prevents $P$ from visiting a vertex outside $H$.
However, given the vertices we have added, that $P$ is a path, and since we still have the property that all vertices of $G$ from $H$ must be traversed by $P$, it holds that:

\begin{itemize}
    \item If $P$ starts/finishes at some vertex $a_u$, then it cannot go through any of $\alpha_u^1,\alpha_u^2,\alpha_u^3,\alpha_u^4$; in particular, in $G+P$ we have $d(a_u)=6=d(\alpha_u^1)$, a contradiction. Similar arguments apply regarding the $a_v$'s and $c_v$'s.
    
    \item Likewise, if $P$ starts/finishes at some $b_u$, then, necessarily, in $G+P$ we must have $d(b_u)=7=d(\beta_u^1)$. Again, a similar argument applies regarding the $b_v$'s.
    
    \item We claim that we get a similar contradiction in case $P$ starts/finishes at some vertex that is neither some $u \in U$, $v\in V$, $a_u$, $b_u$, $a_v$, $b_v$, or $c_v$. In that case, note that there must be, say, some $a_u$ such that $P$ goes through $u$, $a_u$, and some neighbour of $a_u$. This means that, in $G+P$, the degree of $a_u$ is altered by~$2$, and is thus equal to~$7$. Given the structure of $G$, it cannot be, however, that $P$ goes through both $\alpha_u^3$ and $\alpha^u_4$; this means that one of these two vertices still has degree~$7$ in $G+P$, and we thus have a conflict with $a_u$, a contradiction. The same goes regarding $b_u$'s, $a_v$'s, $c_v$'s, and $b_v$'s.
\end{itemize}

Thus, as earlier, $P$ must traverse vertices of $H$ only, and, actually, to avoid any conflict with some $b_u$ or $b_v$, it must be that $P$ is a cycle going through all vertices of $H$, thus a Hamiltonian cycle. The desired equivalence now follows immediately.
\end{proof}

\subsection{Determining $\mlw$ in polynomial time for trees}

We here mostly prove the following result:

\begin{theorem}\label{theorem:trees-polynomial}
Determining $\mlw(G)$ can be done in polynomial time when $G$ is a tree.
\end{theorem}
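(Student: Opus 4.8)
The plan is to design a dynamic programming algorithm on the rooted tree. Fix a root $r$ and process the tree bottom-up. The key observation, following the philosophy behind Proposition~\ref{proposition:walk_shapes} and Lemma~\ref{lemma:shape_walk_path}, is that an irregularising walk of a tree, up to equivalence, follows a ``nice'' structural template: within each edge, only a bounded amount of independent information matters, namely the parity of how many times the edge is traversed and the number of half-turns performed on top of it, and at each vertex only the resulting degree in $G+W$ (capped by a polynomial quantity) is relevant. Because a tree has a unique path between any two vertices, the ``core'' subwalk $S$ produced by iteratively gathering edge occurrences as in Proposition~\ref{proposition:walk_shapes} is itself forced to be a walk in the tree, and this gives strong control on its shape.

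**Key steps.** First I would establish a normal-form lemma for irregularising walks of trees, in the spirit of Lemma~\ref{lemma:shape_walk_path} but in the branching setting: every irregularising walk is equivalent to one that follows a closed-or-open walk $S$ whose edge multiset is contained in the edge set of a subtree $T'$ of $T$, with each edge of $T'$ appearing either once or twice in $S$ (once on a ``spine'' of odd-parity edges between the two endpoints of the walk, twice elsewhere), plus half-turns. In particular the set of vertices incident to an odd-parity edge induces a path (the analogue of $u_i \ldots u_j$), and everything else is traversed an even number of times; hence in $T+W$ all degrees have a prescribed parity except at the two endpoints of that path. Second, I would set up the DP: root $T$ at a leaf, and for each vertex $v$ process the subtree $T_v$, maintaining a table indexed by (i) the ``mode'' of the edge from $v$ to its parent — absent, traversed an even positive number of times, traversed an odd number of times — (ii) whether one or both endpoints of the odd-parity path lie inside $T_v$, and (iii) the resulting degree $d_{T+W}(v)$, which one shows can be capped at $O(\Delta + \text{something polynomial})$ without loss of optimality, so the table has polynomial size. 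The DP value stored is the minimum total length contributed by the part of $W$ using edges inside $T_v$, subject to the constraint that every already-closed-off vertex strictly below $v$ is distinguished from its neighbours. Merging children of $v$ is a bounded combinatorial optimisation over how the odd-parity path and its endpoints are routed through $v$ and how many half-turns to add on each child-edge to fix $v$'s degree against the children already finalised; since each child-edge contributes $k(e)\cdot(\text{parity}) + 2p(e)$ to $v$'s degree, choosing the $p(e)$'s greedily/locally to hit the required parity-and-value is a small LP-like subproblem solvable in polynomial time. Finally, at the root, combine the children's tables and output the minimum over admissible configurations, which equals $\mlw(T)$; if no configuration is admissible (only possible when $T = K_2$), output $+\infty$.

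**Bounding the state space.** The crucial quantitative point, which I would prove as a separate lemma, is that in an optimal irregularising walk of a tree no vertex needs its degree in $T+W$ to exceed roughly $\Delta(T) + c$ for a small constant $c$ (intuitively, to separate a vertex from its at-most-$\Delta$ neighbours one never needs more than $O(\Delta)$ extra incident edges, and half-turns come in increments of $2$ so parity never forces more than a bounded overshoot). This is what makes the number of relevant values of $d_{T+W}(v)$ polynomial, hence the whole DP polynomial; it mirrors the ``at most $k-1$ half-turns suffice'' counting in the proof of Theorem~\ref{theorem:bound-chromatic} and the greedy-optimality discussion after Proposition~\ref{proposition:walk_shapes}.

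**Main obstacle.** I expect the hard part to be the normal-form/structure lemma and, intertwined with it, the proof that the greedy-per-edge choices do not interact badly across the tree — the discussion after Proposition~\ref{proposition:walk_shapes} explicitly warns that adding extra half-turns somewhere can save many later, so one must argue that in a \emph{tree} (unlike in bounded-chromatic graphs) the greedy/local choices captured by the DP table really are compatible, i.e.\ that the DP is not just an upper-bound heuristic but computes the exact optimum. Concretely, one must show that the only coupling between different parts of the tree is through (a) which subtree contains the odd-parity spine and its endpoints and (b) the already-fixed degree at each vertex, both of which the table records; everything else decomposes additively. Getting this decomposition exactly right — including the endpoint bookkeeping inherited from Lemma~\ref{lemma:shape_walk_path} and the ``last visit'' subtlety from the proof of Theorem~\ref{theorem:bound-general} — is where the real work lies; the DP mechanics and the polynomial-size bound are then routine.
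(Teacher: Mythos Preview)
Your approach is essentially the paper's: a bottom-up dynamic program on the rooted tree, with states at each vertex recording the ``mode'' of interaction with the parent edge and the resulting degree of the vertex in $T+W$, combined via the inductive decomposition of rooted trees. The paper organises the states as five parameters $\psi^0,\psi^{\rm io},\psi^{\rm i},\psi^{\rm r},\psi^{\not{\rm r}}$ indexed by an outside weight $w$ and a target degree $d$, but the information content matches what you describe.

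One concrete point needs fixing. You assert as the ``crucial quantitative point'' that in an optimal irregularising walk of a tree no vertex needs degree exceeding $\Delta(T)+c$ for a small constant $c$. This is false: take a subdivided star with centre $r$ of degree $k=\Delta$ and long branches. Any irregularising walk must visit every branch (each branch contains a conflict), so it must enter and leave $r$ along at least $k-1$ of the $k$ incident edges, forcing $d_{T+W}(r)\ge k+2(k-2)+2=3\Delta-2$. So an additive-constant bound is impossible. The paper does not attempt an $O(\Delta)$ bound either; it simply invokes Proposition~\ref{proposition:walk_shapes} to observe that any edge $xy$ traversed more than $2(d(x)+d(y)-1)$ times can be shortened by removing half-turns, whence every vertex has degree $O(\Delta^2)$ in $T+W$. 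This $O(\Delta^2)$ bound on $w$ and $d$ is what makes the table polynomial, and it comes essentially for free. Your DP goes through unchanged once you replace the incorrect $\Delta+c$ claim by this $O(\Delta^2)$ cap; the ``separate lemma'' you planned is both unnecessary and, as stated, unprovable.

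As for your stated main obstacle (that greedy-per-edge choices might interact badly), note that a DP with a rich enough state space is exact by construction, not a heuristic: once the state records the target degree $d$ at the root of each subproblem and the mode of the boundary edge, all coupling between subtrees is captured, so no separate compatibility argument is needed.
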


\noindent Proving Theorem~\ref{theorem:trees-polynomial} in details is a bit tedious, which is why we only provide general arguments in what follows.

Let $T$ be a rooted tree. For a given (non-negative integer) weight $w\ge 0$, we generalise the definition of
irregularising walks $W$ to that of \textit{$w$-irregularising walks} by modifying the irregularity constraint between
the root $r$ and its neighbours. Namely, a $w$-irregularising walk $W$ satisfies that, for every neighbour $u$ of $r$, we have $d_{T+W}(r)+w\neq d_{T+W}(u)$.
Hence, $w$ somewhat acts as some ``outside'' edges incident to the root, assuming that $T$ is a subtree rooted at $r$ of a bigger tree. A usual irregularising walk of $T$ is then simply a $0$-irregularising walk.
We then extend the definition of $\mlw(T)$ to $\mlw_{w,d}(T)$ for $w,d\ge 0$ non-negative integers, as the
minimum length of a $w$-irregularising walk $W$ of $T$ such that $d_{T+W}(r)+w=d$ (obviously, we will only treat cases where $d_{T}(r)+w \leq d$).
Hence, $d$ can be thought as the objective full degree of $r$ in a hypothetical bigger tree.

On top of that, we also introduce the following restricted parameters. All are defined
to $+\infty$ when no corresponding irregularising walks exist.

\begin{itemize}
    \item $\psi^0_{w}(T)$ represents empty irregularising walks of $T$. This makes sense
    only for the parameter $d= d_T(r)+w$, so we drop it in the notation. More precisely, this parameter
    is $0$ when $T$ is locally irregular under the assumption that the degree of $r$ is replaced by $d_T(r)+w$,
    and $+\infty$ otherwise.
    
    \item $\psi^{\rm io}_{w,d}(T)$ (where ``${\rm io}$'' stands for ``In and Out of the root'') is the length of a shortest $w$-irregularising walk of $T$ starting and ending at $r$, such that the degree of $r$ in $T+W$ is $d-w$.
    
    \item $\psi^{\rm i}_{w,d}(T)$ is the length of a shortest $w$-irregularising walk of $T$ starting at $r$ and ending at any vertex (including $r$), such that the degree of $r$ in $T+W$ is $d-w$.
    
    \item $\psi^{\rm r}_{w,d}(T)$ is the length of a shortest $w$-irregularising walk of $T$ traversing $r$ but not necessarily starting nor ending at $r$, such that the degree of $r$ in $T+W$ is $d-w$.
    
    \item $\psi^{\not{\rm r}}_{w,d}(T)$ is the length of a shortest $w$-irregularising walk of $T$ not necessarily containing $r$, such that the degree of $r$ in $T+W$ is $d-w$.
\end{itemize}

Note that, by Proposition~\ref{proposition:walk_shapes}, in any graph $G$ of maximum degree~$\Delta$ with an edge $xy$, any irregularising walk traversing $xy$ more than $2(d(x)+d(y)-1)$ times can be simplified by removing a certain number of half-turns on $xy$ to get a shorter irregularising walk.
In particular, this implies that any vertex has degree $\mathcal{O}(\Delta^2)$ in $G+W$, for any shortest irregularising walk $W$ of $G$.
From this, for the parameters above, we only have to consider when $w$ and $d$ are $\mathcal{O}(\Delta^2)$.

It is important too to raise that rooted trees can be defined inductively as follows:
\begin{itemize}
    \item The isolated vertex is a rooted tree.
    
    \item Given two rooted trees $T'$ and $T''$ with root $r$ and $r''$, respectively, a new rooted tree $T$ with root $r$ is obtained when starting from the disjoint union of $T'$ and $T''$, and adding the edge $rr''$. We denote this operation as $T'\uparrow T''$.
    Put differently, this operation adds $T''$ as a subtree of $T'$.
\end{itemize}

\noindent Note in particular that if $T$ is a rooted tree with root $r$, then the rooted tree with root $u$ of degree~$1$ obtained from $T$ by adding a new pendant edge $ur$ is nothing but $T_u \uparrow T$, where $T_u$ is the rooted tree containing $u$ as its only vertex (and, thus, root). From this, it is easy to see that rooted trees can be constructed iteratively through the operation $\uparrow$, by essentially adding subtrees one at a time, somewhat adding one edge each step.

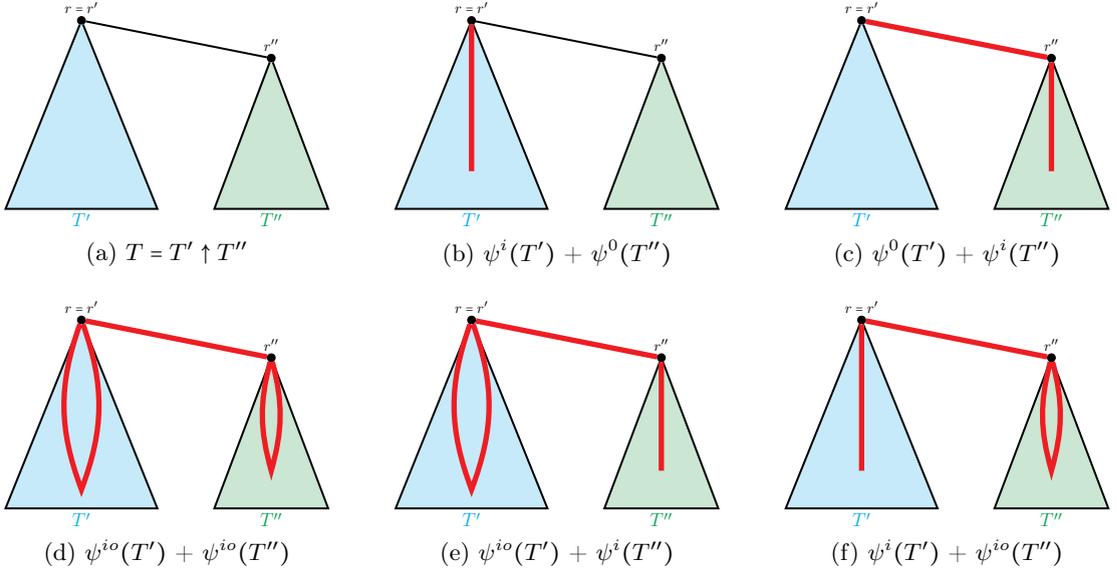
\begin{figure}[!t]
 	\centering
 	
 	\subfloat[$T=T'\uparrow T''$]{
    \scalebox{0.5}{
	\begin{tikzpicture}[inner sep=0.7mm]
	\node at (0,-5.3){\Large\textcolor{Cyan}{$T'$}};
	\node at (5,-5.3){\Large\textcolor{Green}{$T''$}};
	
	\node[draw,circle,line width=1pt,fill=black](r) at (0,0)[label=above:{$r=r'$}]{};
	\draw[line width=1.5pt,draw,black,fill=Cyan!20] (r) -- (-2,-5) -- (2,-5) -- (r);
	
	\node[draw,circle,line width=1pt,fill=black](r2) at (5,-1)[label=above:{$r''$}]{};
	\draw[line width=1.5pt,draw,black,fill=Green!20] (r2) -- (3.5,-5) -- (6.5,-5) -- (r2);
	
	\draw[line width=1.5pt,draw,black] (r) -- (r2);
	\end{tikzpicture}
    }
    }
    \hspace{10pt}
    \subfloat[$\psi^i(T')$ + $\psi^0(T'')$]{
    \scalebox{0.5}{
	\begin{tikzpicture}[inner sep=0.7mm]
	\node at (0,-5.3){\Large\textcolor{Cyan}{$T'$}};
	\node at (5,-5.3){\Large\textcolor{Green}{$T''$}};
	
	\node[draw,circle,line width=1pt,fill=black](r) at (0,0)[label=above:{$r=r'$}]{};
	\draw[line width=1.5pt,draw,black,fill=Cyan!20] (r) -- (-2,-5) -- (2,-5) -- (r);
	
	\node[draw,circle,line width=1pt,fill=black](r2) at (5,-1)[label=above:{$r''$}]{};
	\draw[line width=1.5pt,draw,black,fill=Green!20] (r2) -- (3.5,-5) -- (6.5,-5) -- (r2);
	
	\draw[line width=1.5pt,draw,black] (r) -- (r2);
	
	\draw [-,line width=4pt,color=Red] (r) -- (0,-4);
	\end{tikzpicture}
    }
    }
    \hspace{10pt}
    \subfloat[$\psi^{0}(T')$ + $\psi^{i}(T'')$]{
    \scalebox{0.5}{
	\begin{tikzpicture}[inner sep=0.7mm]
	\node at (0,-5.3){\Large\textcolor{Cyan}{$T'$}};
	\node at (5,-5.3){\Large\textcolor{Green}{$T''$}};
	
	\node[draw,circle,line width=1pt,fill=black](r) at (0,0)[label=above:{$r=r'$}]{};
	\draw[line width=1.5pt,draw,black,fill=Cyan!20] (r) -- (-2,-5) -- (2,-5) -- (r);
	
	\node[draw,circle,line width=1pt,fill=black](r2) at (5,-1)[label=above:{$r''$}]{};
	\draw[line width=1.5pt,draw,black,fill=Green!20] (r2) -- (3.5,-5) -- (6.5,-5) -- (r2);
	
	\draw[line width=1.5pt,draw,black] (r) -- (r2);
	
	\draw[line width=4pt,draw,Red] (r) -- (r2);
	\draw [-,line width=4pt,color=Red] (r2) -- (5,-4);
	\end{tikzpicture}
    }
    }

    \subfloat[$\psi^{io}(T')$ + $\psi^{io}(T'')$]{
    \scalebox{0.5}{
	\begin{tikzpicture}[inner sep=0.7mm]
	\node at (0,-5.3){\Large\textcolor{Cyan}{$T'$}};
	\node at (5,-5.3){\Large\textcolor{Green}{$T''$}};
	
	\node[draw,circle,line width=1pt,fill=black](r) at (0,0)[label=above:{$r=r'$}]{};
	\draw[line width=1.5pt,draw,black,fill=Cyan!20] (r) -- (-2,-5) -- (2,-5) -- (r);
	
	\node[draw,circle,line width=1pt,fill=black](r2) at (5,-1)[label=above:{$r''$}]{};
	\draw[line width=1.5pt,draw,black,fill=Green!20] (r2) -- (3.5,-5) -- (6.5,-5) -- (r2);
	
	\draw[line width=1.5pt,draw,black] (r) -- (r2);
	
	\draw[line width=4pt,draw,Red] (r) -- (r2);
	\draw [-,line width=4pt,color=Red] (r) to[bend left=20] (0,-4.5) to[bend left=20] (r);
	\draw [-,line width=4pt,color=Red] (r2) to[bend left=15] (5,-4) to[bend left=15] (r2);
	\end{tikzpicture}
    }
    }
    \hspace{10pt}
    \subfloat[$\psi^{io}(T')$ + $\psi^{i}(T'')$]{
    \scalebox{0.5}{
	\begin{tikzpicture}[inner sep=0.7mm]
	\node at (0,-5.3){\Large\textcolor{Cyan}{$T'$}};
	\node at (5,-5.3){\Large\textcolor{Green}{$T''$}};
	
	\node[draw,circle,line width=1pt,fill=black](r) at (0,0)[label=above:{$r=r'$}]{};
	\draw[line width=1.5pt,draw,black,fill=Cyan!20] (r) -- (-2,-5) -- (2,-5) -- (r);
	
	\node[draw,circle,line width=1pt,fill=black](r2) at (5,-1)[label=above:{$r''$}]{};
	\draw[line width=1.5pt,draw,black,fill=Green!20] (r2) -- (3.5,-5) -- (6.5,-5) -- (r2);
	
	\draw[line width=1.5pt,draw,black] (r) -- (r2);
	
	\draw[line width=4pt,draw,Red] (r) -- (r2);
	\draw [-,line width=4pt,color=Red] (r) to[bend left=20] (0,-4.5) to[bend left=20] (r);
	\draw [-,line width=4pt,color=Red] (r2) -- (5,-4);
	\end{tikzpicture}
    }
    }
    \hspace{10pt}
    \subfloat[$\psi^{i}(T')$ + $\psi^{io}(T'')$]{
    \scalebox{0.5}{
	\begin{tikzpicture}[inner sep=0.7mm]
	\node at (0,-5.3){\Large\textcolor{Cyan}{$T'$}};
	\node at (5,-5.3){\Large\textcolor{Green}{$T''$}};
	
	\node[draw,circle,line width=1pt,fill=black](r) at (0,0)[label=above:{$r=r'$}]{};
	\draw[line width=1.5pt,draw,black,fill=Cyan!20] (r) -- (-2,-5) -- (2,-5) -- (r);
	
	\node[draw,circle,line width=1pt,fill=black](r2) at (5,-1)[label=above:{$r''$}]{};
	\draw[line width=1.5pt,draw,black,fill=Green!20] (r2) -- (3.5,-5) -- (6.5,-5) -- (r2);
	
	\draw[line width=1.5pt,draw,black] (r) -- (r2);
	
	\draw[line width=4pt,draw,Red] (r) -- (r2);
	\draw [-,line width=4pt,color=Red] (r) -- (0,-4);
	\draw [-,line width=4pt,color=Red] (r2) to[bend left=15] (5,-4) to[bend left=15] (r2);
	\end{tikzpicture}
    }
    }
    
\caption{Expressing optimal irregularising walks in trees.
(a) Any rooted tree $T$ with root $r$ can be expressed as $T=T'\uparrow T''$, for some rooted trees $T'$ and $T''$ with roots $r'=r$ and $r''$, respectively.
Through (b) to (f), we illustrate how some optimal irregularising non-closed walks of $T$ with end-vertex $r$ (corresponding to parameter $\psi^i_{w,d}(T)$)
can be expressed as particular types of optimal irregularising walks of $T'$ and $T''$.
Walks are in red; for readability, we represent their supporting edges only, and omit possible half-turns.
We insist on the fact that not all possible cases (for $\psi^i_{w,d}(T)$) are displayed here. Also, subcaptions indicate concatenations of types of paths; they are not intended to display exact expressions.
\label{figure:trees}}
\end{figure}

Theorem~\ref{theorem:trees-polynomial} can now be proved mainly using the fact that, for any rooted tree $T=T' \uparrow T''$, any parameter $\mlw_{w,d}(T)$ can be deduced from parameters $\psi^0_{w'}$, $\psi^{\rm io}_{w',d'}$, $\psi^{\rm i}_{w',d'}$, $\psi^{\rm r}_{w',d'}$, $\psi^{\not{\rm r}}_{w',d'}$ of $T'$ and $T''$ (see Figure~\ref{figure:trees} for an illustration), which can be computed by bottom-up dynamic programming, since a $w$-irregularising walk of $T$ can be seen as a concatenation of (possibly empty) $w'$-irregularising walks of $T'$ and $T''$. 
A crucial point for the polynomiality is that all of $w,d,w',d'$ are $\mathcal{O}(\Delta^2)$, where $\Delta=\Delta(T)$. So, there are a polynomial number of parameters to be computed for each vertex, each of which can be computed in polynomial time; over all vertices, such an algorithm thus runs in polynomial time.


\section{Conclusion}\label{section:ccl}

In this work, we have introduced and studied a new problem that is, essentially, a restriction of proper labellings, where locally irregular multigraphs are built upon graphs by adding the edges of a walk. As described in Section~\ref{section:def}, this problem is rather flexible, in that several of the parameters involved can be set in different ways. For a first work on the topic, we chose to focus on shortest irregularising walks, and on studying the parameter $\mlw$. In Section~\ref{section:early-properties}, we exhibited several early properties of this parameter, before providing general upper bounds on $\mlw$ in Section~\ref{section:bounds}, through Corollaries~\ref{corollary:bound-general} and~\ref{corollary:bound-chromatic}. We then focused more thoroughly on several classes of graphs in Section~\ref{section:particular-classes}, for which we determined the exact value of $\mlw$. Last, in Section~\ref{section:algo}, we proved mainly, through Theorem~\ref{theorem:npc-walks}, that determining $\mlw(G)$ for a given graph $G$ is \np-complete. On the other hand, we proved that this can be achieved in polynomial time when $G$ is a tree.

Although our results lead to some understanding of irregularising walks, several aspects remain open and could be subject to further work on the topic. In particular, we believe the following directions could be appealing, and worth studying further.

\begin{itemize}
    \item As illustrated by Corollary~\ref{corollary:bound-general}, we believe expressing bounds on the parameter $\mlw$ in terms of the graph's size is appropriate. As proved, for every nice graph $G$ of size $m$ we have $\mlw(G) \leq 4m$, and one can naturally wonder how tight this upper bound is, in general. Through \textit{e.g.}~Theorem~\ref{theorem:paths} and Corollary~\ref{corollary:cycles}, we are aware of graphs $G$ of size $m$ for which $\mlw(G)$ is about $2m$. 
    
    It seems to us that one can also come up with graphs $G$ of size $m$ for which $\mlw(G)$ is about $3m$. For instance, consider any subdivided star $G$ where the center has degree~$k$ and all branches have length~$n$. Assuming $k$ and $n$ are large, note that any irregularising walk $W$ of $G$ essentially has to start from a branch and reach the center, then go through most vertices of $k-2$ other branches and come back to the root, and then finish in the last branch. Given how irregularising walks of paths behave (recall Theorem~\ref{theorem:paths}), the fact that the problem falls down to irregularising closed walks of paths going through one end-vertex, yields that $\|W\|$ should tend to $3m$ as $n$ and $k$ grow large.
    
    From a more general perspective, we wonder whether there are worse graphs (when comparing $\mlw$ and the size parameter). Our guess is that, perhaps, nice graphs of size $m$ admit irregularising walks of size about $3m$.

    \begin{conjecture}\label{conjecture:3m}
    For every nice graph $G$ of size $m$, we have $\mlw(G) \leq 3m$.
    \end{conjecture}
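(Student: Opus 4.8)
The plan is to push the guiding-walk construction of Theorem~\ref{theorem:bound-general} further and squeeze the factor~$4$ of Corollary~\ref{corollary:bound-general} down to~$3$. Recall that the $4m$ bound splits as $2(n-1)\le 2m$ edges for a DFS closed walk of a spanning tree, plus at most one half-turn (two edges) per edge of $G$ for the greedy correction, i.e.~another $2m$. To reach $3m$ we must save about $m$ edges, and there are two natural reservoirs. First, as noted right after Corollary~\ref{corollary:bound-general}, the guiding walk need neither be closed nor meet every vertex: it suffices that it hits one endpoint of every \emph{conflict edge} $uv$ (meaning $d_G(u)=d_G(v)$), so entire already-locally-irregular regions can be skipped and the correction term drops to $2(|E(G[V(W)])|+|\partial(W)|)$. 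Second, a guiding walk that traverses a long ``spine'' of $G$ only once rather than twice saves exactly the length of that spine.

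Concretely, I would fix a spanning tree $T$ of $G$, pick two leaf-to-root paths $Q_1,Q_2$ of $T$ sharing only the root and with total length $\ell$ as large as possible, and take as guiding walk a DFS-type traversal of $T$ starting at the free end of $Q_1$, ending at the free end of $Q_2$, and backtracking over every other edge of $T$ (so $T$-edges off $Q_1\cup Q_2$ are traversed twice, those on $Q_1\cup Q_2$ once, non-tree edges zero times); this walk has length $2(n-1)-\ell$. Running the greedy half-turn procedure of Theorem~\ref{theorem:bound-general} on it, but computing all degrees in the multigraph $G+W$ so that conflicts along non-tree edges are killed as well, and handling the two open ends by the device used there for $u_0$ and $v_2$, the same per-edge argument caps the half-turns at $2m$ extra edges. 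Hence $\mlw(G)\le 2(n-1)-\ell+2m\le 3m$ whenever $\ell\ge 2(n-1)-m$, which is automatic when $G$ is traceable and, for sparse $G$, fails only for genuinely ``bushy'' trees where no long spine exists.

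For the bushy regime the extra ingredient I would aim for is a global accounting of half-turns, rather than the crude one-per-edge bound. Two handles seem available: (i) half-turns taken along the once-traversed spine never recur and, because that part is seen only once, disturb far fewer downstream vertices; and (ii) inside a region where the conflict subgraph has large maximum degree one can, instead of greedily correcting, plug in a proper labelling of small label sum and reroute it along the short closed subwalks abundant there --- this is where Theorem~\ref{theorem:bounds-parameters-greedy-proper} and the roughly-$2m$ label-sum bounds conjectured in~\cite{BFN22} would enter --- so the dense part is handled by labelling and the sparse part by a long-spine DFS, with cycles and ears absorbed through Corollary~\ref{corollary:cycles} and Theorem~\ref{theorem:paths}. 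For trees alone, one could alternatively try to extract the bound directly from the dynamic-programming description behind Theorem~\ref{theorem:trees-polynomial}, tracking that each $\uparrow$-step increases the optimal length by at most three per added edge.

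I expect the genuine obstacle to be the one flagged after Theorem~\ref{theorem:bound-chromatic}: the greedy half-turn count is provably non-optimal and by itself gives only $4m$, so closing the final factor demands a non-local argument --- plausibly a discharging scheme along the guiding walk, or a charging of half-turns against the Steiner structure of the conflict subgraph, establishing that one cannot simultaneously be forced into a long guiding walk and into many half-turns. A secondary difficulty is that the bound, if correct, is essentially tight: the subdivided-star family discussed below already has $\mlw(G)$ about $3m$, so the argument must reproduce that behaviour with essentially no slack, and the example of Figure~\ref{figure:hard-to-connect} warns that the ``reroute a good labelling'' step can only be applied locally, in regions where rerouting creates no new far-apart conflicts.
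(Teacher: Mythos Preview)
The statement you are attempting is a \emph{conjecture} in the paper, not a theorem: the authors explicitly leave it open and do not provide a proof. So there is no paper proof to compare against, and what you have written is (as you yourself signal with phrases like ``I would aim for'' and ``I expect the genuine obstacle to be'') a research outline rather than a proof.

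That said, your outline has a concrete gap exactly where the conjecture is tight. Your main saving device is a long spine $Q_1\cup Q_2$ in a spanning tree, giving $\mlw(G)\le 2(n-1)-\ell+2m$, which reaches $3m$ only when $\ell\ge 2(n-1)-m$. For a tree this means $\ell\ge n-1$, i.e.\ the tree must be a path; for the subdivided star with $k\ge 3$ branches of length $n$ (the paper's own extremal family, with $\mlw\approx 3m$), the longest spine has length $2n$ while $2(n-1)-m\approx (k-2)n$, so your inequality fails by an arbitrarily large amount. In other words, the very instances witnessing that $3m$ is best possible are precisely the ones your primary argument does not cover, and the fallback ideas you list for the ``bushy regime'' (global half-turn accounting, plugging in a small-sum labelling, discharging along the guiding walk) are all stated at the level of intuition rather than as arguments one could check. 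In particular, invoking the roughly-$2m$ label-sum conjecture from~\cite{BFN22} would make your argument conditional on another open problem, and Figure~\ref{figure:hard-to-connect} already shows that rerouting a good labelling into a walk can be costly.

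So: this is a reasonable sketch of where one might look, and it correctly identifies the two reservoirs of slack in Corollary~\ref{corollary:bound-general}, but it is not a proof, and the hard case (many short branches, no long spine) is not handled. The paper itself does not claim to know how to close this gap either.
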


    Perhaps the following arguments provide some additional support to Conjecture~\ref{conjecture:3m}. If we consider any graph $G$ of size $m$ and an irregularising walk $W$ of $G$ obtained through Theorem~\ref{theorem:bound-general} and meeting the properties of Proposition~\ref{proposition:walk_shapes}, then we can essentially count at most one conflict per edge, and, half-turns apart, each traversed edge is traversed once or twice; in the worst-case scenario, this yields that the length of $W$ lies in-between $3m$ and $4m$. So, we could perhaps get closer to the conjectured $3m$ bound by investigating the size of $E_e$, in the proof of Proposition~\ref{proposition:walk_shapes}.
    
    \item In Section~\ref{section:particular-classes}, we deduced the exact value of $\mlw(G)$ when $G$ belongs to common graph classes. We believe it would be interesting to investigate other usual graph classes as well. Among interesting candidate classes, let us mention cographs (which makes sense, since we considered complete graphs and complete bipartite graphs), planar graphs (for which, as mentioned after the proof of Theorem~\ref{theorem:npc-walks}, computing $\mlw$ remains \np-complete; while we deduce some upper bound on $\mlw$ through Corollary~\ref{corollary:bound-chromatic} since planar graphs have chromatic number at most~$4$ by the Four-Colour Theorem), graphs of low maximum degree (to which, again, Theorem~\ref{theorem:npc-walks} applies, and similarly for Corollary~\ref{corollary:bound-chromatic} through Brooks' Theorem), and particular classes of trees such as complete balanced $k$-ary trees (for which we know $\mlw$ can be determined in polynomial time, recall Theorem~\ref{theorem:trees-polynomial}). 
    Maybe it could be interesting too to investigate graphs with bounded treedepth: these graphs admit spanning trees with bounded depth, thus with many leaves, which, as described right after Corollary~\ref{corollary:bound-general}, is a very favourable case for the approach we developed in the proof of Theorem~\ref{theorem:bound-general}.
    
    \item Regarding algorithmic aspects and our results from Section~\ref{section:algo}, many interesting questions remain open. For instance, regarding Theorem~\ref{theorem:npc-walks}, recall that our \np-completeness result holds for graphs of maximum degree $7$, and we wonder whether the same also holds for graphs with smaller maximum degree. 
    Given that \textsc{Irregularising Walk} is \np-complete, one can also wonder about parameterised complexity. For instance, through Courcelle's Theorem, one can establish that \textsc{Irregularising Walk} is \textsf{FPT} when parameterised by the maximum degree and the treewidth. Note that a similar result does not hold when parameterising by the maximum degree only, since \textsc{Irregularising Walk} is \np-complete for graphs with bounded maximum degree (Theorem~\ref{theorem:npc-walks}). On the other hand, we believe that our approach for proving Theorem~\ref{theorem:trees-polynomial} could generalise to graphs of bounded treewidth, to show that \textsc{Irregularising Walk} is \textsf{FPT} when parameterised by the treewidth only. Such a generalisation could build upon the following elements. Given a graph $G$ with a tree-decomposition where bags have size~$k$, every subgraph $H$ given by the decomposition would now have a set of $k$ roots $r_1,\dots,r_k$, and we could express how an optimal irregularising walk $W$ of $G$ behaves within $H$, in terms of particular optimal irregularising walks through $r_1,\dots,r_k$. In particular, one has to take into account that $W$ can ``enter'' and ``leave'' $H$ multiple times through some of $r_1,\dots,r_k$. However, each of these bits of subwalks of $W$ is still of bounded length. Thus, there is a way to express all ways for $W$ to be shaped w.r.t.~$H$, that involves a polynomial (in $k$) number of length parameters. 
    
    The fact that \textsc{Irregularising Walk} is \textsf{FPT} when parameterised by the treewidth would yield a significant contrast to the fact that \textsc{Irregularising Walk} is \textsf{W[1]}-hard when parameterised by the clique-width (recall our arguments right after the proof of Theorem~\ref{theorem:npc-walks}).
    This is another example of the clique-width's price of generality~\cite{FGLS09}.

    Of course, we also wonder about other interesting combinations of parameters. For instance, \textsc{Irregularising Walk} is \textsf{FPT} when parameterised by the maximum degree $\Delta$ and the length $k$ of the shortest irregularising walks. To see this is true, just note that any graph $G$ either is locally irregular (and $\mlw(G)=0$), or there is an edge $uv$ of $G$ with $d(u)=d(v)$, in which case, since any irregularising walk of $G$ has to go through $u$ or $v$, we fall down to finding irregularising walks (of length at most $k$) in a subgraph of order $\mathcal{O}\left(\Delta^k\right)$ (in which it then suffices to enumerate all possible walks of length at most~$k$). Recall that we also mentioned that \textsc{Irregularising Walk} is \textsf{XP} when parameterised by~$k$, and we leave open the question of whether it is \textsf{FPT} for this parameter.
    
    \item Recall that we also introduced the parameters $\mew$ and $\mvw$. By results provided in Section~\ref{section:early-properties}, there are relationships between these parameters and $\mlw$; thus, most results on $\mlw$ provided in this work also yield something for $\mew$ and $\mvw$. In our opinion these two other parameters are also natural to consider, so they could deserve further attention towards a better understanding. Note that some of our results (such as Corollary~\ref{cor:np-c}) already make a step in that direction.
\end{itemize}

\end{document}